\documentclass[SIADSpaper,onefignum,onetabnum]{siamonline250211}


\usepackage{lipsum}
\usepackage{amsfonts}
\usepackage{graphicx}
\usepackage{epstopdf}
\usepackage{algorithmic}
\usepackage{commath}
\usepackage{amsmath}
\usepackage{mathtools}
\usepackage[section]{placeins}
\usepackage{cleveref}
\usepackage[normalem]{ulem}
\usepackage{tikz}
\usepackage{nicematrix}
\usepackage{subcaption}
\ifpdf
  \DeclareGraphicsExtensions{.eps,.pdf,.png,.jpg}
\else
  \DeclareGraphicsExtensions{.eps}
\fi

\usepackage{enumitem}
\setlist[enumerate]{leftmargin=.5in}
\setlist[itemize]{leftmargin=.5in}


\newsiamremark{remark}{Remark}
\newsiamremark{hypothesis}{Hypothesis}
\crefname{hypothesis}{Hypothesis}{Hypotheses}
\newsiamthm{claim}{Claim}
\newsiamremark{fact}{Fact}
\crefname{fact}{Fact}{Facts}

\headers{Exact Finite Koopman Embedding of Block-Oriented Polynomial Systems}{Lucian Cristian Iacob, Roland T\'{o}th, and Maarten Schoukens}

\title{Exact Finite Koopman Embedding of Block-Oriented Polynomial Systems\thanks{Submitted to the editors \textbf{on the 12\textsuperscript{th} of July, 2025}.
\funding{This work was funded by the European Union (ERC, COMPLETE, 101075836). This research was also supported by the European Union within the framework of the National Laboratory for Autonomous Systems (RRF-2.3.1- 21-2022-00002). Views and opinions expressed are however those of the author(s) only and do not necessarily reflect those of the European Union or the European Research Council Executive Agency. Neither the European Union nor the granting authority can be held responsible for them.}}}

\author{Lucian Cristian Iacob\thanks{Control System Group, Dept.~of Electrical Engineering, Eindhoven Technical University, The Netherlands 
  (\email{l.c.iacob@tue.nl}, \email{r.toth@tue.nl}, \email{m.schoukens@tue.nl}).}
  \and Roland T\'oth\footnotemark[2]\hskip 1.5mm \textsuperscript{,}\thanks{Systems and Control Laboratory, 
HUN-REN Institute for Computer Science and Control, Hungary  
(\email{toth.roland@sztaki.hun-ren.hu}).}
\and Maarten Schoukens\footnotemark[2]}

\usepackage{amsopn}


\ifpdf
\hypersetup{
  pdftitle={Exact Finite Koopman Embedding of Block-Oriented Polynomial Systems},
  pdfauthor={Lucian Cristian Iacob, Roland T\'{o}th, and Maarten Schoukens}
}
\fi

\graphicspath{{./figures/}}

\newcommand{\Cp}{C}  
\newcommand{\Ap}{A}  
\newcommand{\Ry}{\bar{R}} 
\newcommand{\Rx}{R}       
\newcommand{\Py}{\bar{L}} 
\newcommand{\Px}{L}       

\newcommand{\Bz}{\bar{B}}       
\newcommand{\Bzct}{B}

\newcommand{\Cl}{\mathsf{C}}
\newcommand{\Al}{\mathsf{A}}
\newcommand{\Bl}{\mathsf{B}}
\newcommand{\Dl}{\mathsf{D}}

\newcommand{\mr}[1]{\mathrm{#1}}

\theoremstyle{definition}

\begin{document}

\maketitle

\begin{abstract}
The challenge of finding exact and finite-dimensional Koopman embeddings of nonlinear systems has been largely circumvented by employing data-driven techniques to learn models of different complexities (e.g., linear, bilinear, input affine). Although these models may provide good accuracy, selecting the model structure and dimension is still ad-hoc and it is difficult to quantify the error that is introduced. In contrast to the general trend of data-driven learning, in this paper, we develop a systematic technique for nonlinear systems that produces a finite-dimensional and exact embedding. If the nonlinear system is represented as a network of series and parallel linear and nonlinear (polynomial) blocks, one can derive an associated Koopman model that has constant state and output matrices and the input influence is polynomial. Furthermore, if the linear blocks do not have feedthrough, the Koopman representation simplifies to a bilinear model. 
\end{abstract}

\begin{keywords}
Koopman, nonlinear dynamical systems, linear embedding, bilinear systems, block-oriented models
\end{keywords}

\begin{MSCcodes}
47-08, 47A15, 47A67, 47B33, 93B28
\end{MSCcodes}

\section{Introduction}
Developments in modern engineering increasingly rely on precise and comprehensive system modeling, while the drive to push and exceed technological limits has subsequently increased the performance requirements. As a result, models of systems have become ever more complex with nonlinearities that need to be considered to capture the full dynamic behavior. Although local linearization-based system analysis and control design methods have been available for a long time to handle nonlinear dynamics, they no longer provide the required performance and accuracy. This led to the development of various nonlinear control methods (e.g., dynamic programming, backstepping, feedback linearization, contraction, etc., \cite{khalil_nl}, \cite{Tsukamoto_contraction}, \cite{Zhou_HDP}), however, many of the existing results focus only on stability guarantees, are computationally complex, and performance of the resulting controller is difficult to shape. As such, recent years have seen a surge of research effort in embedding nonlinear systems into linear models, to make use of strong and well-developed control tools available for \emph{linear time-invariant} (LTI) systems. Some of these approaches are based on \emph{linear parameter-varying} (LPV) and \emph{linear time-varying} (LTV) modelling \cite{Mohammadpour}, \cite{LPV_Toth}, switched linear systems \cite{Switched_linear}, immersion \cite{Isidori}, \cite{Immersion_output}, or Carleman linearization \cite{Carleman_Original}, \cite{Carleman_bilinear}. A prevalent approach among the candidates is the Koopman framework, where the dynamics of the original system are lifted via (nonlinear) observable functions to a higher, possibly infinite-dimensional space, where the dynamics are linear and can be described via the so-called Koopman operator \cite{brunton_overview}, \cite{book_koopman}. Although such a linear embedding of autonomous systems is possible if an invariant Koopman subspace exists \cite{Brunton_2016}, only recently has it been shown that in the presence of external (control) inputs, the Koopman embedding results in at least bilinear dynamics on the input side, and in some cases even more complex input dynamics can occur; see \cite{bevanda2025}, \cite{brunton_overview}, \cite{Aut_Iacob_inputs}, \cite{Kaiser:20}.  \par

When it comes to practical applications, a major shortcoming of the Koopman framework is that there is no clear understanding whether a particular system can be embedded into an exact and most importantly finite-dimensional Koopman model. In absence of a solid theory, quite often data-driven methods are employed to identify the lifted model from data, often with surprisingly good accuracy \cite{iacob_id_j}, \cite{Lian_gp}, \cite{lusch_nn}, \cite{otto_bilinear}, \cite{williams_edmd}, \cite{kernel_edmd}. However, the resulting models are still inherently only approximations of the original system, and their representation capability depends on the available data, choice of model structure, or even lifting dimension \cite{Bruder}, \cite{Iacob_optimal}, \cite{Aut_Iacob_inputs}, \cite{Schulze:22}. 
Hence, one cannot hope to provide analysis guarantees or to design controllers with pre-described performance based on these models if there is no reliable characterization of the approximation error of the entire system dynamics (both the autonomous and input parts). Although recent research efforts aim to come up with reliable error bounds or uncertainty characterization for the obtained Koopman model to robustify the subsequent analysis and control design steps \cite{Error_bounds_Nuske}, \cite{Error_bounds_Philipp}, it is still a pending question under which conditions an exact finite-dimensional Koopman model of the system does exist in general and how we can calculate it in a computationally efficient manner.     
\par 
So far, useful, yet limited results have been obtained on the existence of finite dimensional Koopman-type embeddings for various system classes in terms of immersion or polyflows \cite{Isidori}, \cite{Polyflows}, \cite{levine_immersion}, \cite{liu_ozay_sontag_immersion}, which are based on recurrent Lie derivatives of the output and state, respectively. While these approaches provide interesting conditions to decide when the nonlinear system is 'embeddable', these conditions depend on checking whether the $n\textsuperscript{th}$ lifting function can be written as a linear combination of the previous $n-1$ functions for which no computational algorithm is known, making the testing of these conditions and the computation of the exact models difficult, resorting in many cases again to data-based approximations. 
Alternatively, Carleman linearization provides a constructive and computationally applicable method for computing Koopman-type models \cite{Carleman_Amini}, \cite{Carleman_Original}, \cite{Carleman_Hashemian}, \cite{Carleman_Rauh}, \cite{Carleman_bilinear}, however, it is difficult to decide when to stop with the linearization and extract, if it exists, an exact finite-dimensional form of the model representation.   
\par

In this paper, we aim to overcome this challenge by proposing a novel approach and a computable algorithm to construct exact finite-dimensional embeddings of nonlinear systems with inputs. The procedure that we propose focuses on embedding of nonlinear systems that are described via a network of elementary dynamic linear and static nonlinear blocks called \emph{block-oriented} or \emph{block-chain} nonlinear models. This class of systems is well known and is intensively used in many scientific fields such as filtering \cite{Pan:11}, \cite{Sasai:20}, robotics \cite{krzyzak_id}, biomedical applications \cite{hunt_muscle}, data-driven modeling and system identification \cite{book_block_id}, \cite{Schoukens_block_id}, etc.,  and well-known examples that fall into this class are series and parallel Wiener and Hammerstein models \cite{Schoukens_block_id}, \cite{WH_Parallel}, \cite{HW_Wills}. In our work, we consider the static nonlinear blocks to be multidimensional polynomials as many nonlinear functions have a convergent power series representation, see \cite{Alpay}, \cite{Beck}, hence a wide range of functions can be arbitrarily well represented by truncated power series, corresponding to finite-order polynomials. For a nonlinear system that is exactly represented by a nonlinear block-oriented model, we show in this paper that the dynamics of the original system can be exactly embedded into a finite-dimensional Koopman model. Furthermore, we provide an algorithm to compute this exact embedding.  

Some parallels between the proposed approach and Carleman linearisation can be drawn in the sense of taking time derivatives of Kronecker products of the state, however, the Carleman method takes an infinite linear combination of all possible monomials of the state with powers growing to infinity. Hence, while truncation of the 
Carleman linearization over a polynomial vector field  \cite{Carleman_Original}, \cite{Carleman_bilinear} turns out to be only an approximation of the nonlinear system, our method provides approximation-free embeddings of polynomial systems with block-chain representation. A connection of the present approach could also be made with \cite{Markovski_Wiener}, which embeds an autonomous Wiener model into an exact LTI model, but our methodology is capable of handling systems with input, extending the embedding to a wider range of systems at the expense of full linearity of the Koopman model. In fact, we show that the considered class of block-chain nonlinear systems have a \emph{polynomial input time-invariant} (PITI) Koopman models, which in case of no feedthrough in the dynamic linear blocks, simplifies to a \emph{bilinear time-invariant} (BLTI) Koopman representation. We summarize the contributions as follows:
\begin{itemize}
    \item Showing that block-chain polynomial systems without feedback element can always be embedded into the solution set of a PITI Koopman representation.
    \item We give conditions when the resulting PITI models are guaranteed to simplify to exact BLTI models.
    \item We provide a constructive iterative algorithm that, by iteratively processing the blocks of the block-chain nonlinear system, computes a finite-dimensional PITI Koopman form.
    \item We provide illustrative examples to showcase the applicability and validity of the algorithm.
\end{itemize} 
\par 
The paper is organized as follows. The preliminaries and the problem setting are given in  \Cref{sec:pre}. The main results on the existence of the finite-dimensional embedding are described in
\Cref{sec:block_embedding} together with an algorithm to compute the finite exact Koopman form. Finally, numerical examples are given in \Cref{sec:examples} and the conclusions are provided in
\Cref{sec:conclusions}.
\section{Preliminaries and Problem Setting} \label{sec:pre}
First, we discuss some preliminaries for Koopman embedding of autonomous systems and systems with inputs together with the considered problem setting of computing finite-dimensional exact Koopman embeddings of such systems. Then, the class of block-oriented polynomial nonlinear systems is introduced for which we aim to solve the finite-dimensional exact embedding problem.
\subsection{Koopman embedding of autonomous systems}
Consider a \emph{continuous-time} (CT) nonlinear system, given by the \emph{state-space} (SS) representation
\begin{subequations}\label{eq:nl_orig}
\begin{align}
	\dot{x}_t &= f(x_t),\label{eq:nl_orig:state}\\
	y_t &= h(x_t),\label{eq:nl_orig:output}
	\end{align}
\end{subequations}
where $x_t\in\mathbb{X} \subseteq \mathbb{R}^{n_\mathrm{x}}$ 
is the state, $y_t\in\mathbb{R}^{n_\mathrm{y}}$ is the output signal, $f:\mathbb{X} \rightarrow \mathbb{R}^{n_\mathrm{x}}$ and $h:\mathbb{X} \rightarrow \mathbb{R}^{n_\mathrm{y}}$ are the state and output functions, and $f$ is Lipschitz continuous, therefore the solutions of \eqref{eq:nl_orig} exist and are unique. In the Koopman framework, the nonlinear dynamics associated with the state $x_t$ is embedded into a linear dynamical relationship in a higher-dimensional space characterized by observables $\phi:\mathbb{X}\rightarrow\mathbb{R}$. 
These observables $\phi:\mathbb{X}\rightarrow \mathbb{R}$ are scalar functions (generally nonlinear) and are from a Banach function space $\mathcal{F}\subseteq  \mathcal{C}^1(\mathbb{X})$ with $\mathcal{C}^1(\mathbb{X})$ corresponding to continuously differentiable functions over $\mathbb{X}$. 

For \eqref{eq:nl_orig:state}, the solution $x_t$ is defined through the induced flow:
\begin{equation}\label{eq:flow_nl_ct}
x_t= F(t,x_0)=x_0 + \int^{t}_{0}f(x_\tau)\dif \tau.
\end{equation}
 The Koopman family of operators $\{\mathcal{K}^t:\mathcal{F}\rightarrow\mathcal{F}\}_{t\geq 0}$, associated with $F(t,\cdot)$, is defined by:
\begin{equation}
    \mathcal{K}^t\phi(x_0)=\phi\circ F(t,x_0), \quad \forall \phi\in\mathcal{F},
\end{equation}
where $\circ$ denotes function composition and the set $\mathbb{X}$ is considered to be open and forward invariant under $F(t,\cdot)$, i.e., $F(t,\mathbb{X})\subseteq\mathbb{X}$, $\forall t\geq 0$. Then, assuming that the Koopman semigroup of operators is strongly continuous \cite{book_koopman}, the infinitesimal generator of $\{\mathcal{K}^t\}_{t\geq 0}$, $\mathcal{L}:\mathcal{D}_\mathcal{L}$, is defined as:
\begin{equation}\label{eq:lim_generator}
\mathcal{L}\phi(x_0)=\lim_{t\downarrow 0}\frac{\mathcal{K}^t\phi(x_0)-\phi(x_0)}{t},\;\; \forall\phi\in\mathcal{D}_\mathcal{L},
\end{equation} 
with $\mathcal{D}_\mathcal{L}$ being a dense set in $\mathcal{F}$ and  the limit existing in a strong sense (see \cite{Lasota:94,book_koopman}). This means that, effectively, the Koopman generator can be used to describe the dynamics of the observables $\phi(\cdot)$ as:
\begin{equation}\label{eq:obs_generator}
\dot{\phi}=\frac{\partial \phi}{\partial x}f=\mathcal{L}\phi, \vspace{-.2cm}
\end{equation}
which is a linear representation of \eqref{eq:nl_orig:state}, albeit infinite dimensional in general. In practical applications, the embedding of \eqref{eq:nl_orig:state} into a finite-dimensional representation is often sought. This corresponds to a search for basis functions $\Phi^\top = [\begin{array}{ccc}
    \phi_1 & \cdots & \phi_{n_\mr f}
\end{array}]\in\mathcal{F}_{n_{\mr f}}$ such that $\mathcal{F}_{n_\mr f}$ is invariant under $\mathcal{L}$. Hence, due to the linearity of $\mathcal{L}$, we can write:
\begin{equation}\label{eq:generator_action_per_obs}
\dot{\phi}_j=\mathcal{L} \phi_j=\sum^{n_\mathrm{f}}_{i=1}L_{i,j}\phi_i,
\end{equation}
where $\mathcal{L}:\mathcal{F}_{n_\mr f}\rightarrow\mathcal{F}_{n_{\mr f}}$ and $\mathcal{F}_{n _\mr f}\subseteq \mathcal{D}_{\mathcal{L}}$. Here, $L$ denotes the matrix representation of the Koopman generator, and its $j^\textsuperscript{th}$ column contains the coordinates of $\mathcal{L}\phi_j$ expressed in the basis $\Phi$. Setting $A=L^\top$, we can express \eqref{eq:generator_action_per_obs} in a compact form as:
\begin{equation}\label{eq:koopman_aut}
\dot{\Phi}(x_t)=A\Phi(x_t).
\end{equation}
While \eqref{eq:koopman_aut} is often used to identify the Koopman dynamics (e.g. \cite{Klus}), it is generally solved only in an approximative sense. Outside of the Koopman literature there are methods to find an exact finite dimensional linear embedding  (see \cite{Isidori}, \cite{Polyflows}), that give conditions for \eqref{eq:koopman_aut} to exist under certain basis $\Phi$,
but no algorithm is provided to check if an exact embedding is practically possible, and resulting models are usually only approximations based on a heuristic choice of $n_\mr f$.

Using \eqref{eq:obs_generator}, the following relation also holds true:
\begin{equation}\label{eq:ct_koop_aut2}
\dot{\Phi}(x_t)=\frac{\partial \Phi}{\partial x}(x_t)f(x_t).
\end{equation}
Thus, to obtain a finite-dimensional Koopman embedding (i.e., lifting) for \eqref{eq:nl_orig:state}, the general requirement is finding a set of observables $\Phi$ such that:
\begin{equation}\label{eq:span_condition}
\frac{\partial \Phi}{\partial x}f\in \text{span} \left\lbrace\Phi\right\rbrace. 
 \end{equation}

Generally, the output map \eqref{eq:nl_orig:output} w.r.t.~the resulting embedding is defined as $h(x_t) = \Psi(\Phi(x_t))$, with $\Psi:\mathbb{R}^{n_\mathrm{f}}\rightarrow\mathbb{R}^{n_\mathrm{y}}$ a potentially nonlinear mapping. In this work, we will investigate existence of finite dimensional Koopman embeddings under the additional condition $h\in\text{span}\{\Phi\}$, allowing the output map to be written as:
\begin{equation}
	y_t = C\Phi(x_t).
\end{equation}
with $C\in\mathbb{R}^{n_\mathrm{y}\times n_\mathrm{f}}$. Note that this is not a limiting condition, as the class of systems considered in this paper directly satisfies this condition.

If a finite dimensional Koopman embedding of \eqref{eq:nl_orig} exists under the above considered conditions, then, by introducing $z_t=\Phi(x_t)$, we can write an equivalent state-space representation of \eqref{eq:nl_orig} as 
\begin{subequations}
\begin{align}
	\dot{z}_t &= Az_t,\\
	y_t&=Cz_t,
	\end{align}
\end{subequations}
with $z_0 = \Phi(x_0)$. 
\subsection{Koopman embedding of systems with inputs}
While the Koopman embedding of autonomous systems has been found to be rather powerful in describing complex fluid dynamics, in many engineering applications, systems are also affected by external inputs that influence the underlying system behavior. To be able to handle embedding under the presence of inputs, we consider general nonlinear systems described by a state-space representation:
\begin{subequations}\label{eq:nonlinear_sys_with_input}
\begin{align}
	\dot{x}_t &= f(x_t,u_t),\\
	y_t&= h(x_t,u_t),
	\end{align}
\end{subequations}
with $x_t \in \mathbb{X}\subseteq \mathbb{R}^{n_\mathrm{x}}$, $u_t\in\mathbb{U}\subseteq\mathbb{R}^{n_\mathrm{u}}$, and $f:\mathbb{R}^{n_\mathrm{x}}\times\mathbb{U}\rightarrow \mathbb{R}^{n_\mathrm{x}}$ being Lipschitz continuous. It is assumed that $\mathbb{U}$ is given such that $\mathbb{X}$ is open and forward invariant under the induced flow.

To obtain a Koopman embedding of \eqref{eq:nonlinear_sys_with_input}, as described in \cite{Aut_Iacob_inputs}, one can decompose $f(x_t,u_t)$ as follows:
\begin{equation}
	f(x_t,u_t)=f(x_t,0) + \underbrace{f(x_t,u)-f(x_t,0)}_{\bar{f}(x_t,u_t)}
\end{equation}
with $\bar{f}(x_t,0)=0$. Note that this decomposition always exists for any $f$, see \cite{surana_obs}, \cite{Aut_Iacob_inputs}. Next, we apply a similar decomposition to the output map:
\begin{equation}
	h(x_t,u_t) = h(x_t,0) + \underbrace{h(x_t,u_t)-h(x_t,0)}_{\bar{h}(x_t,u_t)}
\end{equation}
with $\bar{h}(x_t,0)=0$. Thus, the representation \eqref{eq:nonlinear_sys_with_input} becomes:
\begin{equation}
	\begin{split}
		\dot{x}_t&=f(x_t,0) + \bar{f}(x_t,u_t)\\
		y_t&= h(x_t,0) + \bar{h}(x_t,u_t).
	\end{split}
\end{equation}
Given a finite number of bases $\Phi$ such that condition \eqref{eq:span_condition} is satisfied for $f(x_t,0)$, then an exact Koopman representation of the dynamics is given by:
\begin{equation}\label{eq:lifted_dynamics_w_input}
	\dot{\Phi}(x_t)= A\Phi(x_t) + \mathcal{B}(x_t,u_t)u_t
\end{equation}
where, based on Lemma 1 in \cite{Aut_Iacob_inputs},
\begin{equation}
	\mathcal{B}(x_t,u_t)=\int^1_0\frac{\partial \Gamma}{\partial u}(x_t,\lambda u_t)\dif \lambda \quad \text{with} \quad \Gamma(x_t,u_t)=\frac{\partial \Phi}{\partial x}(x_t)\bar{f}(x_t,u_t).
\end{equation}
A similar procedure can be applied for the output map if  $h(x_t,0)\in\text{span}\{\Phi\}$. This condition can be easily satisfied by including the output in the dictionary of observables. Then, we obtain:
\begin{equation}
y_t = C\Phi(x_t) + \mathcal{D}(x_t,u_t)u_t, \quad \text{where} \quad \mathcal{D}(x_t,u_t) = \int^1_0\frac{\partial \bar{h}}{\partial u}(x_t,\lambda u_t)\dif \lambda.
\end{equation}
If a finite-dimensional Koopman embedding of \eqref{eq:nonlinear_sys_with_input} exists under the conditions $\frac{\partial \Phi}{\partial x}f(x_t,0)\in \text{span} \left\lbrace\Phi\right\rbrace $ and $h(x_t,0)\in\text{span}\{\Phi\}$, then, by introducing $z_t=\Phi(x_t)$, we can write an equivalent SS representation of \eqref{eq:nonlinear_sys_with_input} as 
\begin{subequations} \label{eq:koopman_sys_with_input}
\begin{align}
	\dot{z}_t &= Az_t + B(z_t,u_t)u_t,\\
	y_t&=Cz_t + D(z_t,u_t)u_t,
	\end{align}
\end{subequations}
with $z_0 = \Phi(x_0)$ under the assumption that there exist functions $ B$ and $ D$ such that the relations  $ B(\Phi(\cdot),\cdot)=\mathcal{B}(\cdot,\cdot)$ and $ D(\Phi(\cdot),\cdot)=\mathcal{D}(\cdot,\cdot)$ are satisfied. \par
Under certain conditions detailed in papers such as \cite{Goswami:17,Huang:18,Aut_Iacob_inputs,Schulze:22}, \eqref{eq:lifted_dynamics_w_input} can become bilinear, i.e., $B(z_t,u_t)$ reduces to affine dependency on $x_t$ only. If $\bar{f}(x_t,u_t)$ is input affine, i.e., $\bar{f}(x_t,u_t)=\tilde{f}(x_t)u_t$, then:
\begin{equation}
    \mathcal{B}(x_t,u_t)=\tilde{\mathcal{B}}(x_t)u_t, \quad \text{where} \quad 
    \tilde{\mathcal{B}}(x_t) = \frac{\partial \Phi}{\partial x}(x_t)\tilde{f}(x_t).
\end{equation}
As discussed in \cite{Aut_Iacob_inputs}, if 
\begin{equation}\label{eq:bilinear_state_condition}
    \frac{\partial \Phi}{\partial x}\tilde{f}_k\in\{\text{span}\{\Phi\}+\text{const}\}
\end{equation}
where $\tilde{f}_k$ is the $k^\textsuperscript{th}$ column of $\tilde{f}$, there exists a ${}_k\bar{B}\in\mathbb{R}^{n_\mr f \times n_\mr f}$ and ${}_kB \in\mathbb{R}^{n_\mr f \times 1}$ such 
that $\frac{\partial\Phi}{\partial x}\tilde{f}_k={}_k\bar{B}\Phi + {}_kB$. Note that the constant term also allows for fully LTI models to result from the embedding. Then, given that $\frac{\partial \Phi}{\partial x}f(x_t,0)\in \text{span} \left\lbrace\Phi\right\rbrace $, the lifted bilinear form of the dynamics is:
\begin{equation}\label{eq:koop_bilinear_form_in_u}
    \dot{\Phi}(x_t)=A\Phi(x_t) + \sum^{n_\mr u}_{k=1} \left({}_k\bar{B}\Phi(x_t)+{}_k B\right)u_{k,t},
\end{equation}
where $u_{k,t}$ is the $k\textsuperscript{th}$ element of $u_t$.
Similarly, for the output map, the necessary conditions can be described as follows. Let $\bar{h}(x_t,u_t)$ have an affine dependency on the input, i.e., $\bar{h}(x_t,u_t)=\tilde{h}(x_t)u_t$, such that the output function $h(x_t,u_t)$ is expressed as:
\begin{equation}
h(x_t,u_t) = h(x_t,0) + \tilde{h}(x_t)u_t.
\end{equation}
Then, if $\tilde{h}(x_t)\in\{\text{span}\{\Phi\}+\text{const}\}$ and $h(x_t,0)\in\text{span}\{\Phi\}$, the output equation can be written as: 
\begin{equation}
y_t=C\Phi(x_t)+\sum^{n_\mr u}_{k=1} \left({}_k\bar{D}\Phi(x_t)+{}_kD\right)u_{k,t}
\end{equation}
with ${}_k\bar{D}\in\mathbb{R}^{n_\mr f \times n_\mr f}$ and ${}_kD\in\mathbb{R}^{n_\mr f \times 1}$. Finally, let $z_t=\Phi(x_t)$, then the lifted exact finite-dimensional  Koopman form of \eqref{eq:nonlinear_sys_with_input} is given by:
\begin{subequations} \label{eq:BLTI_representation}
    \begin{align}
        \dot{z}_t &=Az_t + \sum^{n_\mr u}_{k=1} \left({}_k\bar{B} z_t+{}_k B\right)u_{k,t} =A z_t + \left(\sum^{n_{\mathrm{z}}}_{j=1}\bar{B}_{j} z_{j,t} + B\right)u_t \label{eq:BLTI_representation:state} \\
        y_t&=Cz_t+\sum^{n_\mr u}_{k=1} \left({}_k\bar{D}z_t+{}_kD\right)u_{k,t} = C z_t + \left(\sum^{n_{\mathrm{z}}}_{j=1}\bar{D}_{j} z_{j,t} + D\right)u_t \label{eq:BLTI_representation:output}
    \end{align}
\end{subequations}
with $z_0=\Phi (x_0)$, which corresponds to a \emph{bilinear time-invariant} (BLTI) system. Here, ${}_k{B}\in\mathbb{R}^{n_{\mathrm{z}}\times 1}$ gives the $k^{\text{th}}$ column of $B\in\mathbb{R}^{n_{\mathrm{z}}\times n_{\mathrm{u}}}$, while ${}_k\bar{B}\in\mathbb{R}^{n_{\mathrm{z}} \times n_{\mathrm{z}}}$  gives $\bar{B}_{j} = [\begin{array}{ccc}
	{}_{1,j} \bar{B} & \dots & {}_{n_{\mathrm{u}},j} \bar{B}
\end{array}]$ with ${}_{k,j} \bar{B}$ being the $j^{\text{th}}$ column of ${}_k \bar{B}$. The $D$ terms are similarly defined. 

Note that, in this paper, we will derive exact BLTI models where the output only depends on the lifted state, i.e., $y_t=Cz_t$. To increase readability and for the sake of simplicity, from here on we drop the subscript $t$ expressing time dependence.

\subsection{Block-oriented description of nonlinear systems}
To investigate when \eqref{eq:nonlinear_sys_with_input} can be converted to an exact Koopman form \eqref{eq:koopman_sys_with_input}, we restrict the scope of  considered systems to systems where the dynamics can be described by a block interconnection, in series and parallel, of LTI and static nonlinear blocks. The blocks are defined as follows: 
\subsubsection{LTI dynamic block} \label{sec:LTIblock}
The block $\Sigma^\mathrm{LTI}_i$ corresponds to an LTI system described by the minimal \emph{state-space} (SS) representation with dimensions $(n_{\mathrm{y},i},n_{\mathrm{x},i},n_{\mathrm{u},i})$:
\begin{subequations}\label{eq:LTI_block}
	\begin{align}
		\dot{x}_{i}&=\Al_ix_{i} + \Bl_iu_{i}= \Al_i x_{i} + \sum^{n_{\mathrm{u},i}}_{k=1}{}_k\Bl_i u_{i,k},\label{eq:LTI_block:state}\\
		y_{i}&=\Cl_ix_{i} + \Dl_iu_{i}= \Cl_i x_{i} + \sum^{n_{\mathrm{u},i}}_{k=1}{}_k\Dl_i u_{i,k},\label{eq:LTI_block:output}
	\end{align}
\end{subequations} 
where $x_{i,t}\in\mathbb{R}^{n_{\mathrm{x},i}}$ is the state of the representation, $u_{i,t}\in\mathbb{R}^{n_{\mathrm{u},i}}$ is the input of the block and $y_{i,t}\in\mathbb{R}^{n_{\mathrm{y},i}}$ is the output of the block. $\Al_i\in\mathbb{R}^{n_{\mathrm{x},i} \times n_{\mathrm{x},i}}$ is the state matrix, $\Bl_i\in\mathbb{R}^{n_{\mathrm{x},i}\times n_{\mathrm{u},i}}$ the input matrix with ${}_k\Bl_i$ being the $k\textsuperscript{th}$ column of $\Bl_i$, $\Cl_i\in\mathbb{R}^{n_{\mathrm{y},i}\times n_{\mathrm{x},i}}$ is the output matrix and $\Dl_i\in\mathbb{R}^{n_{\mathrm{y},i} \times n_{\mathrm{u},i}}$ is the feedthrough matrix with ${}_k\Dl_i$ being the $k\textsuperscript{th}$ column of $\Dl_i$.

At the level of an IO map, \eqref{eq:LTI_block} is expressed as
\begin{equation}
y_{i} = \underbrace{\left[\begin{array}{c|c} \Al_i & \Bl_i  \\ \hline \Cl_i & \Dl_i \end{array}  \right]}_{G_i} u_{i}, 
\end{equation}
where $G_i$ corresponds to an LTI operator whose Laplace transform is the transfer function $\Dl_i +\Cl_i(Is-\Al_i)^{-1}\Bl_i$ associated with \eqref{eq:LTI_block}, where $s\in\mathbb{C}$ is the complex frequency. 

\subsubsection{Static nonlinear block}\label{sec:nonlinear_block}
A nonlinear block $\Sigma^\mathrm{NL}_i$  with dimensions $(n_{\mathrm{y},i},n_{\mathrm{u},i})$ is described as:
\begin{equation}\label{eq:nl_block_def}
	y_{i} = f_i(u_{i})
\end{equation}
where $f_i:\mathbb{R}^{n_{\mathrm{u},i}}\rightarrow\mathbb{R}^{n_{\mathrm{y},i}}$ is a multivariate polynomial vector function. Note that many nonlinear functions have a convergent power series representation, see \cite{Alpay}, \cite{Beck}, hence all of these functions can be arbitrarily well represented by truncated power series, corresponding to a finite order polynomial. 

In order to embed a nonlinear static block \eqref{eq:nl_block_def}
into a Koopman form, $f$ is decomposed as a linear combination of univariate polynomials based on the approach in \cite{Dreesen}. For completeness, we give here a brief overview of the decomposition. For simplicity of the notation, we drop the subscript $i$ of $f$, then
the decomposition of $f(u)$ is written as:
\begin{equation} \label{eq:decomp}
	y=f(u)= Wg(V^\top u)
\end{equation}
where $V\in\mathbb{R}^{n_{\mathrm{u}}\times r}$, $W\in\mathbb{R}^{n_{\mathrm{y}}\times r}$. 
The function $g:\mathbb{R}^{r}\rightarrow\mathbb{R}^{r}$ is defined as 
\begin{equation}
g(V^\top u)=[\ g_{1}(\underbrace{v_1u}_{\sigma_1}) \ \cdots \ g_{r}(\underbrace{v_{r}u}_{\sigma_r}) \ ]^\top 
\end{equation}
with $g_{e}:\mathbb{R}\rightarrow\mathbb{R}$ 
being the scalar decoupled univariate polynomials, $v_e\in\mathbb{R}^{n_\mathrm{u}}$ being the $e^\mathrm{th}$ row of $V^\top$ and $\sigma=V^\top u$. The univariate scalar polynomials $g_e$ are defiend as:
\begin{equation}\label{eq:g_decouple_components}
    g_e(\sigma_e) = \gamma_{e,0} + \gamma_{e,1}\sigma_e + \dots + \gamma_{e,p}\sigma^p_e
\end{equation}
with $\{\gamma_{e,m}\}^p_{m=1}\in\mathbb{R}$, $\sigma_e$ being the $e\textsuperscript{th}$ element of $\sigma$, while $p$ represents the total degree of $f$ \cite{Dreesen}.
In \cite{Dreesen}, \cite{USEVICH2020}, it is shown that such a decomposition is possible for matrix polynomial functions $f$ given a sufficiently high $r\leq n_{\mathrm{y}}n_{\mathrm{u}}$.  For a given $r$, such decompositions can be computed by the toolbox \cite{decoup_toolbox}. 

To illustrate the decomposition mechanism, we provide a simple example. Let $y=[\begin{array}{cc} y_{1} & y_{2}\end{array}]^\top$, $u=[\begin{array}{cc} u_{1} & u_{2}\end{array}]^\top$ and $f=[\begin{array}{cc}
f_{1} & f_{2}
\end{array}]^\top$, such that $y=f(u)$ is written as:
\begin{equation} \label{exmp:decompostion}
\begin{bmatrix}
y_{1} \\ y_{2}
\end{bmatrix}=\begin{bmatrix}
f_{1}(u_{1},u_{2}) \\ f_{2}(u_{1},u_{2})
\end{bmatrix}=\begin{bmatrix}
u^2_{1}-4u_{1}u_{2}-2u_{1}+4u^2_{2} +4u_2+1 \\
-u^2_{1}+4u_{1}u_{2} + 2u_{1}+u_{2}^3 - 4u_{2}^2 - 5u_{2}-1
\end{bmatrix}
\end{equation}
It is possible to decompose this matrix polynomial with $r=2$:
\begin{equation}
\begin{bmatrix}
y_{1} \\ y_{2}
\end{bmatrix}=\underbrace{\begin{bmatrix}
1 & 0 \\ -1 & -1
\end{bmatrix}}_{W}\underbrace{\begin{bmatrix}
\sigma_{1}^2 - 2\sigma_{1} + 1 \\
\sigma_{2}^3 - \sigma_{2}
\end{bmatrix}}_{g(\sigma)}, \qquad \begin{bmatrix}
\sigma_{1} \\ \sigma_{2}
\end{bmatrix}= \begin{bmatrix}
v_{1}u \\ v_{2}u
\end{bmatrix} = \underbrace{\begin{bmatrix}
1 & -2 \\ 0 & -1
\end{bmatrix}}_{V^\top}\begin{bmatrix}
u_{1} \\ u_{2}
\end{bmatrix}.
\end{equation}
In this example $p=3$ and $r=2$. Writing $g(\sigma)$ in the form \eqref{eq:g_decouple_components}, we obtain the coefficients $\gamma$ as given in \Cref{tbl:table_example_g}.
\begin{table}\caption{Coefficients of $g$ resulting when the static polynomial block \eqref{exmp:decompostion} is decomposed into the form of \eqref{eq:decomp}.}\label{tbl:table_example_g}
\begin{center}
\begin{tabular}{ |c|c|c|c| } 
\hline
$\gamma_{1,0}=1$ & $\gamma_{1,1}=-2$ & $\gamma_{1,2}=1$ &  $\gamma_{1,3}=0$ \\ \hline
$\gamma_{2,0}=0$ & $\gamma_{2,1}=-1$ & $\gamma_{2,2}=0$ &  $\gamma_{2,3}=1$ \\ 
\hline
\end{tabular}
\end{center}
\end{table}

\subsubsection{Block-oriented nonlinear system representation}
Now we introduce a network representation of nonlinear systems in terms of interconnection of blocks, in series and parallel, of LTI and static nonlinear components. For this, we define a set of elementary block operations. These operations, performed iteratively from input to output, describe the dynamics of any series and parallel block interconnection of LTI and static nonlinear blocks. An example system is shown in \Cref{fig:nl_network_interconnection}. 
\begin{figure}[ht!]
\begin{center}    
\includegraphics[width=.99\textwidth]{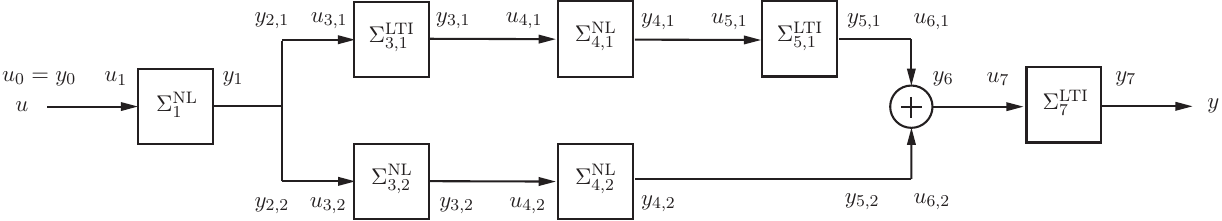}
\caption{Example of a block-chain interconnection of LTI blocks $\Sigma^\mathrm{LTI}_i$ and static nonlinear blocks $\Sigma^\mathrm{NL}_j$ in series and parallel.}
\label{fig:nl_network_interconnection}
\end{center}
\end{figure}
\FloatBarrier
The following operations with $i\in\mathbb{I}_{0}^N$, $N\geq 0$, are defined at the IO map level:
\begin{itemize}
\item \textit{Starting node}: The starting node is defined as $y_{0}=u$ with $y_{0,t}\in\mathbb{R}^{n_{\mathrm{y},0}}$ where $n_{\mathrm{y},0}=n_{\mathrm{u}}$. 
	\item \textit{Linear dynamic} (LD) block: Based on the LTI dynamics $\Sigma^\mathrm{LTI}_i$, represented by \eqref{eq:LTI_block}, \begin{equation}\label{eq:operation_1}
		y_{i} = G_i u_{i}, \quad \text{where }u_i=y_{i-1},
	\end{equation}
    with $y_{i,t}\in\mathbb{R}^{n_{\mathrm{y},i}}$ and $u_{i,t}\in\mathbb{R}^{n_{\mathrm{u},i}}$, where $n_{\mathrm{u},i}=n_{\mathrm{y},i-1}$.
	Note that $G_i$ can be both a dynamic operator defined by the matrices $(\Al_i,\Bl_i,\Cl_i,\Dl_i)$ or a static gain expressed by $\Dl_i$ only. 
	\item \textit{Static nonlinearity} (SN): Based on the NL map $\Sigma^\mathrm{NL}_i$, represented by  \eqref{eq:nl_block_def}, \begin{equation}\label{eq:operation_2}
		y_{i} = f_i(u_i), \quad \text{with }u_i=y_{i-1},
	\end{equation}
	where $y_{i,t}\in\mathbb{R}^{n_{\mathrm{y},i}}$ and $u_{i,t}\in\mathbb{R}^{n_{\mathrm{u},i}}$ with $n_{\mathrm{u},i}=n_{\mathrm{y},i-1}$. 
	\item  \textit{Input junction} (IJ): Corresponds to a branching of the signals
	\begin{equation}\label{eq:operation_3}
	\begin{split}
	y_{i,1}&=u_{i,1}=y_{i-1},\\
	&\;\;\vdots \\
	y_{i,m}&=u_{i,m}=y_{i-1},
	\end{split}
	\end{equation}
	for a junction of $m$ branches with $n_{\mathrm{y},i,j}=n_{\mathrm{y},i-1}$ for all $j\in \{1,\ldots,m\}$. Input junction is only possible if $N>1$, as it is required to be followed by an output junction somewhere in the block chain. Note that to avoid technical clutter, w.l.o.g.~we do not define signal splitting (multiplexing), i.e., a junction where $y_{i,j}=S_j y_{i-1}$ with $S_j\in \mathbb{I}^{n_{\mathrm{y},i,j} \times n_{\mathrm{y},i-1}}$ being a full-row rank selector matrix containing only $1$ and $0$ with $0<n_{\mathrm{y},i,j}\leq n_{\mathrm{y},i-1}$.
	\item \textit{Output junction (OJ):}  Corresponds to summing of the signals \begin{equation}\label{eq:operation_4}
		y_{i} = \sum^m_{j=1} u_{i,j} = \sum^m_{j=1} y_{i-1,j},
	\end{equation}
    for a junction of $m$ branches with $n_{\mathrm{y},i}=n_{\mathrm{y},i-1,j}$ for all $j\in \{1,\ldots,m\}$. Note that an OJ is only possible if it has been preceded by an IJ, i.e, there are branches to join. Again, to avoid technical clutter, w.l.o.g.~we do not define signal de-multiplexing, i.e., a junction where $u_{i,j}=S_j y_{i-1,j}$ with $S_j\in \mathbb{I}^{n_{\mathrm{y},i} \times n_{\mathrm{y},i-1,j}}$ being a full-column rank selector matrix containing only $1$ and $0$ with $0<n_{\mathrm{y},i-1,j}\leq n_{\mathrm{y},i}$. 
    \item\textit{End node:} Defined as $y=y_{N}$ with $N\in\mathbb{N}$ being the index of the last block-chain element preceding the end node and $y_{N,t}\in\mathbb{R}^{n_{\mathrm{y},N}}$ where $n_{\mathrm{y},N}=n_{\mathrm{y}}$. An end node is only possible if each IJ in the block chain has been closed by an OJ.
\end{itemize}
Note that cases of multiplexing and demultiplexing can be handled via zero padding of the corresponding signals. However, feedback interconnection is not considered in our block-oriented setting due to technical convenience to avoid problems of well-posedness and limitations of the conversion theory we present in Section \ref{sec:block_embedding}. Furthermore, for autonomous systems without inputs, the same block chain representation can be applied with minor adaptations of the starting node and the first element. 

Nevertheless, well-known NL model structures in the literature such as Wiener, Hammerstein, or subsequent combinations (e.g., \cite{WH_Parallel}, \cite{Schoukens_block_id}, \cite{HW_Wills}) can be easily represented as block-oriented models by the above-defined operations as exemplified in \Cref{fig:wiener_hammerstein}. However, the absence of a feedback operation means that Lur'e type of nonlinear systems fall out of the considered system setting.

\begin{figure}[ht!]
\begin{center}  
\includegraphics[width=.99\textwidth]{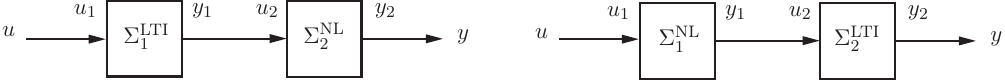}
\caption{Block oriented description of Wiener (left) and Hammerstein (right) systems.}
\label{fig:wiener_hammerstein}
\end{center}
\end{figure}
\FloatBarrier

\section{Finite Koopman embeddings of block-oriented NL representations}\label{sec:block_embedding} 
With all the preliminaries introduced, we are now ready to state our main result on the exact finite-dimensional Koopman embedding of NL systems that have a block-oriented representation with polynomial NL blocks. 
\subsection{Embedding theorems}
We begin by formulating a special case of the Koopman form \eqref{eq:koopman_sys_with_input} in terms of a \emph{polynomial input time-invariant} (PITI) Koopman form $\Sigma^\mathrm{PITI}$: 
\begin{subequations}\label{eq:PITI_representation}
	\begin{align}
		\dot{z} &= \Ap   z + \Px (z)\Rx (u)u,\\
		y &= \Cp z + \Py  (z)\Ry (u)u, \label{eq:PITI_representation:output}
	\end{align}
\end{subequations}
with lifted state $z_t=\Phi(x_t)\in\mathbb{R}^{n_{\mr z}}$ and state and output matrices $\Ap  \in\mathbb{R}^{n_{\mathrm{z}} \times n_{\mathrm{z}}}$, $\Cp\in\mathbb{R}^{n_{\mathrm{y}}\times n_{\mathrm{z}}}$. The functions $\Px :\mathbb{R}^{n_{\mathrm{z}}}\rightarrow\mathbb{R}^{n_{\mathrm{z}}\times n_{\mathrm{r}}}$ and $\Py :\mathbb{R}^{n_{\mathrm{z}}}\rightarrow\mathbb{R}^{n_{\mathrm{z}}\times n_{\bar{\mathrm{r}}}}$ are linear in $z$, while $\Rx :\mathbb{R}^{n_{\mathrm{u}}}\rightarrow\mathbb{R}^{n_{\mathrm{r}}\times n_{\mathrm{u}}}$ and $\Ry :\mathbb{R}^{n_{\mathrm{u}}}\rightarrow\mathbb{R}^{n_{\bar{\mathrm{r}}} \times n_{\mathrm{u}}}$ are polynomials in $u$. 

The following theorem holds:

\begin{theorem}\label{thm:PITI_theorem}
	Given a nonlinear system \eqref{eq:nonlinear_sys_with_input} whose dynamics can be represented as a block-chain of $\Sigma^{\mathrm{LTI}}$, see \eqref{eq:LTI_block}, and $\Sigma^{\mathrm{NL}}$ blocks, see \eqref{eq:nl_block_def}, in terms of the operations~\crefrange{eq:operation_1}{eq:operation_4}, then system \eqref{eq:nonlinear_sys_with_input} has an exact finite-dimensional PITI Koopman representation in the form of \eqref{eq:PITI_representation}.
\end{theorem}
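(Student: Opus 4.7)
The plan is to prove the theorem constructively by induction on the block-chain structure, processing the blocks one at a time from the starting node toward the end node. The induction invariant I maintain after the $i$-th operation is that there exists a finite-dimensional lifted state $z = \Phi(x)$, with $\Phi$ a polynomial mapping of the concatenated internal states of the LTI blocks already traversed, such that (i) the dynamics of $z$ has the PITI form $\dot{z} = \Ap z + \Px(z)\Rx(u)u$ for some matrices $\Ap$, $\Cp$ and functions $\Px$ linear in $z$, $\Rx$ polynomial in $u$, and (ii) the current intermediate signal $y_i$ is expressible as $y_i = C_i z + \Py_i(z)\Ry_i(u)u$ with $\Py_i$ linear in $z$ and $\Ry_i$ polynomial in $u$. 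The conclusion of the theorem then follows by specializing the invariant at the end node, with the signal representation providing the output equation \eqref{eq:PITI_representation:output}.

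The base case is the starting node $y_0 = u$, where the invariant holds trivially (empty $z$, $C_0 = 0$, $\Py_0 = I$, $\Ry_0 = I$). For an LD block $G_i$, I append its internal state $x_i$ to $z$ and substitute $u_i = y_{i-1}$ into \eqref{eq:LTI_block}: both the new state equation $\dot{x}_i = \Al_i x_i + \Bl_i(C_{i-1}z + \Py_{i-1}(z)\Ry_{i-1}(u)u)$ and the block's output $y_i = \Cl_i x_i + \Dl_i u_i$ remain of the required affine-in-(augmented)-$z$ plus polynomial-input type, so the invariant is preserved. An IJ merely relabels $y_{i-1}$ onto several branches, while an OJ sums several such representations; by linearity of the $C_i$, $\Py_i$, $\Ry_i$ structure, both preserve the invariant without altering $z$.

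The only substantive step, and the main obstacle, is the SN block. Using the decoupled representation $f_i(u_i) = W_i g_i(V_i^\top u_i)$ of Section \ref{sec:nonlinear_block}, I split each scalar argument as $\sigma_e = v_e^\top u_i = \sigma_e^{(z)} + \sigma_e^{(u)}$ with $\sigma_e^{(z)} = v_e^\top C_{i-1} z$ and $\sigma_e^{(u)} = v_e^\top \Py_{i-1}(z)\Ry_{i-1}(u)u$. Binomially expanding $g_{i,e}(\sigma_e) = \sum_{m=0}^p \gamma_{e,m}\sigma_e^m$ separates, for each $m$, a purely $z$-dependent monomial $\gamma_{e,m}(\sigma_e^{(z)})^m$ from cross-terms that always carry at least one factor of $\sigma_e^{(u)}$, hence at least one factor of $u$. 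The cross-terms fit directly into the $\Py(z)\Ry(u)u$ slot (once higher-degree monomials of $z$ are allowed as coefficients), while the purely $z$-dependent terms are absorbed by augmenting $z$ with the corresponding monomials of its components up to total degree $p$. What must then be checked is that the dynamics of each newly added monomial, obtained via the product rule from the existing PITI dynamics, is itself in PITI form with respect to the augmented state: differentiating a product of components of $z$ produces a polynomial in the old $z$ plus a term that is (old) polynomial in $z$ times $\Rx(u)u$, both of which become linear in the augmented $z$ by construction, possibly after one further round of augmentation. Since every polynomial involved has finite degree and the block chain contains finitely many blocks, this monomial-closure process terminates at a finite lifting dimension, yielding the exact PITI representation \eqref{eq:PITI_representation} claimed.
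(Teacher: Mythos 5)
Your proposal is correct and follows essentially the same route as the paper: an induction along the block chain in which LD, IJ, and OJ operations are absorbed by direct substitution and state concatenation, and the SN block is handled by splitting $\sigma_e$ into its $z$- and $u$-parts, binomially expanding, and augmenting the lifted state with monomials (Kronecker powers) of $z$ up to the polynomial degree $p$. The only cosmetic difference is your hedge about ``one further round of augmentation'': the paper's degree bookkeeping (differentiating a degree-$j$ monomial drops the degree by one, and multiplying by the $z$-linear $\Px$ raises it back by at most one) shows that the single augmentation up to degree $p_i$ is already closed under the dynamics, which your own argument essentially contains.
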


Before proving \cref{thm:PITI_theorem}, the following result gives a simplification of it: 

\begin{corollary}\label{thm:bilinear_theorem}
	Given a nonlinear system \eqref{eq:nonlinear_sys_with_input} which, in terms of \cref{thm:PITI_theorem}, can be written in the PITI form of  \eqref{eq:PITI_representation}. If the following conditions are satisfied by the block-chain representation of \eqref{eq:nonlinear_sys_with_input}:
    \begin{enumerate}[label=(\roman*)]
        \item each $\Sigma^{\mathrm{LTI}}_i$ block has no feedthrough ($D_i=0_{{n_{\mathrm{y},i}\times n_{\mathrm{u},i}}}$), 
        \item the first operation following $y_0=u$ is not SN \eqref{eq:operation_2} or IJ \eqref{eq:operation_3} followed by SN,
    \end{enumerate}
   then \eqref{eq:PITI_representation} reduces to a BLTI Koopman representation \eqref{eq:BLTI_representation}.
   \end{corollary}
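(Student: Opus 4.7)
The plan is to invoke \cref{thm:PITI_theorem} to obtain a PITI representation \eqref{eq:PITI_representation} and then argue that under conditions~(i) and~(ii) the polynomial factors $\Rx(u)$ and $\Ry(u)$ collapse to constant matrices, so that \eqref{eq:PITI_representation} reduces to the BLTI form \eqref{eq:BLTI_representation}. Equivalently, I would argue directly from the iterative embedding construction that the lifted dynamics are already bilinear in $(z,u)$ and that the output depends only on $z$.

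First I would isolate the two mechanisms by which powers of $u$ higher than one can enter the embedding. Mechanism~(a): an LTI block with non-zero feedthrough outputs $y_i = \Cl_i x_i + \Dl_i u_i$, carrying $u_i$ explicitly; a subsequent SN then produces the composition $f_i(\Cl_i x_i + \Dl_i u_i)$, which contains super-linear powers of $u_i$. Mechanism~(b): an SN applied directly to $u$ (first operation an SN, or first operation an IJ with at least one branch starting in SN) yields $f(u)$, injecting powers of $u$ at the very start of the chain. Condition~(i) excludes (a) and condition~(ii) excludes (b).

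Next I would perform an induction along the block chain maintaining the invariant that, except for the input node itself, every intermediate signal $y_i$ depends only on the states of upstream LTI blocks and not on $u$ directly. Under~(ii), the first operation processing $u$ is an LD block (possibly reached via an IJ whose branches each start with an LD); by~(i) its output is $\Cl_1 x_1$, purely state-dependent. IJ only replicates and OJ only linearly combines signals, so both preserve the invariant. Any SN block further along the chain therefore receives a state-only input, and its decoupled form \eqref{eq:decomp} introduces observables that are polynomials of states alone. Differentiating any such observable via the chain rule produces terms of the form (polynomial in states)$\cdot\dot{x}_j$, and since $\dot{x}_j = \Al_j x_j + \Bl_j u_j$ with $u_j$ equal either to the external $u$ or to another state-only signal, each such term is at most bilinear in (lifted state, $u$); the polynomial-in-state part is absorbed into the linear component $Az$ by the PITI existence guarantee provided by \cref{thm:PITI_theorem}.

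Closing the argument, the lifted state satisfies $\dot{z} = Az + B(z)u$ with $B(z)$ affine in $z$, and the output $y = y_N$ is either a terminating LD (pure $\Cl_N x_N$ by~(i)) or an OJ summing state-only signals, so $y = Cz$. This matches \eqref{eq:BLTI_representation}. The hardest part will be the bookkeeping at IJ/OJ junctions when branches differ in depth and in the interleaving of LD and SN operations; I expect this to be resolved by noting that each branch individually satisfies the state-only invariant after its own first internal LD, and that OJ merely linearly combines signals that are already invariant-compliant.
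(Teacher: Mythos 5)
Your proposal is correct and follows essentially the same route as the paper: an induction along the block chain maintaining the invariant that each intermediate Koopman model is bilinear \emph{without feedthrough} (your ``state-only'' invariant is exactly the condition $\Py_i(z_i)\Ry_i(\bar u_i)=0$ tracked by the paper's per-block corollaries), and the identification of the two violation mechanisms --- an SN acting directly on $u$, or an SN fed by a model with direct feedthrough --- matches the paper's argument, with conditions (i) and (ii) excluding precisely these cases.
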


We will prove \Cref{thm:PITI_theorem} and \Cref{thm:bilinear_theorem} inductively in 
 \Cref{sec:proof_section} by first discussing the PITI Koopman embedding of elementary blocks and then showing that applying any interconnection operation of the block chain in relation with a PITI model will produce a PITI Koopman model of the joint dynamics. We will also show how each of the steps simplify to a BLTI form if the conditions of \Cref{thm:bilinear_theorem} are satisfied. Note that the step-by-step constructive proof also provides an algorithm to compute an exact finite dimensional Koopman embedding which is also a major contribution of the present paper.   
 
\subsection{Embedding an LD in PITI}\label{sec:embedding:LD} The block-chain representation can either start with an LD or an SN block or an IJ, hence, as a preparation for a formal proof of \Cref{thm:PITI_theorem} and \Cref{thm:bilinear_theorem}, we will first discuss the conversion of LD and SN blocks to a PITI/BLTI Koopman form, while we will handle IJs in a separate manner in \Cref{sec:inputjunc}. 

An LTI block $\Sigma_1^\mathrm{LTI}$ can be easily expressed in a PITI Koopman representation \eqref{eq:PITI_representation}, see \Cref{fig:LTI_to_PITI}, as follows.
\begin{figure}[ht!]
\begin{center}   
\includegraphics[scale=.95]{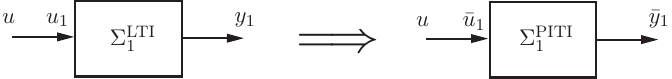}
\caption{Embedding an LTI block in a PITI Koopman representation. \vspace{-.5cm}}
\label{fig:LTI_to_PITI}
\end{center}
\end{figure}
\FloatBarrier
According to \Cref{sec:LTIblock}, $\Sigma_1^\mathrm{LTI}$ is given by
\begin{subequations}\label{eq:LTI_block_dyn}
	\begin{align}
	\dot{x}_1 &= \Al_1x_1 + \Bl_1u_1= \Al_1 x_1 + \sum^{n_{\mathrm{u},1}}_{k=1}{}_k\Bl_1 u_{1,k},\\
	y_1&=\Cl_1x_1 + \Dl_1 u_1.
	\end{align}
\end{subequations} 
Recall that ${}_k\Bl_1$ is the $k\textsuperscript{th}$ column of $\Bl_1$. Next, we give a lemma and a corollary for the PITI and BLTI formulations.
\begin{lemma}\label{lemma:LTI_OP1}
	A linear block $\Sigma_1^\mathrm{LTI}$ corresponding to \eqref{eq:operation_1} with an SS form \eqref{eq:LTI_block_dyn} can be written in  PITI form $\Sigma^\mathrm{PITI}_1$, given by \eqref{eq:PITI_representation}, with state $z_1=x_1$, input $\bar{u}_1=u_1$, output $\bar{y}_1=y_1$,  $\Ap _1=\Al_1$, $\Px _1(z_1)\equiv \Bl_1$, $\Rx _1(\bar{u}_1) \equiv I_{n_{\mathrm{u},1}}$, $\Cp _1=\Cl_1$, $\Py_1 (z_1) \equiv {\Dl_1}$, $\Ry _1(\bar{u}_1)={I_{n_{\mathrm{u},1}}} $  with $n_{\mathrm{r},1}=n_{\mathrm{u},1}$ and $n_{\bar{\mathrm{r}},1}=n_{\mathrm{y},1}$.
\end{lemma}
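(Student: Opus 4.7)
The plan is a direct verification by substitution, since the lemma is essentially a packaging statement: the proposed assignment casts the LTI data $(\Al_1,\Bl_1,\Cl_1,\Dl_1)$ into the PITI template \eqref{eq:PITI_representation} by taking $\Px_1,\Py_1$ to be constant in $z_1$ and $\Rx_1,\Ry_1$ to be identities in $\bar{u}_1$. I would therefore proceed in two short steps: first a substitution showing that the PITI dynamics reduce pointwise to \eqref{eq:LTI_block_dyn}, and then a structural and dimensional check that the resulting choice lies in the admissible class of PITI representations.

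For the first step I would set $z_1 = x_1$ and $\bar{u}_1 = u_1$ and plug the proposed matrices into the PITI state equation, obtaining $\Ap_1 z_1 + \Px_1(z_1)\Rx_1(\bar{u}_1)\bar{u}_1 = \Al_1 x_1 + \Bl_1 I_{n_{\mathrm{u},1}} u_1 = \Al_1 x_1 + \Bl_1 u_1 = \dot{x}_1$ by \eqref{eq:LTI_block_dyn}. The same substitution in the output equation yields $\Cp_1 z_1 + \Py_1(z_1)\Ry_1(\bar{u}_1)\bar{u}_1 = \Cl_1 x_1 + \Dl_1 u_1 = y_1$, so that by initialising $z_1(0) = x_1(0)$ the trajectories of $\Sigma^\mathrm{PITI}_1$ coincide with those of $\Sigma_1^\mathrm{LTI}$ through uniqueness of ODE solutions.

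For the second step I would observe that the constant matrices $\Bl_1,\Dl_1$ are admissible choices for $\Px_1(z_1),\Py_1(z_1)$, consistent with the fact that the BLTI form \eqref{eq:BLTI_representation} already permits constant offsets $B,D$ alongside strictly linear-in-$z$ terms; the identities $\Rx_1,\Ry_1$ are polynomials in $\bar{u}_1$ of degree zero; and with $n_{\mathrm{r},1}=n_{\mathrm{u},1}$ and $n_{\bar{\mathrm{r}},1}=n_{\mathrm{y},1}$ the products $\Px_1\Rx_1\bar{u}_1$ and $\Py_1\Ry_1\bar{u}_1$ have the correct sizes. I do not anticipate any substantive obstacle: the lemma's content is that every LTI SS representation is a degenerate PITI representation, and the only interpretive point worth flagging is that the paper's convention admits constant (degree-zero) contributions in $\Px,\Py$, as is clear from the analogous BLTI form \eqref{eq:koop_bilinear_form_in_u}.
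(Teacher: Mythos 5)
Your proof is correct and follows exactly the paper's route: the paper's own proof is the one-line observation that substituting the stated matrices and functions into \eqref{eq:PITI_representation} recovers \eqref{eq:LTI_block_dyn}, which is precisely your first step. Your additional remarks on admissibility (constant, i.e.\ degree-zero, choices for $\Px_1,\Py_1,\Rx_1,\Ry_1$ being permitted by the PITI/BLTI conventions) are consistent with the paper's usage and do not change the argument.
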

\begin{proof}
	By substitution of the above given matrices and functions into \eqref{eq:PITI_representation}, the result trivially follows. 
\end{proof}

\begin{corollary}\label{corr:LTI_OP1}
For a linear block $\Sigma_1^\mathrm{LTI}$, the resulting Koopman form by \Cref{lemma:LTI_OP1} is always a BLTI Koopman representation \eqref{eq:BLTI_representation}. If there is no feedforward term in $\Sigma_1^\mathrm{LTI}$ (i.e. $\Dl_1=0$), then the BLTI Koopman form also does not have a feedforward term. 
\end{corollary}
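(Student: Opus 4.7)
The plan is to verify by direct substitution that the PITI form supplied by \Cref{lemma:LTI_OP1} is already a (degenerate) instance of the BLTI template \eqref{eq:BLTI_representation}, and then to track what happens when the feedthrough vanishes. First I would substitute the specific choices $\Px_1(z_1)\equiv \Bl_1$, $\Rx_1(\bar{u}_1)\equiv I_{n_{\mathrm{u},1}}$, $\Py_1(z_1)\equiv \Dl_1$, $\Ry_1(\bar{u}_1)\equiv I_{n_{\mathrm{u},1}}$ into \eqref{eq:PITI_representation}. Because all four functions are independent of $z_1$ and $\bar{u}_1$, the state and output equations collapse to $\dot{z}_1 = \Al_1 z_1 + \Bl_1 u_1$ and $y_1 = \Cl_1 z_1 + \Dl_1 u_1$, i.e., a plain LTI system.

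Next, I would match this LTI system coefficient-by-coefficient against the BLTI template \eqref{eq:BLTI_representation}. Setting $A = \Al_1$ and $C = \Cl_1$ takes care of the linear part. For every input channel $k \in \{1,\ldots,n_{\mathrm{u},1}\}$ I would take ${}_k\bar{B} = 0_{n_{\mathrm{x},1}\times n_{\mathrm{x},1}}$ and ${}_kB$ equal to the $k$-th column of $\Bl_1$, so that $\sum_k ({}_k\bar{B}\, z_1 + {}_kB)\, u_{1,k} = \Bl_1 u_1$; analogously ${}_k\bar{D} = 0_{n_{\mathrm{y},1}\times n_{\mathrm{x},1}}$ and ${}_kD$ the $k$-th column of $\Dl_1$ yield the desired output equation. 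Because the $z$-linear bilinear coefficients ${}_k\bar{B}$ and ${}_k\bar{D}$ are zero, every input term reduces to the purely constant-in-$z$ part of the BLTI form, establishing that the PITI produced by \Cref{lemma:LTI_OP1} is a valid BLTI Koopman representation.

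For the second claim, I would simply observe that imposing $\Dl_1 = 0$ forces $\Py_1(z_1)\equiv 0$ in the PITI, so the entire input-dependent part of the output vanishes. In the matched BLTI realization the ${}_k\bar{D}$ were already zero, and now ${}_kD = 0$ as well, so the output reduces to $y_1 = \Cl_1 z_1$ with no feedforward term.

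The only subtlety worth flagging is the bookkeeping step: one must recognize that a constant-in-$z_1$ factor $\Px_1 \equiv \Bl_1$ multiplying $u_1$ corresponds in \eqref{eq:BLTI_representation} to the constant input matrix $B$ (i.e., the $_kB$ pieces) rather than to the $z$-linear pieces $_k\bar{B}$, and symmetrically for the output. Beyond this elementary identification there is no real obstacle; the statement is essentially a sanity check that every LTI block is trivially both PITI and BLTI.
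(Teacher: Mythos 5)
Your proof is correct and matches the paper's argument essentially verbatim: both substitute the constant PITI factors, identify ${}_k\bar{B}=0$ with ${}_kB={}_k\Bl_1$ (and the analogous $D$ terms) to recover the BLTI template, and note that $\Py_1\equiv\Dl_1=0$ kills the feedthrough. No gaps.
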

\begin{proof}
	It is simple to see that $\Px _1(z_1) \equiv  \Bl_1$, $\Rx _1(\bar{u}_1) \equiv  I_{n_{\mathrm{u},1}}$ implies that $\Px _1(z_1)\Rx _1(\bar{u}_1)\equiv \Bl_1$ and with ${}_kB_1={}_k\Bl_1$ and ${}_k\bar{B}_1=0$, one obtains \eqref{eq:BLTI_representation:state}. The output equation \eqref{eq:BLTI_representation:output} similarly follows. Furthermore, $\Py _1(z_1)\Ry _1(\bar{u}_1)\equiv 0$ if $\Dl_1=0_{n_{\mathrm{y},1}\times n_{\mathrm{u},1}}$, as $\Py _1(\bar{u}_1)\equiv \Dl_1$. 
\end{proof}
\subsection{Embedding an SN in PITI}\label{sec:input_nonlinearity} Next, we discuss embedding of a static nonlinear block into 
 a PITI form.
According to \Cref{sec:nonlinear_block}, a nonlinear block $\Sigma^\mathrm{NL}_1$ is described as:
\begin{equation} \label{eq:snblock}
	y_{1} = f_1(u_{1}).
\end{equation}
  The embedding into a PITI representation as shown in \Cref{fig:f_to_PITI}, is done through the conversion of \eqref{eq:snblock} into a state-space representation. The first step is to write the following trivial decomposition of $f_1$:
  \begin{equation}\label{eq:decomp_f1_sn_to_piti}
      f_1(u_1) = f_1(0) + \underbrace{f_1(u_1)-f_1(0)}_{\bar{f}_1(u_1)},
  \end{equation}
which always holds. Next, we use the exact factorization detailed in Lemma 1 in \cite{Aut_Iacob_inputs}, giving: \vspace{-.2cm}
\begin{equation}\label{eq:tilde_f1_sn_to_piti}
    \bar{f}_1(u_1) = \underbrace{\left(\int^1_0\frac{\partial \bar{f}_1}{\partial u_1}(\lambda u_1)\dif \lambda\right)}_{\tilde{f}_1(u_1)}u_1. \vspace{-.1cm}
\end{equation}
The resulting $\tilde{f}_1$ is polynomial in $u_1$. We can now formulate the embedding lemma. 
\begin{figure}[ht!]
\begin{center}   
\includegraphics[scale=0.95]{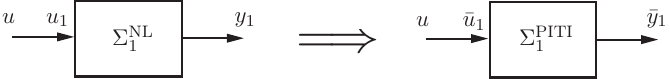}
\caption{Embedding a static nonlinear block in a PITI Koopman representation.\vspace{-.7cm}}
\label{fig:f_to_PITI}
\end{center}
\end{figure}
\begin{lemma}\label{lemma:SN_OP2}
    A static nonlinear block $\Sigma_1^\mathrm{NL}$ corresponding to \eqref{eq:snblock} that is decomposed as \eqref{eq:decomp_f1_sn_to_piti} with \eqref{eq:tilde_f1_sn_to_piti}, can be written in a PITI form \eqref{eq:PITI_representation}, with state $z_{1}\equiv 1\in\mathbb{R}^{n_{\mathrm{z},1}}$, $n_{\mathrm{z},1}=1$, input $\bar{u}_1=u_1$, output $\bar{y}_1=y_1$, $\Ap  _1 =0_{n_{\mathrm{z},1}\times n_{\mathrm{z},1} }$, $\Px _1(z_1)\equiv 0_{n_{\mathrm{z},1}\times n_{\mathrm{u},1} }$, $\Rx _1(\bar{u}_1) \equiv I_{n_{\mathrm{u},1}}$, $\Cp_1=f_1(0)$, $\Py _1(z_1) \equiv I_{n_{\mathrm{y},1}}$, $\Ry _1(\bar{u}_1) \equiv \tilde{f}_1(u_1)$.
\end{lemma}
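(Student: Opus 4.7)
The statement is essentially a bookkeeping result: since \eqref{eq:snblock} has no internal dynamics, a trivial scalar constant state $z_1 \equiv 1$ is introduced solely to carry the offset $f_1(0)$ through the constant output matrix $C_1$, while the nonlinear input--output action is absorbed into $\bar{R}_1(\bar{u}_1) = \tilde{f}_1(u_1)$. The plan is therefore to verify the claim by direct substitution of the proposed matrices and functions into the PITI template \eqref{eq:PITI_representation}.

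First, I would check the state equation. With $A_1 = 0$, $L_1(z_1) \equiv 0$, and $R_1(\bar{u}_1) \equiv I_{n_{\mathrm{u},1}}$, substitution into \eqref{eq:PITI_representation} yields $\dot{z}_1 = 0$, which is consistent with the choice $z_1 \equiv 1$; initialising $z_1(0) = 1$ therefore fixes $z_1 \equiv 1$ for all time, and the constant state is fully decoupled from the input.

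Next, I would verify the output equation. Substituting $C_1 = f_1(0)$, $\bar{L}_1(z_1) \equiv I_{n_{\mathrm{y},1}}$, and $\bar{R}_1(\bar{u}_1) \equiv \tilde{f}_1(u_1)$ into \eqref{eq:PITI_representation:output}, and then applying the identities \eqref{eq:decomp_f1_sn_to_piti} and \eqref{eq:tilde_f1_sn_to_piti}, I obtain $\bar{y}_1 = f_1(0)\cdot 1 + \tilde{f}_1(u_1) u_1 = f_1(0) + \bar{f}_1(u_1) = f_1(u_1)$, which matches the output of $\Sigma_1^{\mathrm{NL}}$.

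Finally, I would confirm the PITI structural conditions on the four maps, together with the dimensional bookkeeping $n_{\mathrm{z},1}=1$, $n_{\mathrm{r},1}=n_{\mathrm{u},1}$, $n_{\bar{\mathrm{r}},1}=n_{\mathrm{y},1}$. The functions $L_1 \equiv 0$ and $\bar{L}_1 \equiv I_{n_{\mathrm{y},1}}$ are trivially (degenerately) linear in $z_1$; $R_1 \equiv I_{n_{\mathrm{u},1}}$ is trivially polynomial in $\bar{u}_1$; and the only step with any content is showing that $\bar{R}_1 = \tilde{f}_1$ is genuinely polynomial in $u_1$. Since an SN block is defined with $f_1$ a polynomial vector function, $\bar{f}_1 = f_1 - f_1(0)$ is also polynomial and has no constant term, so the integral factorisation of Lemma 1 in \cite{Aut_Iacob_inputs} merely rescales each monomial coefficient by a positive rational and preserves polynomiality. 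This polynomiality check is the main, and essentially only, obstacle in the argument; everything else is substitution.
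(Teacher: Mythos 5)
Your proof is correct and follows the same route as the paper, which simply states that the result follows by substituting the given matrices and functions into \eqref{eq:PITI_representation}; you have merely spelled out that substitution, including the output-map identity $f_1(0)+\tilde{f}_1(u_1)u_1=f_1(u_1)$ and the polynomiality of $\tilde{f}_1$ (which the paper already notes after \eqref{eq:tilde_f1_sn_to_piti}). No gaps.
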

\begin{proof}
	By substitution of the above given matrices and functions into \eqref{eq:PITI_representation}, the result trivially follows. 
\end{proof}

\par Note that, while the SN block can be described as a Koopman PITI model, it cannot be simplified to a BLTI Koopman representation with no feedtrough due to the presence of a polynomial feedthrough of $u_1$. As we will see later, we can only guarantee that the BLTI property of the Koopman model will be preserved by follow-up block absorptions into it, if the previous operations resulted in a BLTI Koopman model without feedtrough. Because of this, if the first block of the block chain is a static nonlinearity, then the overall resulting PITI Koopman model from the embedding might not be reducible to a BLTI one. If it is necessary to obtain a bilinear representation, one can choose to circumvent this input nonlinearity by constructing a virtual input as $\tilde{u}_1=f_1(u_1)$, however, certain utilization of the resulting model, e.g., for control design, becomes more complicated. 

\subsection{Embedding PITI followed by LD into PITI} \label{sec:embedding:PITI:LD}
This subsection details the conversion of a series interconnection between a PITI block $\Sigma_{i-1}^\mathrm{PITI}$ and an LTI block $\Sigma_{i}^\mathrm{LTI}$  into a single PITI Koopman model $\Sigma_i^\mathrm{PITI}$ for $i>1$. The interconnection is represented in \Cref{fig:PITI_LTI_to_PITI}.
\begin{figure}[ht!]
\begin{center}   
\includegraphics[width=.95\textwidth]{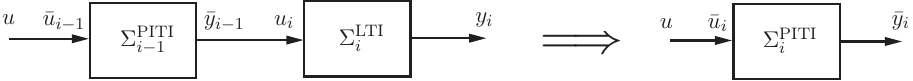}
\caption{Embedding the series interconnection of a PITI block and a linear dynamic block into a single PITI Koopman representation.\vspace{-.4cm}}
\label{fig:PITI_LTI_to_PITI}
\end{center}
\end{figure}
\FloatBarrier
The embedding is detailed in the following lemma.
\begin{lemma}\label{lemma:rule_PITI_LD}
	Series interconnection between a PITI block $\Sigma_{i-1}^\mathrm{PITI}$ and an LTI block $\Sigma_{i}^\mathrm{LTI}$  can be represented by an exact finite dimensional PITI Koopman representation $\Sigma_{i}^\mathrm{PITI}$  in  the form of \eqref{eq:PITI_representation} with state $z_i=[\begin{array}{ccc}
		z^\top_{i-1} & x^\top_i
	\end{array}]^\top$, input $\bar{u}_i=\bar{u}_{i-1}$, output $\bar{y}_i=y_i$, and 
\begin{subequations}\label{eq:PITI_LTI_matrices}
	\begin{align}
	\Ap  _i&= \begin{bmatrix}
		\Ap  _{i-1} & 0 \\ \Bl_i\Cp_{i-1} & \Al_i
	\end{bmatrix},& \Px _i(z_i) &\!=\! \begin{bmatrix}
		\Px _{i-1}(z_{i-1}) & 0 \\ 0 & \Bl_i\Py _{i-1}(z_{i-1})
	\end{bmatrix},& \Rx _i(\bar{u}_i)&\!=\!\begin{bmatrix}
		\Rx _{i-1}(\bar{u}_{i-1}) \\ \Ry _{i-1}(\bar{u}_{i-1})
	\end{bmatrix}\\
	\Cp_i &=\begin{bmatrix}
		\Dl_i\Cp_{i-1} & \Cl_i
	\end{bmatrix},& \Py _i(z_i)&\!=\! \Dl_i\Py _{i-1}(z_{i-1}),& \Ry _i(\bar{u}_i)&\!=\!\Ry _{i-1} (\bar{u}_{i-1}).
	\end{align}
\end{subequations}
\end{lemma}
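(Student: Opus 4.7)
The plan is a direct construction: substitute the series interconnection constraint $u_i = \bar{y}_{i-1}$ into the LTI dynamics \eqref{eq:LTI_block}, stack the states, and read off the matrices and polynomial maps that make the joint system fit the PITI template \eqref{eq:PITI_representation}. First I would write the state equation of $\Sigma_i^\mathrm{LTI}$ after substitution as
\begin{equation*}
\dot{x}_i = \Al_i x_i + \Bl_i\bigl(\Cp_{i-1} z_{i-1} + \Py_{i-1}(z_{i-1})\Ry_{i-1}(\bar u_{i-1})\bar u_{i-1}\bigr),
\end{equation*}
and likewise expand $y_i = \Cl_i x_i + \Dl_i \bar y_{i-1}$. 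Stacking $z_i = [z_{i-1}^\top\ x_i^\top]^\top$ and collecting the $z$-linear parts into one block matrix and the input-driven parts into another yields a state equation whose drift has exactly the block lower-triangular form claimed in \eqref{eq:PITI_LTI_matrices}.

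Next I would verify the input term. The stacked input contribution is
\begin{equation*}
\begin{bmatrix} \Px_{i-1}(z_{i-1})\Rx_{i-1}(\bar u_{i-1})\\ \Bl_i\Py_{i-1}(z_{i-1})\Ry_{i-1}(\bar u_{i-1})\end{bmatrix}\bar u_{i-1},
\end{equation*}
and the key algebraic observation is that this factors as $\Px_i(z_i)\Rx_i(\bar u_i)\bar u_i$ with $\Px_i$ and $\Rx_i$ as in \eqref{eq:PITI_LTI_matrices}; this is a block-diagonal/stack identity that one checks by multiplying out. The output equation is handled analogously: substituting $\bar y_{i-1}$ and regrouping yields $\Cp_i z_i + \Dl_i\Py_{i-1}(z_{i-1})\Ry_{i-1}(\bar u_{i-1})\bar u_{i-1}$, matching the claimed $\Cp_i$, $\Py_i$, $\Ry_i$.

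Finally, to conclude that the joint representation is a bona fide PITI model I have to verify the structural constraints imposed on \eqref{eq:PITI_representation}: that $\Px_i$ and $\Py_i$ are linear in the new lifted state $z_i$, and that $\Rx_i$, $\Ry_i$ are polynomial in $\bar u_i$. Linearity in $z_i$ is inherited because $\Px_{i-1}$ and $\Py_{i-1}$ are linear in $z_{i-1}$ by the PITI hypothesis on $\Sigma_{i-1}^\mathrm{PITI}$, and left-multiplication by the constant matrices $\Bl_i,\Dl_i$ preserves this linearity; each block of $\Px_i$ and $\Py_i$ depends only on $z_{i-1}$, which is itself a linear projection of $z_i$. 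Polynomiality of $\Rx_i$ and $\Ry_i$ in $\bar u_i = \bar u_{i-1}$ is immediate because stacking and copying preserves the polynomial class.

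I do not anticipate any real obstacle: the result is essentially a bookkeeping identity and the only thing that could go subtly wrong would be if $\Px_{i-1}$ depending linearly on $z_{i-1}$ failed to be linear in the enlarged $z_i$, which it does not since $z_{i-1}$ is a linear function of $z_i$. The mildly delicate point, worth stating explicitly in the final write-up, is therefore the inheritance of the structural (linearity/polynomiality) constraints rather than the matrix identities themselves.
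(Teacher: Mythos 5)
Your proposal is correct and follows essentially the same route as the paper: substitute the PITI output equation for $u_i=\bar y_{i-1}$ into the LTI block, stack the state as $z_i=[\,z_{i-1}^\top\ x_i^\top\,]^\top$, and observe that $\Px_i$ (and $\Py_i$) remain linear in $z_i$ because $\Px_{i-1}$ and $\Py_{i-1}$ are linear in $z_{i-1}$. The paper states this in one line; your version merely spells out the block-diagonal/stack factorization and the inheritance of the structural constraints, both of which check out.
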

\begin{proof}
	The proof follows by substituting \eqref{eq:PITI_representation:output} with output $\bar{y}_{i-1}$ into \eqref{eq:LTI_block:state} and \eqref{eq:LTI_block:output} under $u_i=\bar{y}_{i-1}$ and appending the state as $z_i=[\begin{array}{ccc}
		z^\top_{i-1} & x^\top_i
	\end{array}]^\top$. Note that, $\Px _i(z_i)$ is linear in $z_i$, because $\Px _{i-1}(z_{i-1})$ and $\Py _{i-1}(z_{i-1})$ are linear in $z_{i-1}$.
\end{proof}
\begin{corollary} \label{corr:rule_PITI_LD}
If the the PITI block $\Sigma_{i-1}^\mathrm{PITI}$  is bilinear, then the Koopman embedding  $\Sigma_{i}^\mathrm{PITI}$ resulting from \Cref{lemma:rule_PITI_LD} is bilinear and can be written in the form of \eqref{eq:BLTI_representation}. In case one of the blocks $\Sigma_{i-1}^\mathrm{PITI}$ or $\Sigma_{i}^\mathrm{LTI}$ has no feedtrough term, then $\Sigma_{i}^\mathrm{PITI}$ also has no feedtrough term, i.e., $\Py _{i}(z_{i})\Ry _{i}(\bar{u}_i)$ is zero.
\end{corollary}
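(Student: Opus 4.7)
The plan is to track how the bilinear structure of $\Sigma_{i-1}^{\mathrm{PITI}}$ propagates through the block-triangular matrices in \eqref{eq:PITI_LTI_matrices}. First I would substitute the definitions from \Cref{lemma:rule_PITI_LD} to rewrite the nonlinear input term of the composed system as
\begin{equation*}
\Px_i(z_i)\Rx_i(\bar{u}_i)\bar{u}_i = \begin{bmatrix} \Px_{i-1}(z_{i-1})\Rx_{i-1}(\bar{u}_{i-1})\bar{u}_{i-1} \\ \Bl_i\,\Py_{i-1}(z_{i-1})\Ry_{i-1}(\bar{u}_{i-1})\bar{u}_{i-1} \end{bmatrix},
\end{equation*}
and the output nonlinearity as $\Py_i(z_i)\Ry_i(\bar{u}_i)\bar{u}_i = \Dl_i\,\Py_{i-1}(z_{i-1})\Ry_{i-1}(\bar{u}_{i-1})\bar{u}_{i-1}$, using $\bar{u}_i=\bar{u}_{i-1}$ and the fact that $z_{i-1}$ is the upper subvector of $z_i$.

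Next, I would invoke the BLTI hypothesis on $\Sigma_{i-1}^{\mathrm{PITI}}$. By \eqref{eq:BLTI_representation}, both $\Px_{i-1}\Rx_{i-1}\bar{u}_{i-1}$ and $\Py_{i-1}\Ry_{i-1}\bar{u}_{i-1}$ are already bilinear in $(z_{i-1},\bar{u}_{i-1})$, with known coefficient matrices $({}_k\bar{B}_{i-1},{}_kB_{i-1})$ and $({}_k\bar{D}_{i-1},{}_kD_{i-1})$. Left-multiplication by the constant matrices $\Bl_i$ and $\Dl_i$ preserves bilinearity, so the stacked expressions above are bilinear in $(z_i,\bar{u}_i)$. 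The BLTI coefficients of $\Sigma_i^{\mathrm{PITI}}$ are then read off directly: $\Dl_i\,{}_k D_{i-1}$ and $\Dl_i\,{}_k\bar{D}_{i-1}$ play the role of the new output-side $D$ and $\bar{D}$ matrices, while the new state-side ${}_kB_i$ and ${}_k\bar{B}_i$ arise by vertically concatenating the coefficients of the two stacked rows, with appropriate zero padding against the fresh $x_i$-coordinates in $z_i$. This places $\Sigma_i^{\mathrm{PITI}}$ in the canonical form \eqref{eq:BLTI_representation}.

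For the feedthrough claim, I would observe from the same formula that $\Py_i(z_i)\Ry_i(\bar{u}_i) = \Dl_i\,\Py_{i-1}(z_{i-1})\Ry_{i-1}(\bar{u}_{i-1})$, which vanishes identically whenever either factor does: if $\Dl_i=0_{n_{\mr{y},i}\times n_{\mr{u},i}}$, i.e.\ $\Sigma_i^{\mathrm{LTI}}$ is strictly proper, or if $\Py_{i-1}\Ry_{i-1}\equiv 0$, i.e.\ $\Sigma_{i-1}^{\mathrm{PITI}}$ already has no feedthrough. Hence $\Sigma_i^{\mathrm{PITI}}$ inherits the absence of feedthrough from either constituent.

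I do not anticipate any serious obstacle here: the argument is driven entirely by the block-triangular shape of \eqref{eq:PITI_LTI_matrices}, which cleanly separates the dynamics-side contribution (which feeds through $\Bl_i$) from the output-side contribution (which feeds through $\Dl_i$). The only step requiring care is the bookkeeping when collecting the stacked coefficients into the canonical per-input-channel matrices ${}_k\bar{B}_i,\,{}_kB_i,\,{}_k\bar{D}_i,\,{}_kD_i$, but this is purely mechanical once the BLTI expansion of $\Sigma_{i-1}^{\mathrm{PITI}}$ has been fixed.
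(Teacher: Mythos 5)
Your proposal is correct and follows essentially the same route as the paper: both arguments exploit the block structure of \eqref{eq:PITI_LTI_matrices} to show that the composed input term is a stack of the bilinear terms of $\Sigma_{i-1}^{\mathrm{PITI}}$ (premultiplied by the constant matrices $\Bl_i$, $\Dl_i$), and both read off the feedthrough claim from $\Py_i(z_i)\Ry_i(\bar{u}_i)=\Dl_i\Py_{i-1}(z_{i-1})\Ry_{i-1}(\bar{u}_{i-1})$. The only cosmetic difference is that the paper phrases bilinearity via $\Rx_{i-1},\Ry_{i-1}$ being stacks of identity matrices, whereas you phrase it via the products being bilinear in $(z_{i-1},\bar{u}_{i-1})$ and then collect the coefficient matrices explicitly.
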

\begin{proof}
    It is trivial to see that, if, due to bilinearity, $\Rx _{i-1}(\bar{u}_{i-1})$ and $\Ry _{i-1}(\bar{u}_{i-1})$ are stacks of identity matrices, i.e.,  $\Rx _{i-1}(\bar{u}_{i-1}) \equiv I_{n_{\mathrm{u},i-1}}$ and $\Ry _{i-1}(\bar{u}_{i-1}) \equiv I_{n_{\mathrm{u},i-1}}$, then $\Rx _{i}(\bar{u}_i) \equiv [\begin{array}{cc} I_{n_{\mathrm{u},i-1}} & I_{n_{\mathrm{u},i-1}} \end{array}]^\top$ and $\Ry _{i}(\bar{u}_i) \equiv I_{n_{\mathrm{u},i-1}} $ proving bilinearity of the resulting Koopman form. In case, either $\Dl_{i}=0_{n_{\mathrm{y},i}\times n_{\mathrm{u},i}}$ or the relation $\Py _{i-1}(z_{i-1})\Ry _{i-1}(\bar{u}_{i-1}) \equiv 0_{n_{\mathrm{y},i-1} \times n_{\mathrm{u},i-1}}$, then the direct feedtrough term given by $\Py _{i}(z_{i})\Ry _{i}(\bar{u}_i)=\Dl_{i}\Py _{i-1}(z_{i-1})\Ry _{i-1}(\bar{u}_{i-1})$ is zero.
\end{proof}

\subsection{Embedding PITI followed by SN into PITI} \label{sec:embedding:PITI:SN} This subsection details the conversion of a series interconnection between a PITI block $\Sigma_{i-1}^\mathrm{PITI}$ and an SN block $\Sigma_{i}^\mathrm{NL}$  into a single PITI Koopman model $\Sigma_i^\mathrm{PITI}$ for $i>1$. The interconnection is represented in \Cref{fig:PITI_NL_to_PITI}.
\begin{figure}[ht!]
\begin{center}   
\includegraphics[width=.95\textwidth]{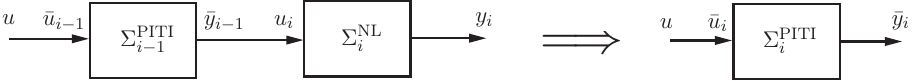}
\caption{Embedding the series interconnection of a PITI block and a static nonlinear block into a single PITI Koopman representation. 
}
\label{fig:PITI_NL_to_PITI}
\end{center}
\end{figure}
\FloatBarrier
As a result of the series interconnection we have 
\begin{subequations}\label{eq:PITI_f}
	\begin{align}
		\dot{z}_{i-1} &=\Ap  _{i-1} z_{i-1} + \Px _{i-1}(z_{i-1})\Rx _{i-1}(\bar{u}_{i-1})\bar{u}_{i-1},\\
		\bar{y}_{i-1} &= \Cp_{i-1} z_{i-1} + \Py _{i-1}(z_{i-1})\Ry _{i-1}(\bar{u}_{i-1})\bar{u}_{i-1},\\
		y_{i} &= f_i(\bar{y}_{i-1}) = W_ig_i(V_i^\top \bar{y}_{i-1})=[\ g_{i,1}(\underbrace{v_{i,1}\bar{y}_{i-1} }_{\sigma_{i,1}}) \ \cdots \ g_{i,r_i}(\underbrace{v_{i,r_i}\bar{y}_{i-1}}_{\sigma_{i,r_i}}) \ ]^\top . \label{eq:piti_f_output_nl}
	\end{align}
\end{subequations}
where, according to \Cref{sec:nonlinear_block}, $g_{i,e}:\mathbb{R}\rightarrow\mathbb{R}$ 
being scalar univariate polynomials, $v_{i,e}\in\mathbb{R}^{n_{\mathrm{y},i-1}}$ being the $e^\mathrm{th}$ row of $V_i^\top$ and $\sigma_i=V_i^\top \bar{y}_{i-1}$, while $z_{i,t}\in\mathbb{R}^{n_{\mathrm{z},i}}$. Furthermore,
\begin{equation}\label{eq:decomp_ge_piti_nl}
    g_{i,e}(\sigma_{i,e}) = \gamma_{i,e,0} + \gamma_{i,e,1}\sigma_{i,e} + \dots + \gamma_{i,e,p_i}\sigma^{p_i}_{i,e}
\end{equation}
with $\{\gamma_{i,e,m}\}^{p_i}_{m=1}\in\mathbb{R}$ and  $\sigma_{i,e}$ being the $e\textsuperscript{th}$ element of $\sigma_i$. Furthermore, let $x^{(\tau)}$ denote the $\tau^\mathrm{th}$ Kronecker power of a $x\in\mathbb{R}^{n_\mathrm{x}}$:
\begin{equation*}
x^{(0)} = 1,\quad x^{(1)} = x,\quad x^{(2)} = x \otimes x,\quad \ldots\quad x^{(\tau)} = \underbrace{x \otimes x \otimes \cdots \otimes x}_{\tau \text{\ times}}. 
\end{equation*}
Then, the embedding is detailed in the following lemma.
\begin{lemma}\label{lemma:rule_PITI_SN}
	The series interconnection between a PITI block $\Sigma_{i-1}^\mathrm{PITI}$ and an SN block $\Sigma_{i}^\mathrm{NL}$  can be represented by an exact finite dimensional PITI Koopman representation $\Sigma_{i}^\mathrm{PITI}$  in  the form of \eqref{eq:PITI_representation} with state $z_i=[\begin{array}{cccc}
		1 & z^\top_{i-1} & \cdots & (z^{(p_i)}_{i-1})^\top
	\end{array}]^\top$,   
    input $\bar{u}_i=\bar{u}_{i-1}$, output $\bar{y}_i=y_i$, and state equation defined by
    \begin{subequations}
	\begin{align}
	\Ap  _i &\!=\!\! \begin{bmatrix}
		0 & 0 & \cdots & 0 \\
		0 & \Ap  _{i-1} & \ddots & 0 \\
		\vdots & \ddots & \ddots & 0 \\
		0 & \cdots &0  & {}^{p_i}\Ap  _{i-1}
	\end{bmatrix}\!, & \!\!\!
	\Px _i(z_i) &\!=\!\! \begin{bmatrix}
		0_{1\times n_{\mr z , i-1}}
        \\ I_{{n_{\mr z,{i-1}}}} \\ \frac{\partial z^{(2)}_{i-1}}{z_{i-1}} \\ \vdots \\ \frac{\partial z^{(p_i)}_{i-1}}{\partial z_{i-1}}
	\end{bmatrix}\!\Px _{i-1}(z_{i-1}), &\!\!\! \Rx _i(\bar{u}_i) &\!\equiv\! \Rx _{i-1}(\bar{u}_{i-1}),
\end{align}
where  ${}^{\tau}A_{i-1}=\sum^{\tau-1}_{k=0}I_{n_{\mathrm{z},i-1}}^{(k)} \otimes  A_{i-1} \otimes I_{n_{\mathrm{z},i-1}}^{(\tau-k-1)}$, with $\tau\in\{2,\dots,p_i\}$, and $\frac{\partial z^{(\tau)}_{i-1}}{\partial z_{i-1}}$ is defined in terms of \Cref{lemma:appendix_lemma_1} in \Cref{app:A2}.
The output equation is defined by
\begin{align}
\Cp_i &= W_i\Gamma_i, &
	\Py _i(z_i) &= W_i\begin{bmatrix}
		\Py _{i,1}(z_{i-1}) \\
		\vdots \\
		\Py _{i,r_i}(z_{i-1})
	\end{bmatrix} & \Ry _i(\bar{u}_i) &= \begin{bmatrix}
		\Ry _{i-1}(\bar{u}_{i-1}) \\
		\Ry _{i-1}(\bar{u}_{i-1}) \\
		\Ry _{i-1}^{(2)}(\bar{u}_{i-1})\left(I_{n_{\mathrm{u},{i-1}}}\otimes \bar{u}_{i-1}\right)\\
		\Ry _{i-1}(\bar{u}_{i-1}) \\
		\Ry _{i-1}^{(2)}(\bar{u}_{i-1})\left(I_{n_{\mathrm{u},{i-1}}}\otimes \bar{u}_{i-1}\right)\\
		\Ry _{i-1}^{(3)}(\bar{u}_{i-1})\left(I_{n_{\mathrm{u},{i-1}}}\otimes \bar{u}_{i-1}^{(2)}\right)\\
		\vdots \\
		\Ry _{i-1}^{(p_i)}(\bar{u}_{i-1})\left(I_{n_{\mathrm{u},{i-1}}}\otimes \bar{u}_{i-1}^{(p_i-1)}\right)
	\end{bmatrix}
\end{align}
\end{subequations}
with \begin{align}
 \Gamma_i = \begin{bmatrix}
		\Gamma^\top_{i,1} & \dots & \Gamma^\top_{i, r_i}
	\end{bmatrix}^\top  \text{ and }\; \Gamma_{i,e}=\begin{bmatrix}
		\gamma_{i,e,0} \; \gamma_{i,e,1} \tilde{v}_{i,e} \; \gamma_{i,e,2}\tilde{v}^{(2)}_{i,e} \; \cdots \; \gamma_{i,e,p_i}\tilde{v}^{(p_i)}_e
	\end{bmatrix} ,
\end{align}
\begin{multline} \label{eq:piti_nl_barL}
	\Py _{i,e}(z_{i-1})=\left[ \begin{matrix}
		\gamma_{i,e,1}\tilde{v}^{(0)}_{i,e}z^{(0)}_{i-1}v^{(1)}_{i,e} \Py _{i-1}^{(1)}(z_{i-1}) & 2\gamma_{i,e,2}\tilde{v}^{(1)}_{i,e}z^{(1)}_{i-1}v^{(1)}_{i,e}\Py _{i-1}^{(1)}(z_{i-1}) \end{matrix} \right. \\
		 \left. \begin{matrix}\gamma_{i,e,2}\tilde{v}^{(0)}_{i,e}z^{(0)}_{i-1} v^{(2)}_{i,e}\Py _{i-1}^{(2)}(z_{i-1}) & \dots & \gamma_{i,e,p_i}\tilde{v}^{(0)}_{i,e} z^{(0)}_{i-1}v^{(p_i)}_{i,e}\Py _{i-1}^{(p_i)}(z_{i-1}) \end{matrix} \right],
\end{multline}
where $v_{i,e}$ is the $e\textsuperscript{th}$ row of $V^\top_i$, and $\tilde{v}_{i,e}=v_{i,e}\Cp_{i-1}$.
\end{lemma}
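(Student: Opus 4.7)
The plan is to verify by direct substitution that the matrices and functions prescribed in the statement assemble into the PITI form \eqref{eq:PITI_representation} representing the series interconnection of $\Sigma_{i-1}^{\mr{PITI}}$ and $\Sigma_i^{\mr{NL}}$. I would treat the state and output equations separately, since they decouple, and conclude with a structural check: $\Ap_i$ constant, $\Px_i(z_i)$ and $\Py_i(z_i)$ linear in $z_i$, and $\Rx_i,\Ry_i$ polynomial in $\bar{u}_i$.

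For the state equation the key observation is that the augmented state stacks the Kronecker powers $z_{i-1}^{(\tau)}$ for $\tau=0,1,\ldots,p_i$, so it suffices to differentiate each block separately. The $\tau=0$ block is constant and produces the first zero row of $\Ap_i$ and $\Px_i$; the $\tau=1$ block simply reproduces the dynamics of $\Sigma_{i-1}^{\mr{PITI}}$; and for $\tau\geq 2$ I would apply the Kronecker chain rule $\tfrac{d}{dt} z_{i-1}^{(\tau)}=\tfrac{\partial z_{i-1}^{(\tau)}}{\partial z_{i-1}}\dot{z}_{i-1}$ from \Cref{lemma:appendix_lemma_1}. Substituting the PITI dynamics of $\Sigma_{i-1}^{\mr{PITI}}$ and using the standard Leibniz-type identity $\tfrac{\partial z^{(\tau)}}{\partial z}(Az)={}^{\tau}A\,z^{(\tau)}$, which one checks by induction from $z^{(\tau)}=z\otimes z^{(\tau-1)}$, separates the drift, which collects into the block-diagonal $\Ap_i$, from the input-driven contribution $\tfrac{\partial z_{i-1}^{(\tau)}}{\partial z_{i-1}}\Px_{i-1}(z_{i-1})\Rx_{i-1}(\bar{u}_{i-1})\bar{u}_{i-1}$, which stacks into $\Px_i(z_i)\Rx_i(\bar{u}_i)\bar{u}_i$ with $\Rx_i\equiv\Rx_{i-1}$ and $\Px_i$ exactly as prescribed. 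Linearity of $\Px_i(z_i)$ in $z_i$ is then automatic, because every entry of $\tfrac{\partial z_{i-1}^{(\tau)}}{\partial z_{i-1}}$ is a Kronecker monomial of $z_{i-1}$ of degree $\tau-1\leq p_i-1$, hence coincides with an entry of $z_i$.

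For the output equation, I would substitute $\bar{y}_{i-1}=\Cp_{i-1}z_{i-1}+\Py_{i-1}(z_{i-1})\Ry_{i-1}(\bar{u}_{i-1})\bar{u}_{i-1}$ into \eqref{eq:piti_f_output_nl}, so that each scalar argument becomes $\sigma_{i,e}=\tilde{v}_{i,e}z_{i-1}+v_{i,e}\Py_{i-1}(z_{i-1})\Ry_{i-1}(\bar{u}_{i-1})\bar{u}_{i-1}$. Then I would expand $\sigma_{i,e}^m$ by the scalar binomial theorem into a pure-state term $(\tilde{v}_{i,e}z_{i-1})^m=\tilde{v}_{i,e}^{(m)}z_{i-1}^{(m)}$ plus mixed terms $\binom{m}{k}(\tilde{v}_{i,e}z_{i-1})^{m-k}\bigl(v_{i,e}\Py_{i-1}\Ry_{i-1}\bar{u}_{i-1}\bigr)^k$ for $k=1,\ldots,m$. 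Using the scalar-to-Kronecker identity $(v_{i,e}\Py_{i-1}\Ry_{i-1}\bar{u}_{i-1})^k=v_{i,e}^{(k)}\Py_{i-1}^{(k)}\Ry_{i-1}^{(k)}\bar{u}_{i-1}^{(k)}$ together with the factorization $\bar{u}_{i-1}^{(k)}=(I_{n_{\mr{u},i-1}}\otimes\bar{u}_{i-1}^{(k-1)})\bar{u}_{i-1}$, each mixed term splits cleanly into a $z_{i-1}$-dependent factor, a $\bar{u}_{i-1}$-dependent factor, and a trailing $\bar{u}_{i-1}$. The pure-state parts, summed over $m$ with weights $\gamma_{i,e,m}$ and stacked over $e$, produce $\Gamma_{i,e}$ and hence $\Cp_i z_i=W_i\Gamma_i z_i$, while the mixed parts, weighted by $\gamma_{i,e,m}\binom{m}{k}$, are stacked in the lexicographic order $(m,k)=(1,1),(2,1),(2,2),(3,1),\ldots,(p_i,p_i)$ to match exactly the prescribed rows of $\Py_{i,e}(z_{i-1})$ in \eqref{eq:piti_nl_barL} and the corresponding block-rows of $\Ry_i(\bar{u}_i)$.

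The main obstacle will be the combinatorial bookkeeping in the output equation: one must verify that each $(m,k)$ binomial contribution lands in exactly the row of $\Py_{i,e}$ and $\Ry_i$ that the statement prescribes, with the correct $\binom{m}{k}$ multiplicity absorbed into $\Py_{i,e}$, and that pulling a single $\bar{u}_{i-1}$ to the right via $(I\otimes\bar{u}_{i-1}^{(k-1)})$ produces $\Ry_i(\bar{u}_i)\bar{u}_i$ and not $\Ry_i(\bar{u}_i)$ alone. Once this is verified, the remaining structural check is immediate: $\Ap_i$ is constant, $\Px_i(z_i)$ and $\Py_i(z_i)$ are linear in $z_i$ since all $z_{i-1}$-dependence enters through $z_{i-1}^{(0)},\ldots,z_{i-1}^{(p_i)}$, and $\Rx_i,\Ry_i$ are polynomial in $\bar{u}_i$ because $\Rx_{i-1},\Ry_{i-1}$ are.
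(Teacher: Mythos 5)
Your proposal follows essentially the same route as the paper's proof in Appendix A.4: binomial expansion of $\sigma_{i,e}=\tilde{v}_{i,e}z_{i-1}+\hat{v}_{i,e}\bar{u}_{i-1}$ with the Kronecker identities $(AB)^{(k)}=A^{(k)}B^{(k)}$ and $\bar{u}^{(k)}=(I\otimes\bar{u}^{(k-1)})\bar{u}$ to split pure-state terms into $\Gamma_{i,e}$ and mixed $(m,k)$ terms into $\Py_{i,e}$ and $\Ry_i$, together with the Kronecker chain rule and \Cref{lemma:appendix_lemma_1,lemma:appendix_lemma_2,lemma:appendix_lemma_3} for the block-diagonal state dynamics. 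The plan is correct, including the final linearity check that products of the Jacobian blocks with the $z_{i-1}$-linear factors $\Px_{i-1}$ and $\Py_{i-1}^{(k)}$ stay within the monomials $z_{i-1}^{(0)},\ldots,z_{i-1}^{(p_i)}$ collected in $z_i$.
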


\begin{proof}
	The proof is given in Appendix \ref{apx:PITI_f_to_PITI}.
\end{proof}
 Not that if the PITI block $\Sigma_{i-1}^\mathrm{PITI}$  is bilinear then the Koopman embedding $\Sigma_{i}^\mathrm{PITI}$ resulting from \Cref{lemma:rule_PITI_LD} is generally not guaranteed to be bilinear. However, a zero feedtrough term allows to preserve bilinearity:
\begin{corollary}\label{col:piti_f_bilinear}  In case $\Sigma_{i-1}^\mathrm{PITI}$  is bilinear and has no feedtrough, then $\Sigma_{i}^\mathrm{PITI}$ is guaranteed to be bilinear without a feedtrough term, i.e., $\Py _{i}(z_{i})\Ry _{i}(\bar{u}_i)$ is zero.
\end{corollary}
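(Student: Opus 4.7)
The plan is to substitute the two hypotheses directly into the formulas of \Cref{lemma:rule_PITI_SN} and observe that the resulting PITI matrices collapse into the BLTI shape of \eqref{eq:BLTI_representation}. Bilinearity of $\Sigma_{i-1}^{\mathrm{PITI}}$ gives $\Rx_{i-1}(\bar{u}_{i-1}) \equiv I_{n_{\mathrm{u},i-1}}$ and $\Ry_{i-1}(\bar{u}_{i-1}) \equiv I_{n_{\mathrm{u},i-1}}$, and forces $\Px_{i-1}(z_{i-1})$ to be affine in $z_{i-1}$. The zero-feedthrough hypothesis $\Py_{i-1}(z_{i-1})\Ry_{i-1}(\bar{u}_{i-1}) \equiv 0$, combined with $\Ry_{i-1} \equiv I$, immediately yields $\Py_{i-1}(z_{i-1}) \equiv 0$. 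These three reductions will drive the entire argument.

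For the state equation, $\Rx_i(\bar{u}_i) \equiv \Rx_{i-1}(\bar{u}_{i-1}) \equiv I_{n_{\mathrm{u},i-1}}$ is constant, so the input-side factor is already of BLTI type. I would then verify that $\Px_i(z_i)$ is affine in the lifted state $z_i = [\, 1,\; z_{i-1}^\top,\; (z_{i-1}^{(2)})^\top,\; \ldots,\; (z_{i-1}^{(p_i)})^\top \,]^\top$. The block-row structure of $\Px_i(z_i)$ in \Cref{lemma:rule_PITI_SN} multiplies $\Px_{i-1}(z_{i-1})$ (affine in $z_{i-1}$) by the Kronecker-derivative blocks $\partial z_{i-1}^{(\tau)}/\partial z_{i-1}$, whose entries are polynomials in $z_{i-1}$ of degree $\tau - 1$ by \Cref{lemma:appendix_lemma_1}. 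Hence each entry of $\Px_i(z_i)$ is polynomial in $z_{i-1}$ of total degree at most $p_i$, and since $z_i$ by construction contains a spanning set of monomials of $z_{i-1}$ up to degree $p_i$, every such entry can be rewritten as an affine combination of entries of $z_i$. This delivers the desired BLTI state equation.

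For the output equation, the decisive observation is that every column of $\Py_{i,e}(z_{i-1})$ in \eqref{eq:piti_nl_barL} contains an explicit factor of the form $\Py_{i-1}^{(\ell)}(z_{i-1})$ with $\ell \geq 1$. Because $\Py_{i-1}(z_{i-1}) \equiv 0$, every Kronecker power $\Py_{i-1}^{(\ell)}$ vanishes identically, so $\Py_{i,e}(z_{i-1}) \equiv 0$ for each $e \in \{1, \ldots, r_i\}$, and therefore $\Py_i(z_i) \equiv 0$. This gives $\Py_i(z_i)\Ry_i(\bar{u}_i) \equiv 0$ regardless of the polynomial structure of $\Ry_i(\bar{u}_i)$, establishing the zero-feedthrough claim. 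The output map then collapses to $\bar{y}_i = \Cp_i z_i$, which trivially fits \eqref{eq:BLTI_representation:output} with vanishing $\bar{D}_j$ and $D$.

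The \emph{main obstacle} is the state-equation step: one must argue carefully, using the Kronecker-product identities of \Cref{lemma:appendix_lemma_1}, that the product of the derivative blocks with $\Px_{i-1}(z_{i-1})$ produces no monomial of degree strictly greater than $p_i$, and that the resulting polynomials can be repackaged as an affine linear combination of the entries of $z_i$ without spurious cross-degree coupling. The output-side argument is comparatively routine, as the hypothesis $\Py_{i-1} \equiv 0$ annihilates every relevant term at once.
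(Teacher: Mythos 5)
Your proposal follows essentially the same route as the paper's proof in Appendix~C.5: the state-side argument (constant $\Rx_i\equiv\Rx_{i-1}$, linearity of $\Px_i(z_i)=J_{p_i}(z_{i-1})\Px_{i-1}(z_{i-1})$ in the augmented state via the degree-counting you describe) is exactly what the paper does, and your ``main obstacle'' is precisely the point the paper resolves by the explicit bilinear blocks ${}^j_k\bar{B}_{i-1}$, ${}^j_k B_{i-1}$ in \eqref{eq:B_matrix_blti_nl_proof} using \Cref{lemma:appendix_lemma_1,lemma:appendix_lemma_2,lemma:appendix_lemma_3,lemma:appendix_lemma_partial_B}. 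The one place you deviate is the output side, and there your argument is subtly weaker than the paper's. You infer $\Py_{i-1}\equiv 0$ from $\Py_{i-1}\Ry_{i-1}\equiv 0$ by asserting $\Ry_{i-1}\equiv I$; but in this framework ``bilinear'' only guarantees that $\Ry_{i-1}$ is a \emph{stack} of identity matrices (see, e.g., \Cref{corr:OJ_OP7}, where after an output junction $\Ry_i$ is $[\,I;\cdots;I\,]$), and for a proper stack $\Py_{i-1}\Ry_{i-1}=0$ only forces the block columns of $\Py_{i-1}$ to sum to zero, not to vanish individually. Consequently the individual factors $\Py_{i-1}^{(\ell)}$ appearing in \eqref{eq:piti_nl_barL} need not vanish under the corollary's literal hypothesis. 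The paper avoids this by never isolating $\Py_i$: it observes that every input-related term in \eqref{eq:expanded_g_ie_decomp} carries the paired factor $\Py_{i-1}^{(k)}(z_{i-1})\Ry_{i-1}^{(k)}(\bar{u}_{i-1})=\bigl(\Py_{i-1}\Ry_{i-1}\bigr)^{(k)}=0$, so the \emph{product} $\Py_i(z_i)\Ry_i(\bar{u}_i)$ vanishes term by term. Your conclusion is correct and the repair is one line (replace ``$\Py_i\equiv 0$'' by ``$\Py_i\Ry_i\equiv 0$ via the Kronecker power of the zero product''), but as written the output step rests on a premise stronger than what bilinearity without feedthrough provides.
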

\begin{proof}
	The proof is given in Appendix \ref{apx:PITI_f_to_BLTI}.
\end{proof}

\subsection{Embedding PITI followed by IJ into PITIs} \label{sec:inputjunc}
As a next step, we define the inclusion of an input junction with $m$ branches to a PITI block by repeating the same PITI block $m$-times according to \Cref{fig:PITI_IJ}.
\begin{figure}[ht!]
\begin{center} 
\includegraphics[width=.95\textwidth]{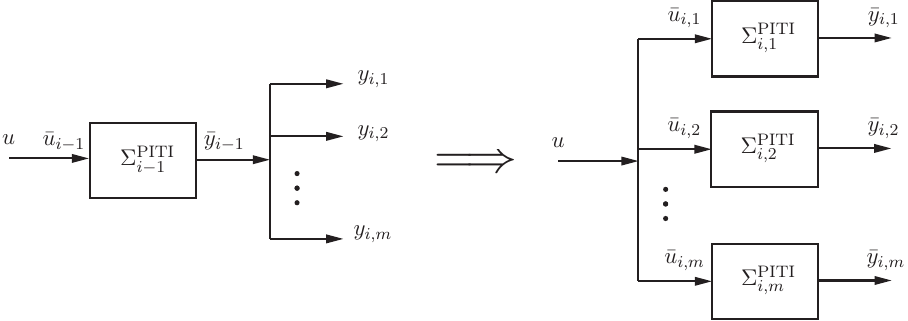}
\caption{Embedding a PITI block followed by a junction into a parallel connection of PITI Koopman representations.}
\label{fig:PITI_IJ}
\end{center}
\end{figure}
\FloatBarrier
\begin{lemma}\label{lemma:rule_piti_ij}
	A PITI block $\Sigma_{i-1}^\mathrm{PITI}$ followed by a junction with $m$ branches according to \eqref{eq:operation_3} can be represented by a set of exact finite-dimensional PITI Koopman representations $\{\Sigma_{i,j}^\mathrm{PITI}\}_{j=1}^m$, each in the form of \eqref{eq:PITI_representation} with states $\{z_{i,j}\equiv z_i\}_{j=1}^m$, 
    inputs $\{\bar{u}_{i,j}\equiv \bar{u}_{i-1}\}_{j=1}^m$, and outputs $\{y_{i,j}\equiv \bar{y}_{i,j} \equiv \bar{y}_{i-1}\}_{j=1}^m$ and each having exactly the same $\Ap_{i,j},\Cp_{i,j}, \Px_{i,j}, \Rx_{i,j},\Py_{i,j}, \Ry_{i,j}.$	
\end{lemma}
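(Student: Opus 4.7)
The plan is straightforward: since the input junction \eqref{eq:operation_3} simply branches the identical signal $y_{i-1}=\bar{y}_{i-1}$ into $m$ copies, we can represent each branch by an independent copy of the same PITI dynamics driving $\Sigma_{i-1}^{\mathrm{PITI}}$. The key observation is that parallel PITI systems sharing the same initial condition, the same input signal, and the same system matrices produce the same state and output trajectories by uniqueness of solutions to the underlying ODEs.

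First, I would formally define, for each $j\in\{1,\ldots,m\}$, the block $\Sigma_{i,j}^{\mathrm{PITI}}$ by copying every matrix and function from $\Sigma_{i-1}^{\mathrm{PITI}}$, namely $\Ap_{i,j}=\Ap_{i-1}$, $\Cp_{i,j}=\Cp_{i-1}$, $\Px_{i,j}(\cdot)=\Px_{i-1}(\cdot)$, $\Py_{i,j}(\cdot)=\Py_{i-1}(\cdot)$, $\Rx_{i,j}(\cdot)=\Rx_{i-1}(\cdot)$, $\Ry_{i,j}(\cdot)=\Ry_{i-1}(\cdot)$, together with the assignments $z_{i,j}\equiv z_{i-1}$ and $\bar{u}_{i,j}\equiv \bar{u}_{i-1}$. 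Substituting these into \eqref{eq:PITI_representation} then yields $m$ syntactically identical PITI realizations, which immediately gives the structural statement of the lemma.

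Next, I would invoke uniqueness of solutions of the governing ODEs (guaranteed since $\Ap_{i-1}z+\Px_{i-1}(z)\Rx_{i-1}(\bar{u})\bar{u}$ is locally Lipschitz in $z$ because $\Px_{i-1}$ is linear and $\Rx_{i-1}$ is polynomial, and each branch is driven by the same $\bar{u}_{i-1}$ from the same initial state) to conclude $z_{i,j}(t)=z_{i-1}(t)$ and hence $\bar{y}_{i,j}(t)=\Cp_{i-1}z_{i-1}(t)+\Py_{i-1}(z_{i-1}(t))\Ry_{i-1}(\bar{u}_{i-1}(t))\bar{u}_{i-1}(t)=\bar{y}_{i-1}(t)$ for all $t\ge 0$ and all $j\in\{1,\ldots,m\}$. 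This matches exactly the branching specification \eqref{eq:operation_3}, completing the representation.

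There is really no substantive obstacle in this argument: it is essentially the observation that identical systems with identical inputs and initial conditions produce identical outputs. The only conceptual point worth flagging is that the total lifted state dimension grows by a factor of $m$ (one copy of $z_{i-1}$ per branch), which is an intentional redundancy — it permits subsequent blocks attached to different branches to be absorbed independently through \Cref{lemma:rule_PITI_LD} and \Cref{lemma:rule_PITI_SN}, with the later output junction step re-merging these parallel realizations into a single PITI model.
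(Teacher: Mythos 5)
Your proposal is correct and takes essentially the same approach as the paper, which simply observes that translating the PITI block past the junction and copying it onto each branch preserves the dynamics; your added appeal to uniqueness of ODE solutions just makes that one-line observation explicit.
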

\begin{proof}
	It is trivial to see that translating the PITI block after the junction and copying it on each branch maintains the system dynamics, where the outputs are $\{\bar{y}_{i,j}\equiv \bar{y}_{i-1}\}_{j=1}^m$.
\end{proof}
\begin{corollary}\label{corr:rule_piti_ij}
If  $\Sigma_{i-1}^\mathrm{PITI}$ is bilinear, then the Koopman embeddings $\{\Sigma_{i,j}^\mathrm{PITI}\}_{j=1}^m$ resulting from \Cref{lemma:rule_piti_ij} are bilinear and each can be written in the form of \eqref{eq:BLTI_representation}. If $\Sigma_{i-1}^\mathrm{PITI}$ has no feedthrough term, then all $\{\Sigma_{i,j}^\mathrm{PITI}\}_{j=1}^m$ have no feedthrough term, that is, $\Py _{i,j}(z_{i,j})\Ry _{i,j}(\bar{u}_{i,j})=0$.
\end{corollary}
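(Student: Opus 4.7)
The plan is a one-line argument that rides on the structure already established in \Cref{lemma:rule_piti_ij}. That lemma shows that each $\Sigma_{i,j}^{\mathrm{PITI}}$ is obtained by translating $\Sigma_{i-1}^{\mathrm{PITI}}$ past the junction and replicating it identically on each of the $m$ branches, so that $(\Ap_{i,j},\Cp_{i,j},\Px_{i,j},\Rx_{i,j},\Py_{i,j},\Ry_{i,j})$ coincide with $(\Ap_{i-1},\Cp_{i-1},\Px_{i-1},\Rx_{i-1},\Py_{i-1},\Ry_{i-1})$, with $z_{i,j}=z_{i-1}$ and $\bar{u}_{i,j}=\bar{u}_{i-1}$. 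Since both bilinearity and the vanishing-feedthrough property are expressible purely through these defining matrices and functions of the PITI form \eqref{eq:PITI_representation}, they transfer to each branch by inspection.

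Concretely, for the bilinearity claim I would recall from the discussion around \eqref{eq:koop_bilinear_form_in_u}--\eqref{eq:BLTI_representation} that bilinearity of $\Sigma_{i-1}^{\mathrm{PITI}}$ amounts to $\Rx_{i-1}(\bar{u}_{i-1})$ and $\Ry_{i-1}(\bar{u}_{i-1})$ being constant (identity-stacked) matrices together with $\Px_{i-1}(z_{i-1})$, $\Py_{i-1}(z_{i-1})$ being linear in $z_{i-1}$, so that $\Px_{i-1}(z_{i-1})\Rx_{i-1}(\bar{u}_{i-1})\bar{u}_{i-1}$ can be rearranged into the $\sum_{k=1}^{n_{\mathrm{u}}}({}_k\bar{B}z+{}_kB)u_k$ form of \eqref{eq:BLTI_representation:state}, and analogously for the output. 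Because the defining tuple of each $\Sigma_{i,j}^{\mathrm{PITI}}$ equals that of $\Sigma_{i-1}^{\mathrm{PITI}}$, reading off the corresponding $A$, $C$, $\bar{B}_j$, $B$, $\bar{D}_j$, $D$ matrices produces the same BLTI representation for every branch.

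For the feedthrough claim, identity of the defining matrices yields
\begin{equation*}
\Py_{i,j}(z_{i,j})\Ry_{i,j}(\bar{u}_{i,j}) \;=\; \Py_{i-1}(z_{i-1})\Ry_{i-1}(\bar{u}_{i-1}),
\end{equation*}
so the hypothesis $\Py_{i-1}(z_{i-1})\Ry_{i-1}(\bar{u}_{i-1})\equiv 0$ transfers verbatim to every $j\in\{1,\ldots,m\}$. The main ``obstacle'' is essentially absent: there is no new algebra to perform beyond the trivial duplication used to prove \Cref{lemma:rule_piti_ij}. The only point worth flagging explicitly is that both the BLTI simplification criterion and the zero-feedthrough condition are structural properties of the PITI tuple, and therefore commute with the branch-duplication operation that defines the input junction embedding.
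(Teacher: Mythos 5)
Your proposal is correct and takes essentially the same approach as the paper, which simply notes that the claim is trivial because \Cref{lemma:rule_piti_ij} defines each $\Sigma_{i,j}^{\mathrm{PITI}}$ by verbatim duplication of the tuple of $\Sigma_{i-1}^{\mathrm{PITI}}$. Your explicit observation that bilinearity and zero feedthrough are structural properties of the PITI tuple and hence commute with branch duplication is exactly the reasoning the paper leaves implicit.
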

\begin{proof}
	The proof is trivial and follows the same reasoning as Lemma \ref{lemma:rule_piti_ij}.
\end{proof}
Note that if the block chain starts with an input junction, then $\{\bar{u}_{1,j}\equiv u\}_{j=1}^m$ and each branch is initialized according to \Cref{lemma:LTI_OP1} if the next element in the branch is LD or \Cref{lemma:SN_OP2} if the next element in the branch is SN. In case of another input junction in one of the branches, the same operation is repeated.

\subsection{Embedding PITIs followed by OJ into PITI}\label{sec:embedding:OJ}  Next, we define the inclusion of an output junction with $m$ branches, each with a PITI block, into a single PITI block, according to figure \Cref{fig:PITI_OJ}.
\begin{figure}[ht!]
\begin{center}   
\includegraphics[width=.95\textwidth]{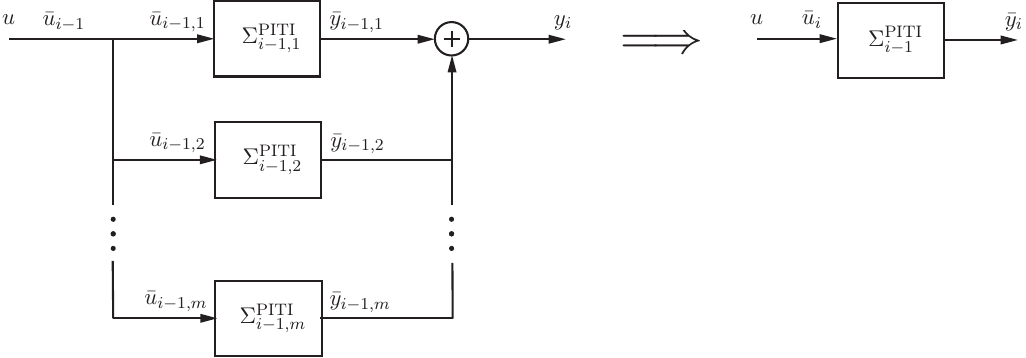}
\caption{Embedding of $m$ paralel branches of PITI blocks via an output junction into a single PITI Koopman representation.}
\label{fig:PITI_OJ}
\end{center}
\end{figure}
\begin{lemma} \label{lemma:rule_PITI_OJ}
The bundle of $m$ parallel branches of PITI blocks $\{\Sigma_{i-1,j}^\mathrm{PITI}\}_{j=1}^m$, sharing the same input, i.e., $\{\bar{u}_{i-1} \equiv \bar{u}_{i-1,j}\}_{j=1}^{m}$, followed by an output junction joining the branches, can be represented by an exact finite dimensional PITI Koopman representation $\Sigma_{i}^\mathrm{PITI}$ in  the form of \eqref{eq:PITI_representation} with state $z_{i} = [\begin{array}{ccc}z_{i-1,1}^\top & \cdots & z_{i-1,m}^\top \end{array} ]^\top $, 
    input $\bar{u}_{i}\equiv \bar{u}_{i-1}$, and output $\bar{y}_{i}\equiv \sum_{j=1}^{m} \bar{y}_{i-1,j}$ and with	
	\begin{subequations}
		\begin{align}
			\Ap  _{i} &=\begin{bmatrix}			\Ap_{i-1,1} & & \\ & \ddots &\\& & \Ap_{i-1,m} 
		\end{bmatrix}, & \Cp_{i} &= \begin{bmatrix}
			\Cp_{i-1,1} &  \cdots & \Cp_{i-1,m}
		\end{bmatrix}, \\
        \Px_i (z_{i}) &\equiv \begin{bmatrix}			\Px _{i-1,1}(z_{i-1,1}) & &\\ & \ddots &\\ & & \Px _{i-1,m}(z_{i-1,m})
		\end{bmatrix}, & \Rx_i (\bar{u}_i) &= \begin{bmatrix}
			\Rx _{i-1,1} (\bar{u}_{i-1}) \\ \vdots \\ \Rx_{i-1,m} (\bar{u}_{i-1})
		\end{bmatrix},\label{eq:OJ:PITI:2} \\
		  \Py _i (z_{i}) &\equiv \begin{bmatrix}
			\Py_{i-1,1}(z_{i-1,1}) & \cdots & \Py _{i-1,m}(z_{i-1,m})
		\end{bmatrix}, & \Ry _i (\bar{u}_i) & \equiv \begin{bmatrix}
			\Ry _{i-1,1} (\bar{u}_{i-1}) \\ \vdots \\ \Ry _{i-1,m} (\bar{u}_{i-1})
		\end{bmatrix}. \label{eq:OJ:PITI:3}
		\end{align}
	\end{subequations}
\end{lemma}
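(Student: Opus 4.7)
The verification is structural, so the plan is a direct substitute-and-check argument in the same style as \Cref{lemma:rule_PITI_LD,lemma:rule_piti_ij}. First I would write each branch in the form \eqref{eq:PITI_representation}, noting that $\bar{u}_{i-1,j}\equiv\bar{u}_{i-1}$ and that no signal flows between branches before the OJ, so the $m$ branches evolve in mutually decoupled state subspaces. Concatenating $z_i=[\,z_{i-1,1}^\top\;\cdots\;z_{i-1,m}^\top\,]^\top$, the stacked derivative $\dot{z}_i$ becomes the vertical stack of the $\dot{z}_{i-1,j}$, which immediately produces the block-diagonal $\Ap_i$ claimed in the lemma and leaves a residual driving term equal to the vertical stack of $\Px_{i-1,j}(z_{i-1,j})\Rx_{i-1,j}(\bar{u}_{i-1})\bar{u}_{i-1}$ across $j=1,\ldots,m$.

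Second, I would recognize this stacked driving term through the elementary block-matrix identity
\[
\begin{bmatrix} M_1 & & \\ & \ddots & \\ & & M_m\end{bmatrix}\begin{bmatrix} N_1 \\ \vdots \\ N_m \end{bmatrix} = \begin{bmatrix} M_1 N_1 \\ \vdots \\ M_m N_m \end{bmatrix},
\]
which reproduces exactly the $\Px_i(z_i)$ and $\Rx_i(\bar{u}_i)$ of \eqref{eq:OJ:PITI:2}. For the OJ-summed output $\bar{y}_i=\sum_{j=1}^{m}\bar{y}_{i-1,j}$, I would apply \eqref{eq:PITI_representation:output} to each branch: the linear terms combine into $\Cp_i=[\,\Cp_{i-1,1}\;\cdots\;\Cp_{i-1,m}\,]$ acting on the stacked state, while the input-driven terms aggregate via the complementary identity $\sum_{j=1}^{m} M_j N_j=[\,M_1\;\cdots\;M_m\,]\,[\,N_1^\top\;\cdots\;N_m^\top\,]^\top$, yielding the horizontal $\Py_i(z_i)$ and the vertical stack $\Ry_i(\bar{u}_i)$ in \eqref{eq:OJ:PITI:3}.

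Third, I would verify that the result is admissible as a PITI representation. Every $z_{i-1,j}$ is a fixed linear subblock of $z_i$, so any function linear in $z_{i-1,j}$ remains linear in $z_i$; hence the block-diagonal $\Px_i$ and the horizontal concatenation $\Py_i$ inherit linearity in $z_i$ from the branches. The vertical stacks $\Rx_i$ and $\Ry_i$ remain polynomial in $\bar{u}_i=\bar{u}_{i-1}$ because each $\Rx_{i-1,j}$ and $\Ry_{i-1,j}$ is. Finite-dimensionality is preserved through $n_{\mr z,i}=\sum_{j=1}^{m} n_{\mr z,i-1,j}$. The argument involves no genuine obstacle, being essentially bookkeeping with block-matrix identities; the only conceptual point worth highlighting is the pre-OJ decoupling of the branches, which is precisely what makes $\Ap_i$ and $\Px_i$ strictly block-diagonal with no cross-branch coupling terms.
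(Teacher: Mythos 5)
Your proposal is correct and follows essentially the same route as the paper's proof: stack the branch states into $z_i$, place the state transfers block-diagonally, and collect the summed output via the horizontal/vertical concatenation identity, with the additional (welcome but routine) check that linearity in $z_i$ and polynomiality in $\bar{u}_i$ are inherited from the branches.
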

\begin{proof} The resulting representation directly follows from the joint state vector, descried as  $z_{i} = [\begin{array}{ccc}z_{i-1,1}^\top & \cdots & z_{i-1,m}^\top \end{array} ]^\top $, and stacking the state transfers for each  $\{\Sigma_{i-1,j}^\mathrm{PITI}\}_{j=1}^m$ diagonally in the joint state transfer, while the output equation corresponds to stacking the output terms of $\{\Sigma_{i-1,j}^\mathrm{PITI}\}_{j=1}^m$ next to each other, column-wise, corresponding to $\bar{y}_{i}\equiv \sum_{j=1}^{m}\bar{y}_{i-1,j}$.
\end{proof}
\begin{corollary}\label{corr:OJ_OP7}
	If each of the PITI blocks $\{\Sigma_{i-1,j}^\mathrm{PITI}\}_{j=1}^m$ is bilinear, then the Koopman embedding $\Sigma_{i}^\mathrm{PITI}$ resulting from \Cref{lemma:rule_PITI_OJ} is bilinear and  can be written in the form of \eqref{eq:BLTI_representation}. If none of  $\{\Sigma_{i-1,j}^\mathrm{PITI}\}_{j=1}^m$ has feedtrough term, then $\Sigma_{i}^\mathrm{PITI}$ has no feedtrough term, that is, $\Py _{i}(z_{i})\Ry _{i}(\bar{u}_{i})$ is zero.
\end{corollary}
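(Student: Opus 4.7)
The plan is to verify both assertions by direct inspection of the formulas \eqref{eq:OJ:PITI:2}--\eqref{eq:OJ:PITI:3} produced by \Cref{lemma:rule_PITI_OJ}, using the same characterization of bilinearity inside the PITI form that was invoked in \Cref{corr:rule_PITI_LD}: a PITI representation is BLTI precisely when $\Rx(\bar{u})$ and $\Ry(\bar{u})$ are constant stacks of identity matrices while $\Px(z)$ and $\Py(z)$ are linear in $z$. First I would restate the bilinearity hypothesis on each branch in this form: each $\Rx_{i-1,j}(\bar{u}_{i-1})$ and each $\Ry_{i-1,j}(\bar{u}_{i-1})$ equals $I_{n_{\mathrm{u}, i-1}}$, while each $\Px_{i-1,j}(z_{i-1,j})$ and $\Py_{i-1,j}(z_{i-1,j})$ is linear in its argument.

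Then I would substitute these into the formulas for $\Rx_i(\bar{u}_i)$ and $\Ry_i(\bar{u}_i)$ in \eqref{eq:OJ:PITI:2}--\eqref{eq:OJ:PITI:3}, which immediately yields column stacks of $I_{n_{\mathrm{u}, i-1}}$ and are therefore constant in $\bar{u}_i$. Concurrently, since $\Px_i(z_i)$ is block-diagonal with entries $\Px_{i-1,j}(z_{i-1,j})$ and $\Py_i(z_i)$ is a block row of entries $\Py_{i-1,j}(z_{i-1,j})$, and since the concatenated state $z_i = [z_{i-1,1}^\top \;\cdots\; z_{i-1,m}^\top]^\top$ contains each $z_{i-1,j}$ as a sub-block, I can conclude that $\Px_i(z_i)$ and $\Py_i(z_i)$ depend linearly on $z_i$. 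Together these observations place $\Sigma^{\mathrm{PITI}}_{i}$ into the BLTI form \eqref{eq:BLTI_representation}, with the extraction of the $\bar{B}_j$, $B$, $\bar{D}_j$, $D$ blocks proceeding exactly as in \Cref{corr:rule_PITI_LD}.

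For the feedthrough claim, I would read off from the same block structure the identity
\[
\Py_i(z_i)\,\Ry_i(\bar{u}_i) \;=\; \sum_{j=1}^{m} \Py_{i-1,j}(z_{i-1,j})\,\Ry_{i-1,j}(\bar{u}_{i-1}),
\]
which vanishes term-by-term under the hypothesis that no branch has feedthrough. The entire argument is a direct bookkeeping of the block structure, so I do not expect any substantive obstacle; the only point that requires a little care is the uniformity of the identity-block sizes appearing in $\Rx_i$ and $\Ry_i$, which is guaranteed because all branches share the common input $\bar{u}_{i-1}$ of dimension $n_{\mathrm{u}, i-1}$.
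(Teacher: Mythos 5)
Your proposal is correct and follows essentially the same route as the paper's own proof: direct inspection of the block structure in \eqref{eq:OJ:PITI:2}--\eqref{eq:OJ:PITI:3}, noting that stacking identity matrices keeps $\Rx_i$ and $\Ry_i$ constant, and that the feedthrough product reduces to the sum $\sum_{j=1}^{m}\Py_{i-1,j}(z_{i-1,j})\Ry_{i-1,j}(\bar{u}_{i-1})$ of vanishing terms. Your added remark about the linearity of $\Px_i$ and $\Py_i$ in the concatenated state $z_i$ is a harmless (and slightly more careful) elaboration of what the paper leaves implicit.
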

\begin{proof} If all $\{\Sigma_{i-1,j}^\mathrm{PITI}\}_{j=1}^m$ are bilinear, then $\{\Rx _{i-1,1} (u_{i-1})\}_{i=1}^m$ and $\{\Ry _{i-1,1} (u_{i-1})\}_{i=1}^m$ are stacks of identity matrices, making $\Rx _{i}$  and $\Rx _{i-1}$ composed only from constant identity matrices according to \eqref{eq:OJ:PITI:2} and \eqref{eq:OJ:PITI:3}, implying bilinearity of  $\Sigma_{i}^\mathrm{PITI}$. Similarly, if each $\{\Sigma_{i-1,j}^\mathrm{PITI}\}_{j=1}^m$ has no feedtrough term, meaning that all $\{\Py _{i-1,j}(z_{i-1,j})\Ry _{i-1,j}(\bar{u}_{i-1,j})\}_{j=1}^m$ is zero, then due to \eqref{eq:OJ:PITI:3}, $\Py _{i}(z_{i})\Ry _{i}(\bar{u}_{i})$ will correspond to the sum of these zero terms, making it trivially to be zero as well. 
\end{proof}

\subsection{Proving the main results}\label{sec:proof_section}
Now we have all ingredients ready to prove \Cref{thm:PITI_theorem} and \Cref{thm:bilinear_theorem}.
\subsubsection{Proof of \Cref{thm:PITI_theorem}}\label{sec:proof_TH}
	Note that the nonlinear system \eqref{eq:nonlinear_sys_with_input} to be embedded is represented as a block-chain of $\Sigma^{\mathrm{LTI}}$, see \eqref{eq:LTI_block}, and $\Sigma^{\mathrm{NL}}$ blocks, see \eqref{eq:nl_block_def}, in terms of Operations~\crefrange{eq:operation_1}{eq:operation_4}. To prove the statement, we will start from the left of the block-chain with $i=1$ and apply the elementary embeddings~\Crefrange{sec:embedding:LD}{sec:embedding:OJ} corresponding to Operations~\crefrange{eq:operation_1}{eq:operation_4}, until we reach the end of the block chain, i.e., $i=N$. 

    {\bf Step $i=0$:} The start node is just a signal renaming $u_0=u$, corresponding to a technical step. If $N=0$, this concludes the proof as an end node follows the starting node directly, corresponding to $y=u$, which gives a trivial PITI Koopman model with only $\Py\equiv I_{n_\mathrm{u}}$ and $\Ry\equiv I_{n_\mathrm{u}}$, while the rest of the components, including the state dimension, are zero.

    {\bf Step $i=1$:} The start of the block chain can be an LD block $\Sigma_1^\mathrm{LTI}$, an SN block $\Sigma_1^\mathrm{NL}$, or, if $N>1$, an IJ \eqref{eq:operation_3} with $m_1$ branches. 
    In case of an LD block $\Sigma_1^\mathrm{LTI}$, using \Cref{lemma:LTI_OP1}, or in case of an SN block $\Sigma_1^\mathrm{NL}$, using \Cref{lemma:SN_OP2}, the first block element can be embedded in a PITI Koopman representation $\Sigma_1^\mathrm{PITI}$. For $N=1$, this concludes the proof as an end node follows the last block, giving $\Sigma_1^\mathrm{PITI}$ with $u=\bar{u}_1$ and $y=\bar{y}_1$ as the Koopman embedding of the NL system. In case $N>1$, and with the start of an IJ, according to the discussion in \Cref{sec:inputjunc}, each branch can be seen as the start of an individual block chain, for which each element can be embedded according to the above given steps, also applying the IJ rule again if needed. Note that for each branch, the input $\bar{u}_{1,j}$ is equal to $u$, where $j\in\mathbb{I}_1^m$. 

    {\bf Step $i=2$:} The previous part of the block chain, embedded into $\Sigma_{1}^\mathrm{PITI}$ in the previous step, can represent a single PITI  Koopman model or a PITI Koopman model on one of the concurrent branches. $\Sigma_{1}^\mathrm{PITI}$ can be followed by an LD block $\Sigma_2^\mathrm{LTI}$, an SN block $\Sigma_2^\mathrm{NL}$, an OJ \eqref{eq:operation_4} or, for $N>2$, an IJ \eqref{eq:operation_3} with $m_2$ branches. In case of an LD block, the serial connection of $\Sigma_2^\mathrm{LTI}$ and $\Sigma_{1}^\mathrm{PITI}$ is embedded into a PITI Koopman form $\Sigma_{2}^\mathrm{PITI}$ via \Cref{lemma:rule_PITI_LD} with $\bar{u}_2=\bar{u}_1=u$, while, in case of an SN block $\Sigma_2^\mathrm{NL}$, the embedding is accomplished via \Cref{lemma:rule_PITI_SN}. In case of an OJ, the previous branches described by $\Sigma_{1,j}^\mathrm{PITI}$ are jointly represented by $\Sigma_{2}^\mathrm{PITI}$ according to \Cref{lemma:rule_PITI_OJ}. For $N=2$, this concludes the proof as an end node follows the last block, giving $\Sigma_2^\mathrm{PITI}$ with $u=\bar{u}_2$ and $y=\bar{y}_2$ as the Koopman embedding of the NL system. Note that according to the block chain representation, all IJ branches are closed with an OJ before an end node. In case $N>2$ and an IJ, \Cref{lemma:rule_piti_ij} is applied, resulting in  PITI Koopman models $\{\Sigma_{2,j}^\mathrm{PITI}\}_{j=1}^{m_2}$, each with $\bar{u}_{2,j}=u$, and the embedding is continued on the individual branches until an OJ. 

    {\bf Step $i>2$:} If previous parts of the block chain have been embedded into $\Sigma_{i-1}^\mathrm{PITI}$, which can represent a single PITI Koopman model or a model on one of the concurrent branches, a subsequent LD block $\Sigma_{i}^\mathrm{LTI}$ or an SN block $\Sigma_i^\mathrm{NL}$ can be embedded into a PITI Koopman form $\Sigma_{i}^\mathrm{PITI}$ via \Cref{lemma:rule_PITI_LD} or \Cref{lemma:rule_PITI_SN}, respectively. In case of an OJ, the previous branches described by $\Sigma_{i-1,j}^\mathrm{PITI}$ are jointly represented by $\Sigma_{i}^\mathrm{PITI}$ according to \Cref{lemma:rule_PITI_OJ}. For $N=i$, this concludes the proof as an end node follows the last block, giving $\Sigma_i^\mathrm{PITI}$ with $u=\bar{u}_i$ and $y=\bar{y}_i$ as the Koopman embedding of the NL system. In case $N>i$ and an IJ, \Cref{lemma:rule_piti_ij} is applied, resulting in  PITI Koopman models $\{\Sigma_{i,j}^\mathrm{PITI}\}_{j=1}^{m_2}$, each with $u=\bar{u}_{i,j}$, and the embedding is continued on the individual branches until an OJ. 

    This concludes the proof by induction, implying that $\Sigma_N^\mathrm{PITI}$ with $u=\bar{u}_N$ and $y=\bar{y}_N$ is a PITI Koopman embedding of the NL system.
    
\subsubsection{Proof of \Cref{thm:bilinear_theorem}}

Using \Crefrange{corr:LTI_OP1}{corr:OJ_OP7} in combination of \Crefrange{lemma:LTI_OP1}{lemma:rule_PITI_OJ}, it follows by the induction based proof in \Cref{sec:proof_TH} that the resulting $\Sigma_N^\mathrm{PITI}$
is BLTI if each embedding step results in a BLTI model. According to \Crefrange{corr:LTI_OP1}{corr:OJ_OP7}, the BLTI property can be only violated at the SN components, either by starting with an SN component giving a $\Ry(\bar{u}_1)=\tilde{f}_1(u_1)$ that is a polynomial function of $u$, see \Cref{lemma:SN_OP2}, or if a BLTI $\Sigma_{i-1}^\mathrm{PITI}$ is followed by an SN block $\Sigma_{i}^\mathrm{NL}$ and $\Sigma_{i-1}^\mathrm{PITI}$ has a direct feedthrough term, see \Cref{col:piti_f_bilinear}. Now according to \Crefrange{corr:LTI_OP1}{corr:OJ_OP7}, $\Sigma_{i-1}^\mathrm{PITI}$ is guaranteed to be bilinear without a feedthrough term if all previous LD block chain elements had no feedthrough term and the starting block is not an SN which would introduce a non-blinear feedthrough. This concludes the proof.  

\section{Examples} \label{sec:examples}
In this section, we give two examples to illustrate the Koopman embedding by the proposed method. First, we show in detail how a classical MIMO Wiener-Hammerstein system is processed by the iterative Koopman embedding approach. Then, we show the embedding of a complex interconnection of SISO blocks without feedthrough, giving a bilinear Koopman model.

\setcounter{subsection}{1}

\subsubsection{Koopman embedding of a MIMO Wiener-Hammerstein system}\label{sec:MIMO_example-derivation_steps}
We consider the embedding of a Wiener-Hammerstein system which is 
the series interconnection of an LTI block $\Sigma^\mathrm{LTI}_1$, a static nonlinearity $\Sigma^\mathrm{NL}_2$, and an LTI block $\Sigma^\mathrm{LTI}_3$, as can be seen in \Cref{fig:MIMO_example}. We show that such an interconnection can be exactly described as a Koopman PITI model \eqref{eq:PITI_representation} if $\Sigma^\mathrm{NL}_2$ is polynomial. Furthermore, if the linear blocks do not have feedthrough, the embedding becomes bilinear. \par
\begin{figure}[ht!]
\begin{center}
\includegraphics[width=.75\textwidth]{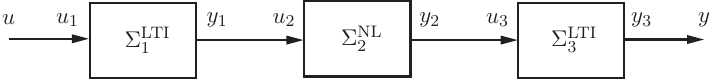}
\caption{Block-chain form of the MIMO Wiener-Hammerstein system.} 
\label{fig:MIMO_example}
\end{center}
\end{figure}
\begin{figure}[htbp]
    \centering
    
    \begin{subfigure}{\textwidth}
        \centering
        \includegraphics[width=0.75\textwidth]{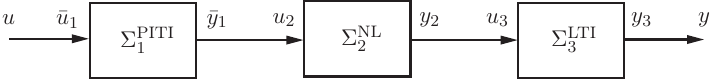}
        \vspace{.3em}
        \caption{Step 1: Embedding $\Sigma^{\text{LTI}}_1$ into $\Sigma^{\text{PITI}}_1$.}
        \label{fig:MIMO_example_step1}
    \end{subfigure}
    
    \vspace{1em} 
    
    \begin{subfigure}[t]{0.58\textwidth}
        \centering
        \includegraphics[width=0.9\textwidth]{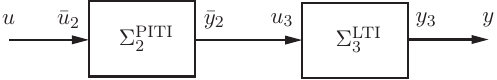}
        \vspace{.3em}
        \caption{Step 2: Embedding $\Sigma^{\text{PITI}}_1$ followed by $\Sigma^\text{NL}_2$ into $\Sigma^{\text{PITI}}_2$.}
        \label{fig:MIMO_example_step2}
    \end{subfigure}
    \hfill
    \begin{subfigure}[t]{0.38\textwidth}
        \centering
        \includegraphics[width=0.8\textwidth]{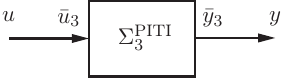}
        \vspace{.3em}
        \caption{Step 3: Embedding $\Sigma^{\text{PITI}}_2$ followed by $\Sigma^\text{LTI}_3$ into $\Sigma^{\text{PITI}}_3$.}
        \label{fig:MIMO_example_step3}
    \end{subfigure}
    
    \caption{Embedding steps of the Wiener-Hammerstein system into a PITI Koopman model.}
    \label{fig:MIMO_example_steps}
\end{figure}
The two LTI blocks $\Sigma^\text{LTI}_1$ and $\Sigma^\text{LTI}_3$ are considered to be: 
\begin{equation}  \label{eq:exap1:LTI}
y_{1} = \underbrace{\left[\begin{array}{c|c} \Al_1 & \Bl_1  \\ \hline \Cl_1 & \Dl_1 \end{array}  \right]}_{G_1} u_{1}, \qquad \text{and} \qquad y_{3} = \underbrace{\left[\begin{array}{c|c} \Al_3 & \Bl_3  \\ \hline \Cl_3 & \Dl_3 \end{array}  \right]}_{G_3} u_{3}, 
\end{equation}
with
\begin{equation*}
    \begin{split} 
\Al_1&=\begin{bmatrix}
	 -0.5& -0.9\\ 2& -0.3
\end{bmatrix},\quad  \Bl_1 = \begin{bmatrix}
	1.2 & -1.5 \\ 0.3 & 1.1
\end{bmatrix},\quad \;\;\,\Cl_1=\begin{bmatrix}
	1 & 0\\0 & 1
\end{bmatrix},\quad \Dl_1=\begin{bmatrix}-0.1 & 0.5 \\ 0.3 & -0.4\end{bmatrix},\\
\Al_3&=\begin{bmatrix}
	 -0.2& -2\\ 0&-0.7
\end{bmatrix},\quad  \Bl_3 = \begin{bmatrix}
	-1.5 & 0.7 \\ 1.4 & -0.3
\end{bmatrix},\quad \Cl_3=\begin{bmatrix}
	1 & 0\\0 & 1
\end{bmatrix},\quad \Dl_3=\begin{bmatrix}0.1 & 0.2 \\ -0.3 & 0.2\end{bmatrix}.
    \end{split} 
\end{equation*}
and $u_t,u_{1,t},u_{3,t}\in\mathbb{R}^2$, $y_t,y_{1,t},y_{3,t}\in\mathbb{R}^2$, and $x_{1,t},x_{3,t}\in\mathbb{R}^2$. We consider the NL block $\Sigma^\text{NL}_2$ to be defined as in Example 1 in \cite{Dreesen}\footnote{Some typos in the coefficients are corrected w.r.t.~the original example in \cite{Dreesen}.}:  
\begin{multline}
    y_2 = f_2(u_2)= \\
    \begin{bmatrix}
	-108u_{2,1}^3 \!-\! 108u_{2,1}^2 u_{2,2} \!+\! 8u_{2,1}^2 \!-\! 36u_{2,1} u_{2,2}^2 \!+\! 16u_{2,1} u_{2,2} \!+\! 12u_{2,1} \!-\! 4u_{2,2}^3 \!+\! 8u_{2,2}^2 \!+\! 8u_{2,2} \!+\! 1 \\ 
    54u_{2,1}^3 \!+\! 54u_{2,1}^2 u_{2,2} \!-\! 24u_{2,1}^2  \!+\! 18u_{2,1} u_{2,2}^2 \!-\! 48u_{2,1} u_{2,2} \!-\! 21u_{2,1} \!+\! 2u_{2,2}^3 \!-\! 24u_{2,2}^2 \!-\! 19u_{2,2} \!-\! 3
\end{bmatrix}
\end{multline}
where $u_{2,1}$ and $u_{2,2}$ are the elements of $u_2$, and $f_2:\mathbb{R}^2\rightarrow\mathbb{R}^2$. The decomposition of $f_2$ is given by:
\begin{equation}\label{eq:decomposition_f_in_g_MIMO_example}
    \underbrace{\begin{bmatrix}
        y_{2,1} \\ y_{2,2} 
    \end{bmatrix}}_{y_2}= \underbrace{\begin{bmatrix}
             1 & 2 \\ -3 & -1
        \end{bmatrix}}_{W_2}\underbrace{\begin{bmatrix}
            2\sigma_{2,1}^2 - 3 \sigma_{2,1} + 1 \\ 2\sigma_{2,2}^3 - \sigma_{2,2}
        \end{bmatrix}}_{g_2(\sigma_2)}, \quad \text{with} \quad \underbrace{\begin{bmatrix}
            \sigma_{2,1} \\ \sigma_{2,2}
        \end{bmatrix}}_{\sigma_2}=\underbrace{\begin{bmatrix}
            -2 & -2\\ -3 & -1
        \end{bmatrix}}_{V^\top_2}\underbrace{\begin{bmatrix}
            u_{2,1} \\ u_{2,2}
        \end{bmatrix}}_{u_2}
\end{equation}
and the coefficients of $g_2$ are described in \Cref{tbl:table_example_MIMO}. Note that, for the given $f_2$ and associated decomposition via $g_2$, the total degree is $p_2=3$.

\par
\begin{table}[ht!]\caption{Coefficients of $g_2$ in the monomial decomposition in the Wiener-Hammerstein system example.}\label{tbl:table_example_MIMO}
\begin{center}
\begin{tabular}{ |c|c|c|c| } 
 \hline
 $\gamma_{2,1,0}=1$ & $\gamma_{2,1,1}=-3$ & $\gamma_{2,1,2}=2$ & $\gamma_{2,1,3}=0$\\ \hline
 $\gamma_{2,2,0}=0$ & $\gamma_{2,2,1}=-1$ & $\gamma_{2,2,2}=0$ & $\gamma_{2,2,3}=2$ \\ 
 \hline
\end{tabular}
\end{center}
\end{table}

By the block-chain structure in \Cref{fig:MIMO_example}, the first step in computing a Koopman embedding of the system is to 
convert $\Sigma^\text{LTI}_1$ to $\Sigma^\text{PITI}_1$ according to \Cref{sec:embedding:LD}. Following \Cref{lemma:LTI_OP1}, the dynamics of $\Sigma^{\text{PITI}}_1$ are described by:
\begin{subequations}
\begin{align}
        \dot{z}_1&=\Ap_1 z_1 + \Px_1(z_1)\Rx_1(\bar{u}_1)\bar{u}_1,\\
        \bar{y}_1 &=\Cp_1 z_1 + \Py_1(z_1)\Ry_1(\bar{u}_1)\bar{u}_1,
    \end{align}
    \end{subequations}
with $z_1=x_1$, $\bar{y}_1=y_1$, and $\bar{u}_1 = u_1 = u$. For the state equation, $\Ap_1 = \Al_1$, $\Px_1(z_1) = \Bl_1$, $\Rx_1(\bar{u}_1)=I_{n_{\bar{\mr u},1}}$, while for the output equation, $\Cp_1=\Cl_1$, $\Py_1(z_1)=\Dl_1$, and $\Ry_1(\bar{u}_1)=I_{n_{\bar{\mr u},1}}$. This results in the block-chain in \Cref{fig:MIMO_example_step1}, accomplishing Step 1 of the embedding process.

\par
As $\Sigma^\text{PITI}_1$ is followed by $\Sigma^\text{NL}_2$ in \Cref{fig:MIMO_example_step1}, we embed these two blocks
into $\Sigma^\text{PITI}_2$ according to \Cref{sec:embedding:PITI:SN}. The interconnection between $\Sigma^{\text{PITI}}_1$ and $\Sigma^{\text{NL}}_2$ is described by the equations:
\begin{subequations}
\begin{align}
                \dot{z}_1&=\Ap_1 z_1 + \Px_1(z_1)\Rx_1(\bar{u}_1)\bar{u}_1,\\
        \bar{y}_1 &=\Cp_1 z_1 + \Py_1(z_1)\Ry_1(\bar{u}_1)\bar{u}_1,  \\
        y_2 &= f_2(\bar{y}_1)=W_2g_2(V^\top_2\bar{y}_1),  
        = W_2 \begin{bmatrix}
            g_{2,1}(\sigma_{2,1}) & g_{2,2}(\sigma_2,2)
        \end{bmatrix},
    \end{align}
    \end{subequations}
where $\sigma_{2,e}=v_{2,e}\bar{y}_1$ and $v_{2,e}$ is the $e\textsuperscript{th}$ row of $V^\top_2$, with $e\in\{1,2\}$. We can write each $g_{2,e}$ as
\begin{multline}\label{eq:g_2_e_decomp_MIMO_example}
        g_{2,e}(\sigma_{2,e})\stackrel{\mathmakebox[\widthof{=}]{\eqref{eq:decomp_ge_piti_nl}}}{=}\; \gamma_{2,e,0} + \gamma_{2,e,1}\sigma_{2,e} + \gamma_{2,e,2}\sigma^2_{2,e} + \gamma_{2,e,3}\sigma^3_{2,e} =\\
        \gamma_{2,e,0} + \gamma_{2,e,1}\left(\tilde{v}_{2,e}z_1 + \hat{v}_{2,e}\bar{u}_1\right) + \gamma_{2,e,2}\left(\tilde{v}_{2,e}z_1 + \hat{v}_{2,e}\bar{u}_1\right)^2 + \gamma_{2,e,3}\left(\tilde{v}_{2,e}z_1 + \hat{v}_{2,e}\bar{u}_1\right)^3
    \end{multline}
with $\tilde{v}_{2,e} =  v_{2,e}\Cp_1= v_{2,e}\Cl_1$ and $\hat{v}_{2,e}=v_{2,e}\Py_1(z_1)=v_{2,e}\Dl_1$.
Following \Cref{lemma:rule_PITI_SN},
\begin{equation*}
z_2 = \begin{bmatrix}
    1 & z^\top_1 & \left(z^{(2)}_1\right)^\top & \left(z^{(3)}_1\right)^\top
\end{bmatrix}^\top,
\end{equation*}
is the new state and the output equation of $\Sigma^{\text{PITI}}_2$ is defined by the functions:
\begin{equation}\label{eq:piti_2_output_eq_MIMO_example_eq1}
    \begin{split}
        \Cp_2 &= W_2\Gamma_2=W_2\begin{bmatrix}
            \gamma_{2,1,0} & \gamma_{2,1,1}\tilde{v}_{2,1} & \gamma_{2,1,2}\tilde{v}_{2,1}^{(2)} & \gamma_{2,1,3}\tilde{v}_{2,1}^{(3)} \\
            \gamma_{2,2,0} & \gamma_{2,2,1}\tilde{v}_{2,2} & \gamma_{2,2,2}\tilde{v}_{2,2}^{(2)} & \gamma_{2,2,3}\tilde{v}_{2,2}^{(3)}
        \end{bmatrix}, \\
        \Py_2(z_2)&=W_2\begin{bmatrix}
            \Py_{2,1}(z_1) \\ \Py_{2,2}(z_1)
        \end{bmatrix}, \quad \Ry_2(\bar{u}_2)=\begin{bmatrix}
            I_{n_{\bar{\mr u},2}} \\
            I_{n_{\bar{\mr u},2}} \\
            I_{n_{\bar{\mr u},2}}^{(2)} \left(I_{n_{\bar{\mr u},2}} \otimes \bar{u}_2\right)\\
            I_{n_{\bar{\mr u},2}}\\
            I_{n_{\bar{\mr u},2}}^{(2)} \left(I_{n_{\bar{\mr u},2}} \otimes \bar{u}_2\right)\\
            I_{n_{\bar{\mr u},2}}^{(3)} \left(I_{n_{\bar{\mr u},2}} \otimes \bar{u}_2^{(2)}\right)\\
        \end{bmatrix},
    \end{split}
\end{equation}
where we used $\Ry_1(\bar{u}_1)=I_{n_{\bar{\mr u},1}}$, $\bar{u}_2=\bar{u}_1=u$, and, based on \eqref{eq:g_2_e_decomp_MIMO_example} and \eqref{eq:eq_linear_in_z_output_ie}:
\begin{multline}\label{eq:piti_2_output_eq_MIMO_example_eq2}
	\Py _{2,e}(z_{1})=\left[ \begin{matrix}
		\gamma_{2,e,1}\tilde{v}^{(0)}_{2,e}z^{(0)}_{1}\hat{v}_{2,e}^{(1)} & 2\gamma_{2,e,2}\tilde{v}^{(1)}_{2,e}z^{(1)}_{1}\hat{v}_{2,e}^{(1)} & \gamma_{2,e,2}\tilde{v}^{(0)}_{2,e}z^{(0)}_{1}\hat{v}_{2,e}^{(2)}\end{matrix} \right. \\
		 \left. \begin{matrix}3\gamma_{2,e,3}\tilde{v}^{(2)}_{2,e}z^{(2)}_{1}\hat{v}_{2,e}^{(1)} & 3\gamma_{2,e,3}\tilde{v}^{(1)}_{2,e}z^{(1)}_{1}\hat{v}_{2,e}^{(2)} & \gamma_{2,e,3}\tilde{v}^{(0)}_{2,e}z^{(0)}_{1}\hat{v}_{2,e}^{(3)}\end{matrix} \right],
\end{multline}
where $\hat{v}_{2,e}^{(k)} = \left(v_{2,e}\Dl_1\right)^{(k)}= v_{2,e}^{(k)}\Dl^{(k)}_1$. Next, the state equation, according to \Cref{lemma:rule_PITI_SN}, is defined by
\begin{equation}
    \Ap_2 =\! \!\begin{bmatrix}
        0 & & & \\
        & \Ap_1 &  & \\
        & & {}^2\Ap_1 \\
        & & & {}^3\Ap_1
    \end{bmatrix}\!\!=\!\! \begin{bmatrix}
        0 & & & \\
        & \Al_1 &  & \\
        & & {}^2\Al_1 \\
        & & & {}^3\Al_1
    \end{bmatrix}\!, \; \Px_2(z_2)=\!\!\begin{bmatrix}
        0_{1\times n_{\mr z , 1}}\\ I_{n_{\mr z , 1}} \\ \frac{\partial z_1^{(2)}}{\partial z_1} \\ \frac{\partial z_1^{(3)}}{\partial z_1}
    \end{bmatrix}\cdot \Px_1(z_1) = \!\!\begin{bmatrix}
        0_{1\times n_{\mr z , 1}}\\ I_{n_{\mr z , 1}} \\ \frac{\partial z_1^{(2)}}{\partial z_1} \\ \frac{\partial z_1^{(3)}}{\partial z_1}
    \end{bmatrix}\!\!\Bl_1
\end{equation}
and $\Rx_2(\bar{u}_2)=\Rx(\bar{u}_1)=I_{n_{\bar{\mr u},2}}$ with $\bar{u}_2=\bar{u}_1=u$. We can further simplify $\Px_2(z_2)$ as follows. For $j\in\{2,3\}$, considering ${}_k\Bzct_1$ is the $k\textsuperscript{th}$ column of $\Bzct_1=\Bl_1$, based on \Cref{lemma:appendix_lemma_partial_B}, we have that: 
\begin{equation}
        \frac{\partial z_1^{(j)}}{\partial z_1}{}_k\Bzct_1 =\underbrace{\left(\sum^{j-1}_{\tau=0}I_{n_{\mr z ,1}}^{(\tau)}\otimes  {}_k\Bzct_1 \otimes I_{n_{\mr z ,1}}^{(j-\tau-1)}\right)}_{{}^j_k\Bzct_1}z_1^{(j-1)}. 
\end{equation}
Stacking all components and using that $n_{\bar{\mr u},2}=n_{\mr u}=2$, we can write the state equation of $\Sigma^{\text{PITI}}_2$ as:
\begin{equation}
    \dot{z} = \Ap_2 + \sum^{2}_{k=1} {}_k\Bz_2 z_2 \bar{u}_{2,k}, \quad \text{with} \quad {}_k\Bz_2 = \begin{bmatrix}
        0 & &  & \\
        {}_k\Bzct_1 & 0 & & \\
        & {}^2_k\Bzct_1 & 0 & \\
        & & {}^3_k\Bzct_1 & 0
    \end{bmatrix}.
\end{equation}
Equivalently, we can write the state equation as in \eqref{eq:BLTI_representation}:
\begin{equation}
    \dot{z}_2 = \Ap_2z_2 + \sum^{n_{\mr z,2}}_{j=1}\Bz_{2,j}z_{2,j}\bar{u}_2,
\end{equation}
with $z_{2,j}$ being the $j^\textsuperscript{th}$ element of $z_2$ and $\Bz_{2,j}=[{}_{1,j}\Bz_{2} \;\; {}_{2,j}\Bz_{2}]\in\mathbb{R}^{n_{\mr z, 2}\times n_{\mr u}}$, where ${}_{k,j}\Bz_{2}$ is the $j\textsuperscript{th}$ column of ${}_k\Bz_2$, with $k\in\{1,2\}$, as $n_{\bar{\mr u},2}=n_{\bar{\mr u},1}=n_{\mr u}=2$. Overall, the state-space representation of the block $\Sigma^{\text{PITI}}_2$ is 
\begin{subequations}
\begin{align}
        \dot{z}_2 &= \Ap_2z_2 + \Px_2(z_2)\Rx_2(\bar{u}_2)\bar{u}_2
        =\Ap_2z_2 + \left(\sum^{n_{\mr z,2}}_{j=1}\Bz_{2,j}z_{2,j}\right) \cdot I_{n_{\bar{\mr u},2}} \cdot\bar{u}_2 ,\\
        \bar{y}_2 &= \Cp_2 z_2 + \Py_2(z_2)\Ry_2(\bar{u}_2)\bar{u}_2,
\end{align}
    \end{subequations}
with $\Px_2(z_2) = \sum^{n_{\mr z,2}}_{j=1}\Bz_{2,j}z_{2,j}$ and $\Rx_2(\bar{u}_2)=I_{n_{\bar{\mr u},2}}$ showing the linearity of $\Px_2(z_2)$ in $z_2$. This results in the block-chain in \Cref{fig:MIMO_example_step2}, accomplishing Step 2 of the embedding process.

\par

Next, as $\Sigma^\text{PITI}_2$ is followed by $\Sigma^{\text{LTI}}_3$ in \Cref{fig:MIMO_example_step2}, we embed these two blocks
into $\Sigma^\text{PITI}_3$ according to \Cref{sec:embedding:PITI:LD}.
Based on \Cref{lemma:rule_PITI_LD}, the dynamics of $\Sigma^{\text{PITI}}_3$ are given by:
\begin{subequations}
\begin{align}
        \dot{z}_3 &= \Ap_3z_3 + \Px_3(z_3)\Rx_3(\bar{u}_3)\bar{u}_3,\\
        \bar{y}_3&= \Cp_2 z_3 + \Py_3(z_3)\Ry_3(\bar{u}_3)\bar{u}_3,
\end{align}
    \end{subequations}
with $z_3 = [z_2^\top \;\; x_3^\top]^\top$, $\bar{y}_3=y_3$, $\bar{u}_3=\bar{u}_2=u$, and 
\begin{subequations}\label{eq:piti_2_linear_3_example_matrices}
	\begin{align}
	\Ap  _3&= \begin{bmatrix}
		\Ap  _{2} & 0 \\ \Bl_3\Cp_{2} & \Al_3
	\end{bmatrix},& \Px _3(z_3) &\!=\! \begin{bmatrix}
		\Px _{2}(z_{2}) & 0 \\ 0 & \Bl_3\Py _{2}(z_{2})
	\end{bmatrix},& \Rx _3(\bar{u}_3)&\!=\!\begin{bmatrix}
		\Rx _{2}(\bar{u}_{2}) \\ \Ry _{2}(\bar{u}_{2})
	\end{bmatrix}\\
	\Cp_3 &=\begin{bmatrix}
		\Dl_3\Cp_{2} & \Cl_3
	\end{bmatrix},& \Py _3(z_3)&\!=\! \Dl_3\Py _{2}(z_{2}),& \Ry _3(\bar{u}_3)&\!=\!\Ry _{2} (\bar{u}_{2}).
	\end{align}
\end{subequations}

\par

Note that, if the linear blocks $\Sigma^{\text{LTI}}_1$ and $\Sigma^{\text{LTI}}_3$ have no feedthrough, i.e., $\Dl_1=0_{n_{\mr y , 1}\times n_{\mr u , 1}}$ and $\Dl_3 = 0_{n_{\mr y , 3}\times n_{\mr u , 3}}$, then, based on \eqref{eq:piti_2_linear_3_example_matrices}, $\Py_3(z_3)\Ry(\bar{u}_3)=0_{n_{\bar{\mr y},3}\times n_{\bar{\mr u},3}}$. Moreover, based on \eqref{eq:piti_2_output_eq_MIMO_example_eq1} and \eqref{eq:piti_2_output_eq_MIMO_example_eq2}, and the fact that all elements $\hat{v}_{2,e}=v_{2,e}\Dl_1$ are zero, $\Py_2(z_2)\Ry_2(\bar{u}_2)=0_{n_{\bar{\mr y},2}\times n_{\bar{\mr u},2}}$ and the resulting dynamics of $\Sigma^{\text{PITI}}_3$ are bilinear. Hence, using $\Px_2(z_2) = \sum^{n_{\mr z,2}}_{j=1}\Bz_{2,j}z_{2,j}$ and $\Rx_2(\bar{u}_2)=I_{n_{\bar{\mr u},2}}$,  $\Sigma^{\text{PITI}}_3$ can be written in the following bilinear form denoted as $\Sigma^{\text{BLTI}}_3$:
\begin{subequations}
\begin{align}
        \dot{z}_3 &= \Ap_3z_3 + \sum^{n_{\mr z , 3}}_{j=1}\Bz_{3,j}z_{3,j}\bar{u}_3
        =\Ap_3  z_3 + \sum^2_{k=1}{}_k\Bz_{3}z_3\bar{u}_{3,k},\\
        \bar{y}_3 &= \Cp_3 z_3,
\end{align}
    \end{subequations}
where we used $n_{\bar{\mr u},3}=n_{\mr u}=2$ and, based on the embedding of $\Sigma^{\text{PITI}}_2$ and \eqref{eq:piti_2_linear_3_example_matrices}:
\begin{equation*}
    {}_k \bar{B}_3 = \begin{bmatrix}
        {}_k\bar{B}_2 & 0_{n_{\mr z , 2}\times n_{\mr x , 3}} \\ 0_{n_{\mr x , 3}\times n_{\mr z , 2}} & 0_{n_{\mr x , 3} \times n_{\mr x , 3}}
    \end{bmatrix}\qquad \text{and} \qquad \Bz_{3,j} = \begin{bmatrix}
        \Bz_{2,j} \\ 0_{n_{\mr x , 3} \times n_{\bar{\mr u},3}}
    \end{bmatrix}.
\end{equation*}
Moreover, $\bar{B}_{3,j}=[{}_{1,j}\bar{B}_3 \; \; {}_{2,j}\bar{B}_3]$, where ${}_{k,j}\bar{B}_3$ is the $j\textsuperscript{th}$ column of ${}_k\Bz_{3}$, with $k\in\{1,2\}$. This results in the block-chain in \Cref{fig:MIMO_example_step3} with $u=\bar{u}_3$ and $y=\bar{y}_3$, accomplishing the final step of the embedding process as there are no more blocks to embed.

\subsubsection{Final model}\label{sec:final_models_MIMO_example}
\begin{figure}[htbp]
    \centering
    \begin{subfigure}{0.48\textwidth}
        \centering
        \includegraphics[width=\textwidth]{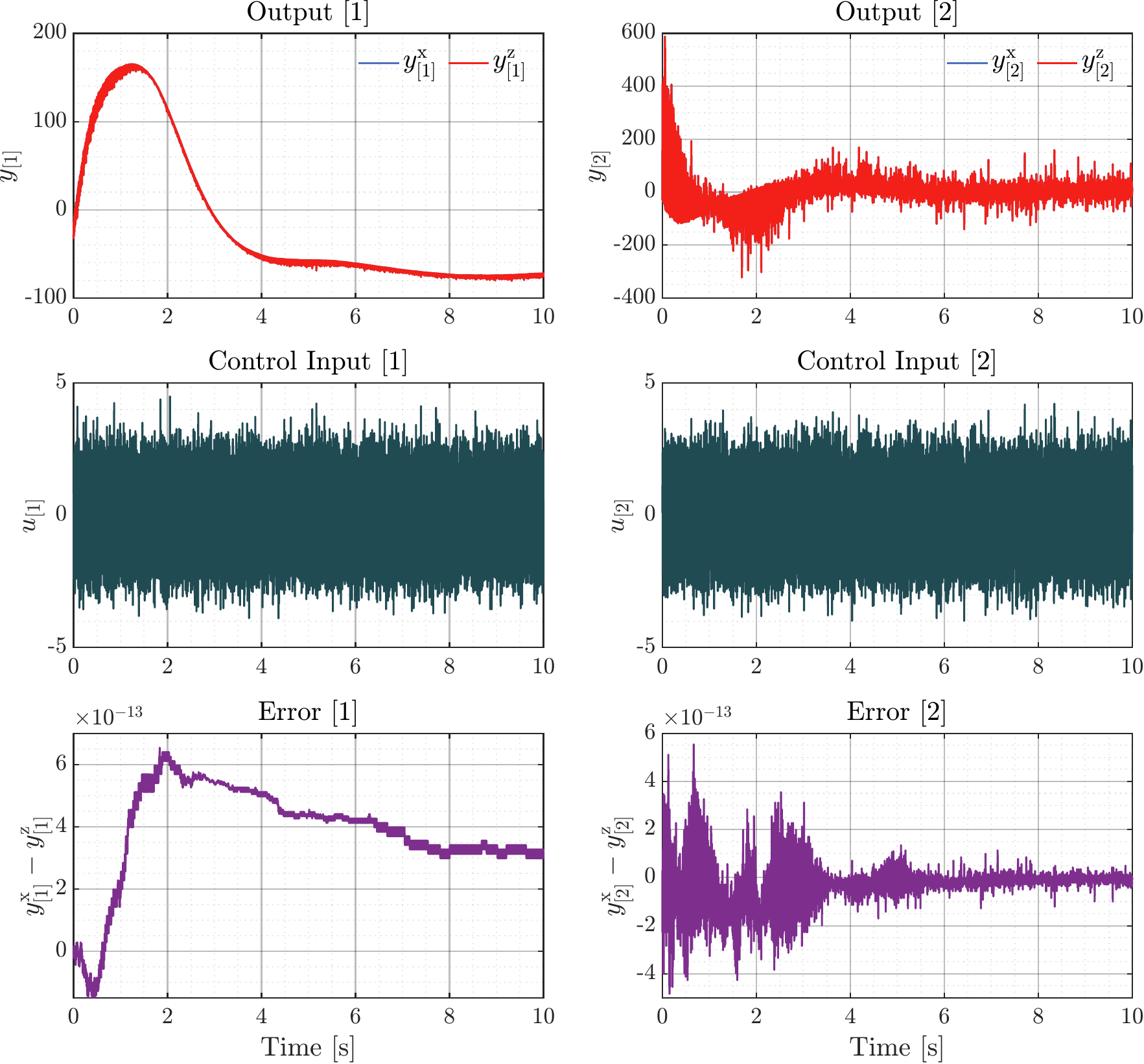}
        \caption{Comparison with the Koopman PITI embedding.\vspace{-.5cm}} 
        \label{fig:MIMO_results_example_figure_subfigure_1}
    \end{subfigure}
    \hfill
    \begin{subfigure}{0.48\textwidth}
        \centering
        \includegraphics[width=\textwidth]{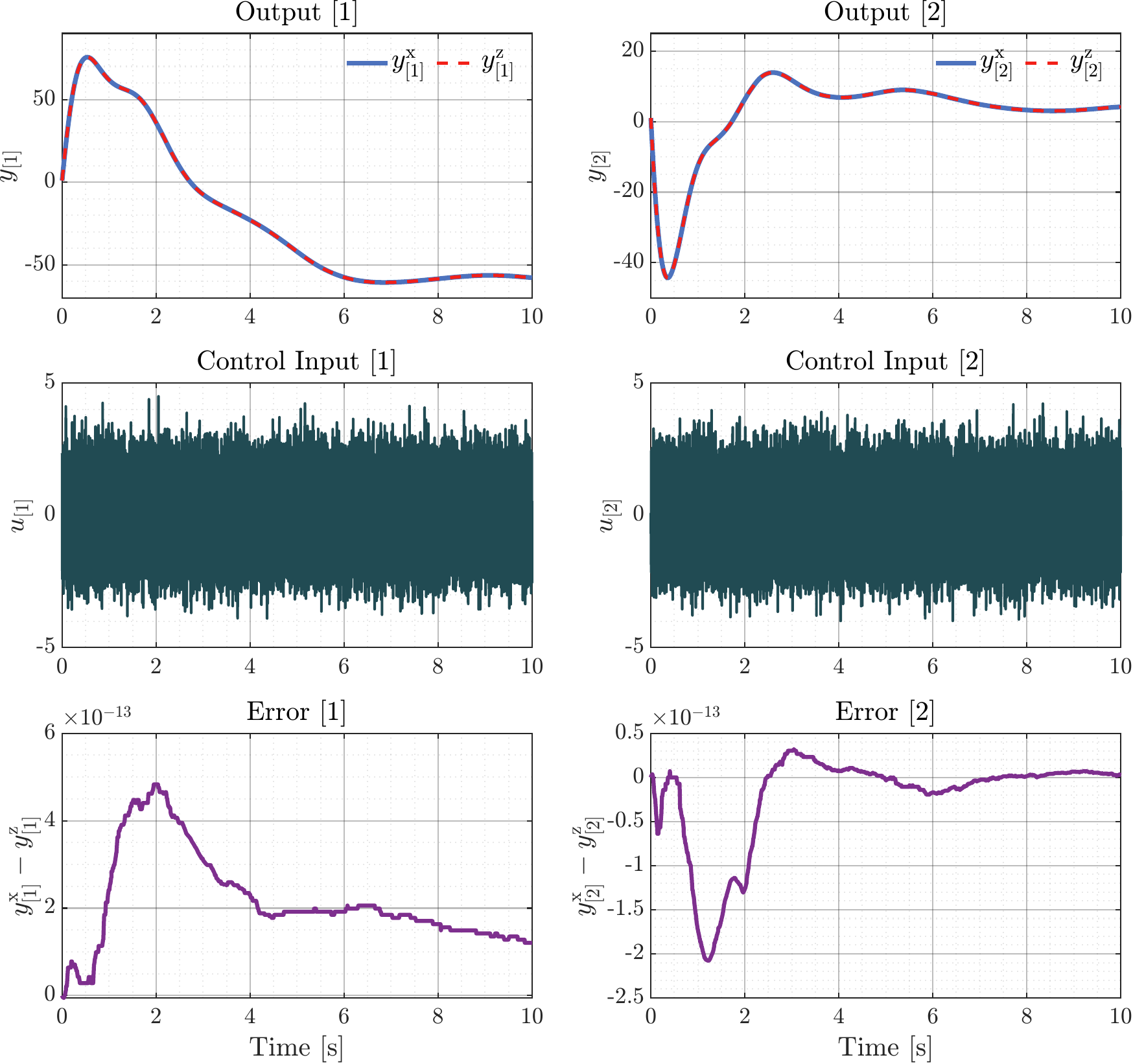}
        \caption{Comparison with the Koopman BLTI embedding. \vspace{-.5cm}} 
        \label{fig:MIMO_results_example_figure_subfigure_2}
    \end{subfigure}
    \caption{Simulated output responses of the MIMO Wiener-Hammerstein block chain system  depicted in \Cref{fig:MIMO_example} and (a) the embedded $\Sigma^{\text{PITI}}$ model and (b) the $\Sigma^{\text{BLTI}}$ model corresponding to the simplified system with $\Dl_1=\Dl_2=0_{2\times 2}$. The responses of nonlinear system ($y^\mathrm{x}$) and the Koopman models ($y^\mathrm{z}$) are given (top plots) for the white Gaussian noise input signal $u$ (middle plot) with the difference of the obtained responses (bottom plot) also depicted. \vspace{-.6cm}}
    \label{fig:MIMO_results_example_figure}
\end{figure}
It is important to note that, due to the nature of the Kronecker product, the resulting lifted state $z_3=\Phi(x)$ contains duplicate states (e.g., for $x_1$ and $x_2$ being scalar elements of a vector $x=[
   \  x_1 \ x_2 \ ]^\top$, $x\otimes x = \{x_1^2,x_1x_2,x_2x_1,x_2^2\}$ contains the term $x_1x_2$ twice). In terms of a post processing step for the resulting $\Sigma^{\text{PITI}}_3$, its is simple to
 remove the duplicate states by constructing an appropriate state projection:  
$z=Tz_3$ with $z_3=T^\dagger z$, where $T\in\mathbb{R}^{n_\mr z \times n_{\mr z,3}}$ is a matrix that selects the unique elements (in each row it contains only zeros except for one element which is one) and $T^\dagger$ is its inverse. This gives the resulting $\Sigma^{\text{PITI}}$ as
\begin{subequations}\label{eq:example_mimo_piti_reduced}
\vspace{-.3cm}
    \begin{align}
        \dot{z}&= \Ap z + \Px(z) \Rx (u)u\\
        y &=  \Cp z +  \Py (z) \Ry(u)u
    \end{align}
\end{subequations}
with $\Ap = T\Ap_3T^\dagger$, $ \Cp = \Cp_3T^\dagger$, $\Px(z) :=T\Px_3(T^\dagger z)$, $ \Rx(u)=\Rx_3(u)$, $ \Py(z):=\Py_3(T^\dagger z)$, and $ \Ry(u)=\Ry_3(u)$. In this example, through this projection, the lifted state is reduced from $n_{\mr z,3}=17$ to $n_{\mr z}=12$. The same projection applies for the simplified BLTI form $\Sigma^{\text{BLTI}}_3$ of the dynamics to get the final $\Sigma^{\text{BLTI}}$ embedding.

\par

To validate the models, we simulate the response of the nonlinear system depicted in \Cref{fig:MIMO_example} to an i.i.d.~input signal $u_{t} \sim \mathcal{N}(0,I_2)$  with Runge Kutta 4 numerical integration using a step size of $\delta t=10^{-4}$s under both \eqref{eq:exap1:LTI} with the original $\Dl_i$ matrices and also with $\Dl_1$ and $\Dl_2$ set to zero. 
We apply the same excitation signal and numerical integration to the obtained $\Sigma^{\text{PITI}}$ model (corresponding to the full system) and the $\Sigma^{\text{BLTI}}$ model (corresponding to $\Dl_1$ and $\Dl_2$ set to zero in the LTI blocks). The used initial conditions are $x_{1,0}=x_{2,0}=[1\;\; 1]^\top$ and $z_0 = Tz_{3,0}$.  
Comparing the responses in \Cref{fig:MIMO_results_example_figure} shows that in both cases, PITI (left) and BLTI (right), the error between the output signals $y^\mr x_{[i]}$ of the original nonlinear system with $i\in\{1,2\}$ denoting the elements, and $y^\mr z_{[i]}$ of the embedded models is around $10^{-13}$ in magnitude, which is close to the machine precision of the involved numerical computations. This indicates that $\Sigma^{\text{PITI}}$ and $\Sigma^{\text{BLTI}}$ are exact embeddings of the original system.

\vspace{-.1cm}

\stepcounter{subsection}          
\setcounter{subsubsection}{0}     

\subsubsection{Koopman embedding of a SISO block chain system}
As a second example consider the nonlinear block chain model 
 with $n_\mathrm{u}=1$ and $n_\mathrm{y}=1$ given in \Cref{fig:SISO_example_1}.

\begin{figure}[ht!]
\begin{center}   
\includegraphics[width=.95\textwidth]{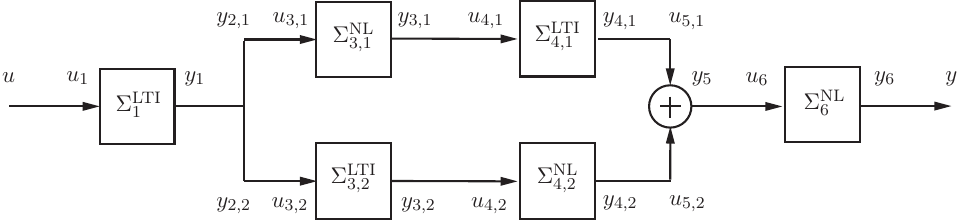}
\caption{Interconnection structure of the SISO nonlinear block chain system.}
\label{fig:SISO_example_1}
\end{center}
\end{figure}
The LTI blocks $\Sigma^\text{LTI}_1$, $\Sigma^\text{LTI}_{3,2}$ and $\Sigma^\text{LTI}_{4,1}$ are defined by the matrices: 
\begin{align*}
\Al_1 &= \begin{bmatrix}
-0.5 & 0 \\ 0 & -0.3
\end{bmatrix} \qquad \;\;\Bl_1 = \begin{bmatrix}
0.2 \\ 0.3
\end{bmatrix} \qquad \;\;\; \;\,\Cl_1 = \begin{bmatrix}
0.4 & 0.6
\end{bmatrix} \qquad \Dl_1 \;\;\,= \begin{bmatrix}
0 \\ 0
\end{bmatrix}\\
\Al_{3,2} &= \begin{bmatrix}
-0.2 & 0 \\ 0 & -0.7
\end{bmatrix} \qquad \Bl_{3,2} = \begin{bmatrix}
-0.5 \\ 0.4
\end{bmatrix} \qquad \Cl_{3,2} = \begin{bmatrix}
0.7 & 0.5
\end{bmatrix} \qquad \Dl_{3,2} = \begin{bmatrix}
0 \\ 0
\end{bmatrix}\\
\Al_{4,1} &= \begin{bmatrix}
-0.4 & 0 \\ 0 & -0.2
\end{bmatrix} \qquad \Bl_{4,1} = \begin{bmatrix}
-1.2 \\ -2
\end{bmatrix} \qquad \Cl_{4,1} = \begin{bmatrix}
1 & 1
\end{bmatrix} \qquad \quad \;\; \Dl_{4,1} = \begin{bmatrix}
0 \\ 0
\end{bmatrix}
\end{align*}
with corresponding input $u_{1,t},u_{3,2,t},u_{4,1,t}\in\mathbb{R}$, output $y_{1,t},y_{3,2,t},y_{4,1,t}\in\mathbb{R}$ and state signals $x_{1,t},x_{3,2,t},x_{4,1,t}\in\mathbb{R}^2$. The NL blocks $\Sigma^\text{NL}_{3,1}$, $\Sigma^\text{NL}_{4,2}$ and $\Sigma^\text{NL}_{6}$ are defined in terms of
\begin{align*}
f_{3,1}(u_{3,1}) &= \gamma_{3,1,0} + \gamma_{3,1,1} u_{3,1} + \gamma_{3,1,2} u_{3,1}^2\\
f_{4,2}(u_{4,2}) &= \gamma_{4,2,0} + \gamma_{4,2,1} u_{4,2} + \gamma_{4,2,2} u_{4,2}^2\\
f_{6}(u_{6}) &= \gamma_{6,0} + \gamma_{6,1} u_{6} + \gamma_{6,2} u_{6}^2
\end{align*}
with signal dimensions $f_{3,1},f_{4,2},f_6:\mathbb{R}\rightarrow \mathbb{R}$ and coefficients that are given in \Cref{tbl:table_coeffs_SISO_example}.
Note that we have only scalar polynomial nonlinearities that do not require decomposition. 
\begin{table}[ht!]\caption{Coefficients of the NL blocks in the monomial decomposition form in the SISO block chain example.}
\label{tbl:table_coeffs_SISO_example}
\begin{center}
\begin{tabular}{ |l|l|l| } 
 \hline
 $\gamma_{3,1,0}=\;\;\,0.2$ & $\gamma_{3,1,1}=-1.2$ & $\gamma_{3,1,2}=\;\;\,0.3$  \\ \hline
 $\gamma_{4,2.0}=-0.3$ & $\gamma_{4,2,1}=\;\;\,0.5$ & $\gamma_{4,2,2}=-0.1$ \\  \hline
 $\;\,\,\gamma_{6,0}=\;\;\,0.5$ & $\;\,\,\gamma_{6,1}=-2.2$ & $\;\,\,\gamma_{6,2}=-0.2$ \\  \hline 
\end{tabular}
\end{center}
\end{table}

\subsubsection{Embedding $\Sigma_1^\text{LTI}$}
According to \Cref{fig:SISO_example_1},
the first step is to embed $\Sigma^{\text{LTI}}_1$ into $\Sigma^{\text{PITI}}_1$ following \Cref{sec:embedding:LD}. However, due to linearity of this LD block and because all LD blocks have no feedtrough, we will accomplish the embedding directly into BLTI blocks $\Sigma^{\text{BLTI}}_i$. Based on 
\Cref{lemma:LTI_OP1,corr:LTI_OP1}, the dynamics of $\Sigma^{\text{BLTI}}_1$ are given by:

\begin{subequations}
\begin{align}
        \dot{z}_1&=\Ap_1 z_1 + \Bzct_1 \bar{u}_1, \\
        \bar{y}_1 &= \Cp z_1,
\end{align}
    \end{subequations}
with $\Ap _1= \Al _1$, $\Bzct_1 = \Bl_1$, $\Cp_1 = \Cl_1$, $z_1=x_1$, $\bar{u}_1=u_1=u$, $\bar{y}_1=y_1$, and  $\Bz_1 = 0_{n_{\mr z , 1} \times n_{\mr z , 1}}$.

\subsubsection{Absorbing the input junction}
For the next step, following  \Cref{lemma:rule_piti_ij} with \Cref{corr:rule_piti_ij}, we obtain two bilinear blocks $\Sigma^{\text{BLTI}}_{2,1} = \Sigma^{\text{BLTI}}_{2,2}=\Sigma^{\text{BLTI}}_{1}$. Thus, we have $z_{2,1}=z_{2,2}=z_1$, $\bar{u}_{2,1}=\bar{u}_{2,2}=\bar{u}_1 = u$, and $\bar{y}_{2,1}=y_{2,1}=\bar{y}_1$, which is equal to $\bar{y}_{2,2}=y_{2,2}=\bar{y}_1$. Furthermore, $\Ap_{2,1} = \Ap_{2,2}=\Ap_1$, $\Bzct_{2,1} = \Bzct_{2,2} = \Bzct_1$, $\Cp_{2,1}=\Cp_{2,2}=\Cp_1$, and $\Bz_{2,1}=\Bz_{2,2}=0_{n_{\mr z , 1} \times n_{\mr z , 1}}$. This results in the block-chain depicted in \Cref{fig:SISO_example_steps1}.

\subsubsection{Branch 1: absorbing $f_{3,1}$}\label{sec:siso_example_LTI1_f1}
Next, according to the systematic conversion process of the block-chain, we absorb $\Sigma^{\text{NL}}_{3,1}$ into the embedding. To do so, we start with collecting the equations defining the interconnection between $\Sigma^{\text{BLTI}}_{2,1}$ and $\Sigma^{\text{NL}}_{3,1}$:
\begin{equation}
	\begin{split}
		\dot{z}_{2,1} &= \Ap_{2,1}z_{2,1} + \Bzct_{2,1} \bar{u}_{2,1}\\
		\bar{y}_{2,1} &= \Cp_{2,1} z_{2,1} \\
		\bar{y}_{3,1} &= f_{3,1}(\bar{y}_{2,1})=\gamma_{3,1,0} + \gamma_{3,1,1}\Cp_{2,1}z_{2,1} + \gamma_{3,1,2}\Cp^{(2)}_{2,1}z^{(2)}_{2,1} 
	\end{split}
\end{equation}
with $\bar{y}_{3,1}=y_{3,1}$. Then, based on \Cref{lemma:rule_PITI_SN,col:piti_f_bilinear} and \eqref{eq:B_matrix_blti_nl_proof}, the resulting Koopman model $\Sigma^{\text{BLTI}}_{3,1}$ is a bilinear representation:
\begin{subequations}
\begin{align}
        \dot{z}_{3,1}&=\Ap_{3,1}z_{3,1} + \Bz_{3,1}z_{3,1}\bar{u}_{3,1}\\
        \bar{y}_{3,1}&=\Cp_{3,1}z_{3,1}
\end{align}
    \end{subequations}
where $\bar{u}_{3,1}=\bar{u}_{2,1}=u$ and, as $\Bz_{2,1}=0_{n_{\mr z , 2 ,1} \times n_{\mr z , 2 ,1}}$,
\begin{equation*}
    \Ap_{3,1} = \begin{bmatrix}
        0 & & \\
          & \Ap_{2,1} & \\
          & & {}^2\Ap_{2,1}
    \end{bmatrix} \qquad \text{and} \qquad \Bz_{3,1}=\begin{bmatrix}
        0 & & \\
        \Bzct_{2,1} & 0 & \\
        & {}^2 \Bzct_{2,1} & 0
    \end{bmatrix}.
\end{equation*}
where ${}^2 \Ap_{2,1} = \Ap_{2,1} \otimes I_{n_{\mr z , 2 , 1}} + I_{n_{\mr z , 2 , 1}} \otimes \Ap_{2,1}$ and ${}^2 \Bzct_{2,1}$ is similarly calculated. 
The output matrix and the state vector are given by
\begin{equation*}
    \Cp_{3,1} = \begin{bmatrix}
        \gamma_{3,1,0} & \gamma_{3,1,1}\Cp_{2,1} & \gamma_{3,1,2} \Cp_{2,1}^{(2)}
    \end{bmatrix} \qquad \text{and} \qquad z_{3,1} = \begin{bmatrix}
        1 & z^\top_{2,1} & \left(z^{(2)}_{2,1}\right)^\top
    \end{bmatrix}^\top.
\end{equation*}

\subsubsection{Branch 2: absorbing $\Sigma_{3,2}^\text{LTI}$}
As the next step, we embed the interconnection between $\Sigma^{\text{BLTI}}_{2,2}$ and $\Sigma^{\text{LTI}}_{3,2}$. Following  \Cref{lemma:rule_PITI_LD,corr:rule_PITI_LD}, this leads to the bilinear Koopman model $\Sigma^{\text{BLTI}}_{3,2}$:
\begin{subequations}
\begin{align}
        \dot{z}_{3,2}&=\Ap_{3,2}z_{3,2} + \Bzct_{3,2}\bar{u}_{3,2}\\
        \bar{y}_{3,2} &= \Cp_{3,2}z_{3,2}
\end{align}
    \end{subequations}
where $z_{3,2} = \begin{bmatrix}
    z^\top_{2,2} & x^\top_{3,2}
\end{bmatrix}^\top$, $\bar{u}_{3,2}=\bar{u}_{2,2}=u$, and $\bar{y}_{3,2}=y_{3,2}$. The state, input and output matrices are given by:
\begin{equation*}
    \begin{split}
        \Ap_{3,2} &= \begin{bmatrix}
            \Ap_{2,2} & 0 \\ \Bl_{3,2}\Cp_{2,2} & \Al_{3,2}
        \end{bmatrix}, \qquad \Bzct_{3,2} = \begin{bmatrix}
            \Bzct_{2,2} \\ 0
        \end{bmatrix}, \\
        \Cp_{3,2} &= \begin{bmatrix}
            0 & \Cl_{3,2}
        \end{bmatrix}.
    \end{split}
\end{equation*}
The resulting block-chain structure is given in \Cref{fig:SISO_example_steps2}.

\begin{figure}[htbp]
    \centering
    
    \begin{subfigure}{\textwidth}
        \centering
        \includegraphics[width=0.8\textwidth]{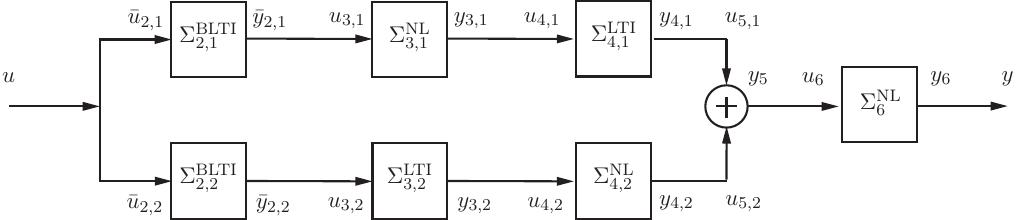}
        \vspace{0.5em}
        \caption{Embedding $\Sigma^{\text{LTI}}_1$ and the IJ.}
        \label{fig:SISO_example_steps1}
    \end{subfigure}
    
    \vspace{0.5em} 
    
    \begin{subfigure}[t]{0.54\textwidth}
        \centering
        \includegraphics[width=0.95\textwidth]{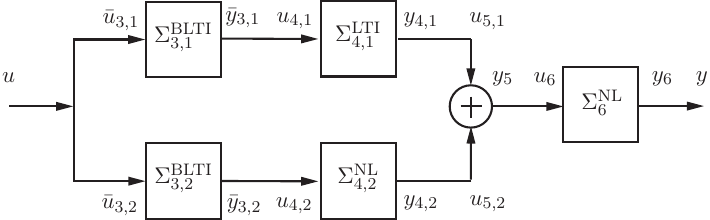}
        \vspace{0.5em}
        \caption{Embedding $\Sigma^{\text{NL}}_{3,1}$ and $\Sigma^{\text{LTI}}_{3,2}$.}
        \label{fig:SISO_example_steps2}
    \end{subfigure}
    \hspace{0.02\textwidth}
    \begin{subfigure}[t]{0.42\textwidth}
        \centering
        \includegraphics[width=0.9\textwidth]{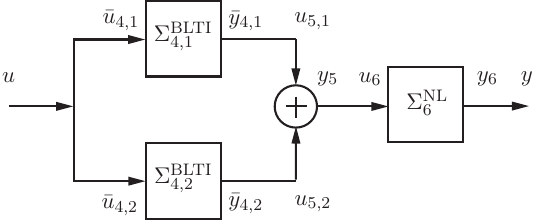}
        \vspace{0.5em}
        \caption{Embedding $\Sigma^{\text{LTI}}_{4,1}$ and $\Sigma^{\text{NL}}_{4,2}$.}
        \label{fig:SISO_example_steps3}
    \end{subfigure}
    
    \vspace{0.5em} 
    
    \begin{subfigure}[t]{0.58\textwidth}
        \centering
        \includegraphics[width=0.5\textwidth]{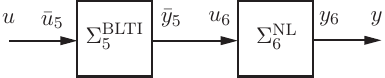}
        \vspace{0.5em}
        \caption{Embedding the OJ.}
        \label{fig:SISO_example_steps4}
    \end{subfigure}
    \hfill
    \begin{subfigure}[t]{0.38\textwidth}
        \centering
        \includegraphics[width=0.45\textwidth]{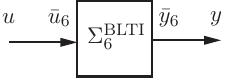}
        \vspace{0.5em}
        \caption{Embedding $\Sigma^{\text{NL}}_{6}$.}
        \label{fig:SISO_example_steps5}
    \end{subfigure}
    
    \caption{Visual representation of the steps taken to embed the SISO nonlinear blockchain system into a BLTI Koopman embedding.}
    \label{fig:SISO_example_steps}
\end{figure}
\subsubsection{Branch 1: absorbing $\Sigma_{4,1}^\text{LTI}$}
Similar to the previous step, to embed the interconnection between $\Sigma^\mathrm{BLTI}_{3,1}$ and $\Sigma^\mathrm{LTI}_{4,1}$, we use \Cref{lemma:rule_PITI_LD,corr:rule_PITI_LD} to obtain the bilinear Koopman model $\Sigma^{\text{BLTI}}_{4,1}$:
\begin{subequations}
\begin{align}
        \dot{z}_{4,1}&=\Ap_{4,1}z_{4,1} + \Bz_{4,1}z_{4,1}\bar{u}_{4,1}\\
        \bar{y}_{4,1} &= \Cp_{4,1}z_{4,1}
\end{align}
    \end{subequations}
where the state is $z_{4,1} = \begin{bmatrix}
    z^\top_{3,1} & x^\top_{4,1}
\end{bmatrix}^\top$, the input is $\bar{u}_{4,1}=\bar{u}_{3,1}=u$, and the output is $\bar{y}_{4,1}=y_{4,1}$. The matrices are given by:
\begin{equation*}
    \begin{split}
        \Ap_{4,1} &= \begin{bmatrix}
            \Ap_{3,1} & 0 \\ \Bl_{4,1}\Cp_{3,1} & \Al_{4,1}
        \end{bmatrix}, \qquad \Bz_{4,1} = \begin{bmatrix}
            \Bz_{3,1} & 0 \\ 0 & 0
        \end{bmatrix}, \\
        \Cp_{4,1} &= \begin{bmatrix}
            0 & \Cl_{4,1}
        \end{bmatrix}.
    \end{split}
\end{equation*}
\subsubsection{Branch 2: absorbing $f_{4,2}$}\label{sec:siso_example_LTI1_f42}
Next, the embedding of $\Sigma^{\text{BLTI}}_{3,2}$ followed by the nonlinear block $\Sigma^{\text{NL}}_{4,2}$, described by:
\begin{equation*}
    y_{4,2}=f_{4,2}(\bar{y}_{3,2})=\gamma_{4,2,0}+\gamma_{4,2,1}\Cp_{3,2}z_{3,2} + \gamma_{4,2,2}\Cp^{(2)}_{3,2}z_{3,2}^{(2)},
\end{equation*}
is processed.
As this interconnection is of the same type as the one discussed in \Cref{sec:siso_example_LTI1_f1}, we simply give the bilinear Koopman model $\Sigma^{\text{BLTI}}_{4,2}$:
\begin{subequations}
\begin{align}
        \dot{z}_{4,2}&=\Ap_{4,2}z_{4,2} + \Bz_{4,2}z_{4,2}\bar{u}_{4,2}\\
        \bar{y}_{4,2}&=\Cp_{4,2}z_{4,2}
 \end{align}
    \end{subequations}
where $\bar{u}_{4,2}=\bar{u}_{3,2}=u$ and, as $\Bz_{3,2}=0_{n_{\mr z , 3, 2}}$,
\begin{equation*}
    \Ap_{4,2} = \begin{bmatrix}
        0 & & \\
          & \Ap_{3,2} & \\
          & & {}^2\Ap_{3,2}
    \end{bmatrix} \qquad \text{and} \qquad \Bz_{4,2}=\begin{bmatrix}
        0 & & \\
        \Bzct_{3,2} & 0 & \\
        & {}^2 \Bzct_{3,2} & 0
    \end{bmatrix},
\end{equation*}
where ${}^2 \Ap_{3,2} = \Ap_{3,2} \otimes I_{n_{\mr z , 3 , 2}} + I_{n_{\mr z , 3 , 2}} \otimes \Ap_{3,2}$ and ${}^2 \Bzct_{3,2}$ is similarly defined. Finally, the output matrix and state vector are given by:
\begin{equation*}
    \Cp_{4,2} = \begin{bmatrix}
        \gamma_{4,2,0} & \gamma_{4,2,1}\Cp_{3,2} & \gamma_{4,2,2} \Cp_{3,2}^{(2)}
    \end{bmatrix} \qquad \text{and} \qquad z_{4,2} = \begin{bmatrix}
        1 & z^\top_{3,2} & \left(z^{(2)}_{3,2}\right)^\top
    \end{bmatrix}^\top.
\end{equation*}
The resulting block-chain structure is given in \Cref{fig:SISO_example_steps3}.

\subsubsection{Absorbing the output junction}
Next, we embed the two bilinear blocks $\Sigma^{\text{BLTI}}_{4,1}$ and $\Sigma^{\text{BLTI}}_{4,2}$ in a single BLTI block in term of the output junction. Noticing that $\bar{u}_{4,1}=\bar{u}_{4,2}=u$ and using the results in \Cref{lemma:rule_PITI_OJ,corr:OJ_OP7}, the resulting dynamics of the bilinear Koopman model $\Sigma^{\text{BLTI}}_5$ are given by:
\begin{subequations}
\begin{align}
        \dot{z}_5 &= \Ap_5 z_5 + \Bz_5 z_5 \bar{u}_5\\
        \bar{y}_5 &=\Cp_5 z_5 
\end{align}
    \end{subequations}
where $z_5 = \begin{bmatrix}
    z^\top_{4,1} & z^\top_{4,2}
\end{bmatrix}^\top$, $\bar{u}_5=u$, and $\bar{y}_5=y_5$ and the state, input, and output matrices are
\begin{equation*}
    \begin{split}
        \Ap_5 &= \begin{bmatrix}
            \Ap_{4,1} & \\ & \Ap_{4,2}
        \end{bmatrix}, \qquad \Bz_5=\begin{bmatrix}
            \Bz_{4,1} & \\ & \Bz_{4,2}
        \end{bmatrix},\\
        \Cp_5 &= \begin{bmatrix}
            \Cp_{4,1} & \Cp_{4,2}
        \end{bmatrix}.
    \end{split}
\end{equation*}
Also, note that $\Bzct_5=0_{n_{\mr z, 5}\times 1}$. The resulting block-chain structure is given in \Cref{fig:SISO_example_steps4}.

\subsubsection{Absorbing $f_{6}$}
The final step is to embed the series interconnection of the bilinear system $\Sigma^{\text{BLTI}}_{5}$ and the nonlinear block $\Sigma^{\text{NL}}_6$, described by:
\begin{equation*}
    y_6=f_6(\bar{y}_5)=\gamma_6 + \gamma_{6,1}\Cp_5 z_5 + \gamma_{6,2} \Cp_5^{(2)} z^{(2)}_5.
\end{equation*}
The derivation follows the same reasoning as detailed in \Cref{sec:siso_example_LTI1_f1,sec:siso_example_LTI1_f42}. Based on \Cref{lemma:rule_PITI_SN,col:piti_f_bilinear}, we obtain
\begin{subequations}
\begin{align}
        \dot{z}_6&= \Ap_6 z_6 + \Bz_6 z_6 \bar{u}_6\\
        \bar{y}_6&= \Cp_6 z_6
\end{align}
    \end{subequations}
with $z_6=[\
    1 \ z_5^\top \ (z^{(2)}_5)^\top\
]^\top$, $\bar{u}_6=u$, and $\bar{y}=y$. The state, input, and output matrices are given by:
\begin{equation*}
    \begin{split}
        \Ap_6 &= \begin{bmatrix}
            0 & & \\ 
            & \Ap_5 & \\
            & & {}^2 \Ap_5
        \end{bmatrix}, \qquad  \Bz_6 = \begin{bmatrix}
            0 & & \\
             & \Bz_5 & \\
            &  & {}^2 \Bz_5
        \end{bmatrix}, \\
        \Cp_6 &= \begin{bmatrix}
            \gamma_{6,0} & \gamma_{6,1}\Cp_5 & \gamma_{6,2}\Cp^{(2)}_5
        \end{bmatrix}.
    \end{split}
\end{equation*}
as, using \eqref{eq:B_matrix_blti_nl_proof}, we have that $\Bzct_5 = 0_{n_{\mr z, 5}\times 1}$ and we can also compute ${}^2\Ap_5=\Ap_{5} \otimes I_{n_{\mr z , 5}} + I_{n_{\mr z ,5}} \otimes \Ap_{5}$ and ${}^2\Bz_5=\Bz_5 \otimes I_{n_{\mr z , 5}} + I_{n_{\mr z ,5}} \otimes \Bz_5$. This results in \Cref{fig:SISO_example_steps2}, completing the process.

\subsubsection{Final model}

In the previous subsections, we have shown that the considered SISO nonlinear system, described by \Cref{fig:SISO_example_1}, can be exactly embedded into a BLTI model $\Sigma^{\text{BLTI}}_6$ with the lifted dynamics described by:
\begin{equation}\label{eq:SISO_example_bilinear_full}
\begin{split}
\dot{z}_6 &=\Ap_6 z_6 + \Bz_6 z_6 u\\
y&= \Cp_6 z_6
\end{split}
\end{equation}
with $n_\mathrm{u}=1$ and $n_\mathrm{y}=1$. The resulting lifted state $z_6$ is of dimension $n_{\mathrm{z},6}=931$, even though the state dimension of each LTI block is $n_\mathrm{x}=2$ and the maximum polynomial power in the nonlinear blocks in this example is $p=2$. One of the reasons for this, as discussed in \Cref{sec:final_models_MIMO_example}, is the high number of duplicate states resulting from the Kronecker products. Furthermore, at multiple embedding steps, constants are introduced in the state vector, e.g., $z_6=[\ 
    1 \ z_5^\top \ (z^{(2)}_5)^\top \ ]^\top$.
To remove the duplicate states, we apply the linear transformation $z=Tz_6$ with $z_6=T^\dagger z$, where $T\in\mathbb{R}^{n_\mr z \times n_{\mr z ,6}}$ is the transformation matrix that selects the unique elements and $T^\dagger$ is its inverse. This gives the final model $\Sigma^{\text{BLTI}}$ with:
\begin{equation}\label{eq:example_siso_reduced_blti}
\begin{split}
\dot{z}&= \Ap z +  \Bz z u\\
y&= \Cp z
\end{split}
\end{equation}
where $ A= T\Ap_6T^\dagger$, $ \Bz=T\Bz_6T^\dagger$, $ C= \Cp_6 T^\dagger$, and reduced state dimension of $n_{\mr z}=103$.
\par 
\begin{figure}[ht!]
\begin{center}   
\includegraphics[width=0.7\textwidth]{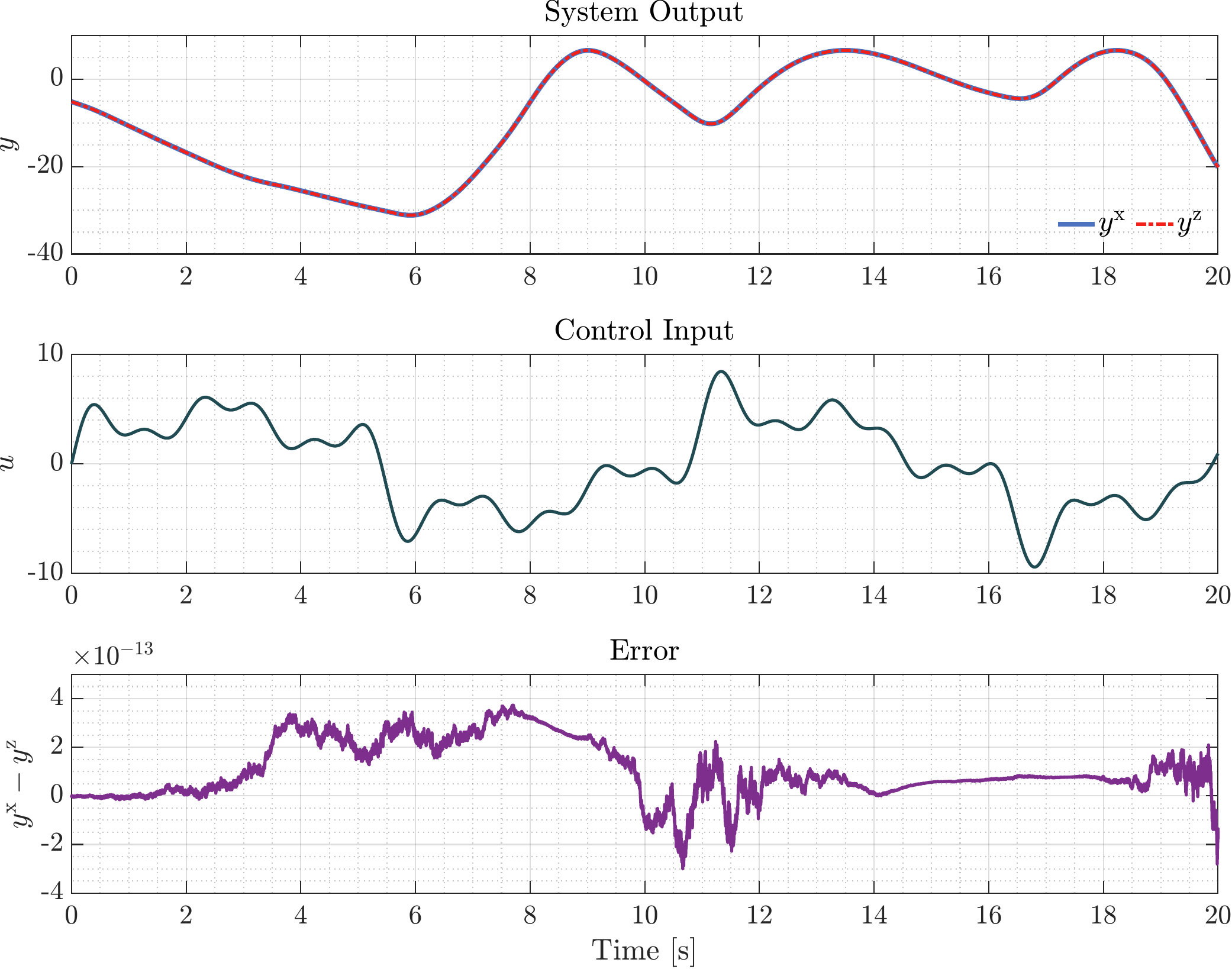}
\caption{Simulated output response ($y^\mathrm{x}$) of the SISO NL block chain system depicted in \Cref{fig:SISO_example_1} and the response ($y^\mathrm{z}$) of the embedding process provided BLTI Koopman model $\Sigma^{\text{BLTI}}$  (top plot) for the multisine input signal $u$ (middle plot) and the difference of the obtained responses (bottom plot).}
\label{fig:SISO_results}
\end{center}
\end{figure}
To show that the obtained model is an exact representation of the original NL block chain model, simulation responses of the original nonlinear system and of the reduced Koopman BLTI model $\Sigma^{\text{BLTI}}$ under a multisine input are given in \Cref{fig:SISO_results}. The input $u=\sum_iA_i\sin(2\pi f^{\text{Hz}}_i t)$ is a sum of 6 sinusoids, with frequencies from an equidistant grid between $0.1$ and $1\text{ Hz}$, and various amplitudes. The initial condition of the states of the LTI blocks is chosen as $x_{1,0}=x_{3,2,0}=x_{4,1,0}=[1 \;\; 1]^\top$, while  $z_0 = Tz_{6,0}$ is based on the construction of $z_6$. The numerical integration method used to obtain the responses is Runge Kutta 4 with a step size of $\delta t=10^{-4}$s. \Cref{fig:SISO_results} shows that the error between the simulated output $y^\mathrm{x}$ of the nonlinear system and the output $y^\mathrm{z}$ of the obtained BILTI Koopman model  $\Sigma^{\text{BLTI}}$ is in the order of $10^{-13}$, which is close to numerical precision. This shows that \eqref{eq:example_siso_reduced_blti} is an exact embedding.
\vspace{-.1cm}
\section{Conclusions}
\label{sec:conclusions}

The present paper treats the problem of deriving exact and ~finite-dimensional Koopman models. Starting from a nonlinear system that is represented as a network of linear and nonlinear blocks (the Wiener-Hammerstein system and its different configurations are well-known examples), a Koopman model with constant state and output matrices and polynomial input structure is obtained by exploiting the properties of the Kronecker product of the states. Moreover, if the linear blocks do not have feedthrough terms, an exact bilinear model can be derived. This is a particularly exciting result, as such a structure has been heavily applied in the Koopman-form-based control of nonlinear systems with great results in practice, and prior exact derivations of bilinear Koopman models were based on conditions much more difficult to satisfy (see \eqref{eq:bilinear_state_condition}). Furthermore, we provide an algorithm to directly compute the analytic form of these finite Koopman models requiring no data or approximations compared to other methods in the literature.
Examples both for the PITI and BLTI forms have been discussed to validate the technique and the resulting models and to demonstrate that the nonlinear behavior is exactly captured. 

\appendix
\section{Lemmas and proofs}
\subsection{Kronecker product properties}\label{app:A1}
We list here several useful properties of the Kronecker product for completeness as they are used later in the proofs. \par 
We start with the \emph{Mixed Product Property} (MPP) of the Kronecker product. That is, given matrices $A,B,C,D$ with appropirate dimensions such that $AC$, $BD$ can be computed, then, as detailed in Proposition 7.1.6 in \cite{MatrixMath}, it holds that:
\begin{equation}\label{eq:mixed_prod_property}
    (A\otimes B)(C\otimes D) = (AC) \otimes (BD).
\end{equation}
 \par
A second property detailed in Fact 7.4.1 in \cite{MatrixMath} is that a Kronecker product of two vectors $x,y\in\mathbb{R}^{n}$ can be alternatively described as:
\begin{equation}\label{eq:expanded_vector_prod}
    x \otimes y = \left(x \otimes I_n \right)y = \left(I_n \otimes y \right)x.
\end{equation}
 \par
Finally, the Kronecker power of the product of two matrices $A\in\mathbb{R}^{n\times m}$ and $B\in\mathbb{R}^{m \times l}$ can be expanded as:
\begin{equation}\label{eq:decoupled_kron_prod}
    \left(AB\right)^{(k)}=A^{(k)}B^{(k)}
\end{equation}
as noted in Fact 7.4.10 in \cite{MatrixMath}. Note that \eqref{eq:decoupled_kron_prod} also holds if $B$ is  a vector of dimension $\mathbb{R}^m$.
\subsection{The Kronecker gradient}\label{app:A2}
Let  $x\in\mathbb{R}^{n_\mr x}$ and $x^{(i)}$ denote the $i^\mathrm{th}$ Kronecker power, i.e., $x^{(i)} = \underbrace{x \otimes \dots \otimes x}_{i\text{ times}}$, with $x^{(1)}=x$ and $x^{(0)}=1\in\mathbb{R}$. Then, the following Lemma holds.
\begin{lemma}\label{lemma:appendix_lemma_1} The Jacobian of the $i^\mathrm{th}$ Kronecker power $x^{(i)}$ with $x\in\mathbb{R}^{n_\mathrm{x}}$ is
\begin{equation}
\frac{\partial x^{(i)}}{\partial x}=\sum^{i-1}_{k=0}x^{(k)} \otimes I_{n_\mr x} \otimes x^{(i-k-1)}.
\end{equation}
\end{lemma}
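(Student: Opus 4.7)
\textbf{Proof plan for \Cref{lemma:appendix_lemma_1}.}

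The natural approach is induction on $i$, with the key ingredient being a Kronecker product rule for Jacobians. The plan is first to verify the base case $i=1$: since $x^{(1)}=x$, the left-hand side is $I_{n_{\mr x}}$, and the right-hand side reduces to the single term $x^{(0)}\otimes I_{n_{\mr x}} \otimes x^{(0)} = 1 \otimes I_{n_{\mr x}} \otimes 1 = I_{n_{\mr x}}$, matching.

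The main step is the inductive one. Assuming the formula holds for some $i\geq 1$, I would write $x^{(i+1)} = x \otimes x^{(i)}$ and apply a Kronecker product rule for Jacobians, namely
\begin{equation*}
\frac{\partial(u(x)\otimes v(x))}{\partial x} \;=\; \frac{\partial u}{\partial x}\otimes v \;+\; u \otimes \frac{\partial v}{\partial x},
\end{equation*}
valid whenever $u\in\mathbb{R}^p$, $v\in\mathbb{R}^q$ so that both summands lie in $\mathbb{R}^{pq\times n_{\mr x}}$. This identity is the only nontrivial tool and deserves a short justification, which I would give by a direct index computation: writing $(u\otimes v)_{(\alpha-1)q+\beta}=u_\alpha v_\beta$, differentiation in $x_l$ and the ordinary scalar product rule gives $U_{\alpha l}v_\beta + u_\alpha V_{\beta l}$, which matches precisely the entries of $U\otimes v$ and $u\otimes V$ at row $(\alpha-1)q+\beta$ and column $l$. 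This is the step I expect to require the most care, as one must keep the row ordering consistent with the convention used to interpret $x^{(i)}$.

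With this product rule in hand, applying it to $x^{(i+1)}=x\otimes x^{(i)}$ and using the inductive hypothesis yields
\begin{equation*}
\frac{\partial x^{(i+1)}}{\partial x} \;=\; I_{n_{\mr x}}\otimes x^{(i)} \;+\; x \otimes \sum_{k=0}^{i-1} x^{(k)}\otimes I_{n_{\mr x}} \otimes x^{(i-k-1)}.
\end{equation*}
Absorbing $x$ into the adjacent Kronecker factor via $x\otimes x^{(k)} = x^{(k+1)}$ and reindexing the sum by $k'=k+1$ turns the second term into $\sum_{k'=1}^{i} x^{(k')}\otimes I_{n_{\mr x}} \otimes x^{(i-k')}$. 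The first term $I_{n_{\mr x}}\otimes x^{(i)}$ is exactly the missing $k'=0$ summand, so the two terms merge into $\sum_{k=0}^{i} x^{(k)} \otimes I_{n_{\mr x}} \otimes x^{((i+1)-k-1)}$, which is the claimed formula at index $i+1$. This closes the induction and completes the proof.
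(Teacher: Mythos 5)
Your proof is correct and follows essentially the same route as the paper: induction on $i$ combined with a Leibniz rule for the Kronecker product (the paper phrases this rule via differentials, $\dif(x\otimes x)=\dif x\otimes x+x\otimes\dif x$, and decomposes $x^{(i+1)}=x^{(i)}\otimes x$ rather than $x\otimes x^{(i)}$, but these are cosmetic differences). Your explicit index verification of the Jacobian product rule is a sound, self-contained substitute for the paper's citation of the matrix-differential-calculus literature.
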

\begin{proof}
	We use the property given in \cite{magnus_differential}:
\begin{equation}
	\dif \left(x \otimes x\right) = \dif x \otimes x + x \otimes\dif x
\end{equation}
where $\dif x$~is the differential of $x$. For a function $f:\mathbb{R}^{n_\mr x}\rightarrow \mathbb{R}^{n_\mr f}$, the connection between the differential and the first derivative (Jacobian) is given by:
\begin{equation}\label{eq:connection_differential_derivative}
    \dif f = \frac{\partial f}{\partial x}\dif x
\end{equation}
see Theorem 18.1 in \cite{magnus_econometrics}. We accomplish the proof by induction.

{\bf Case $i=1$:} This case is straightforward with $\frac{\partial x^{(1)}}{\partial x}=\frac{\partial x}{\partial x}=I_{n_\mathrm{x}}$.

{\bf Case $i=2$:} Start with $\dif x^{(2)} = \dif \left(x \otimes x\right) = \dif x \otimes x + x \otimes\dif x$ and  
	 use property \eqref{eq:expanded_vector_prod} to write:
	\begin{align}
		\dif x ^{(2)}&=\left(I_{n_\mr x}\otimes x \right) \dif x + \left(x \otimes I_{n_\mr x}\right) \dif x \\
        &=\left(I_{n_\mathrm{x}}\otimes x + x \otimes I_{n_\mathrm{x}}\right)\dif x \notag 
        \end{align}
	Then, using \eqref{eq:connection_differential_derivative}, as $\dif x^{(2)}=\frac{\partial x^{(2)}}{\partial x}$, it follows that:
	\begin{equation}
		\frac{\partial x^{(2)}}{\partial x}=I_{n_\mathrm{x}}\otimes x + x \otimes I_{n_\mathrm{x}}.
	\end{equation}

{\bf Case $i=3$:} As $x^{(3)}= x^{(2)}\otimes x$, based on the previous derivations, it holds that:
	\begin{align}
		\dif x^{(3)} &= \dif x^{(2)} \otimes x + x^{(2)}\otimes \dif x,\\
		&=\left(\dif x \otimes x + x \otimes \dif x\right)\otimes x + x^{(2)} \otimes \dif x, \notag\\
        &=\left(I_{n_\mathrm{x}}\otimes x + x \otimes I_{n_\mathrm{x}} \right)\dif x \otimes x + (x^{(2)}\otimes I_{n_{\mr x}})\dif x,\notag\\
        &= \left(I_{n_\mathrm{x}}\otimes x\otimes x + x \otimes I_{n_\mathrm{x}} \otimes x + x \otimes x \otimes I_{n_\mathrm{x}}\right) \dif x. \notag
		\end{align}
   As $\dif x^{(3)} = \frac{\partial x^{(3)}}{\partial x}\dif x$, it holds that:
	\begin{equation}
			\frac{\partial x^{(3)}}{\partial x}=I_{n_\mathrm{x}}\otimes x^{(2)} + x \otimes I_{n_\mathrm{x}} \otimes x + x^{(2)} \otimes I_{n_\mathrm{x}}.
	\end{equation}

{\bf Case $i+1$: }
Let \begin{equation}
	\frac{\partial x ^{(i)}}{\partial x}=\sum^{i-1}_{k=0} x^{(k)} \otimes I_{n_\mathrm{x}} \otimes x^{(i-k-1)}.
\end{equation}
We need to prove that:
\begin{equation}\label{eq:appendix_dx_induxtion_last_step}
	\frac{\partial x ^{(i+1)}}{\partial x}=\sum^{i}_{k=0} x^{(k)} \otimes I_{n_\mathrm{x}} \otimes x^{(i-k)}.
\end{equation}
We start with $x^{(i+1)}=x^{(i)}\otimes x$ and use $\dif x^{(i)}=\frac{\partial x^{(i)}}{\partial x}$. Then:
\begin{align}
		\dif x^{(i+1)}&= \dif\ (x^{(i)} \otimes x) 
		=\dif x^{(i)} \otimes x + x^{(i)} \otimes \dif x \\
		&= \sum^{i-1}_{k=0} x^{(k)} \otimes \dif x \otimes x^{(i-k-1)} \otimes x  + x^{(i)}\otimes I_{n_\mr x}  \notag \\
        &= \sum^{i}_{k=0} x^{(k)} \otimes \dif x \otimes x^{(i-k)}  \notag
	\end{align}
such that we obtain \eqref{eq:appendix_dx_induxtion_last_step}. This, by induction, proves \Cref{lemma:appendix_lemma_1}.
\end{proof}
\subsection{The Kronecker gradient product rules}
\begin{lemma}\label{lemma:appendix_lemma_2} Let $x\in\mathbb{R}^{n_\mr x}$, $I_{n_\mr x}\in\mathbb{R}^{n_\mr x \times n_\mr x}$ and $A\in\mathbb{R}^{n_\mr x \times n_\mr x}$. It holds that
	
    \begin{equation}
    \sum^{i-1}_{k=0}\left(x^{(k)} \otimes I_{n_{\mr x}}\otimes x^{(i-k-1)}\right)Ax=\sum^{i-1}_{k=0} x^{(k)} \otimes Ax \otimes x^{(i-k-1)}.
    \end{equation} 
\end{lemma}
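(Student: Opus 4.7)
The claim reduces to a pointwise application of the Mixed Product Property (MPP) \eqref{eq:mixed_prod_property} inside the summation, so the plan is to verify the identity term by term for each fixed $k\in\{0,\ldots,i-1\}$ and then sum.

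First I would record the relevant sizes: $x^{(k)}$ is $n_\mr x^{k}\times 1$, $I_{n_\mr x}$ is $n_\mr x\times n_\mr x$, and $x^{(i-k-1)}$ is $n_\mr x^{i-k-1}\times 1$, so each matrix factor $x^{(k)}\otimes I_{n_\mr x}\otimes x^{(i-k-1)}$ has dimensions $n_\mr x^{i}\times n_\mr x$, compatible with $Ax\in\mathbb{R}^{n_\mr x}$ on the right, and both sides live in $\mathbb{R}^{n_\mr x^{i}}$. Next, I would rewrite the column vector $Ax$ as the trivial Kronecker product $1\otimes Ax\otimes 1$, where the $1\times 1$ scalar factors match the column structure of the two vector factors on the left. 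Applying MPP to this three-factor product, which is well defined because $\otimes$ is associative, yields
\begin{equation*}
\bigl(x^{(k)}\otimes I_{n_\mr x}\otimes x^{(i-k-1)}\bigr)\bigl(1\otimes Ax\otimes 1\bigr)=\bigl(x^{(k)}\cdot 1\bigr)\otimes\bigl(I_{n_\mr x}Ax\bigr)\otimes\bigl(x^{(i-k-1)}\cdot 1\bigr)=x^{(k)}\otimes Ax\otimes x^{(i-k-1)}.
\end{equation*}
Since $1\otimes Ax\otimes 1=Ax$, the left-hand side of this identity is exactly the $k$-th summand on the left-hand side of the lemma statement, and the right-hand side is the $k$-th summand on the right. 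Summing over $k=0,\ldots,i-1$ closes the argument.

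There is essentially no obstacle here; the lemma is a direct corollary of MPP once the dimensions are matched by viewing $Ax$ as a Kronecker product with trivial scalar factors. The only subtlety worth flagging is that the position of the $Ax$ factor in the middle slot must be preserved; one cannot collapse the sum to something like $(Ax)\otimes x^{(i-1)}$ because the identity holds only slot-by-slot, which is precisely the combinatorial structure \Cref{lemma:appendix_lemma_1} produces and that the subsequent embedding proofs rely on.
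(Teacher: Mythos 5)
Your proposal is correct and follows essentially the same route as the paper: a termwise application of the mixed product property after padding $Ax$ with trivial scalar Kronecker factors. The only cosmetic difference is that you apply a single three-factor MPP to $1\otimes Ax\otimes 1$, whereas the paper writes $Ax=Ax\otimes 1$, applies the two-factor MPP, and then invokes \eqref{eq:expanded_vector_prod} to handle the first two slots; the dimension checks and the conclusion are identical.
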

\begin{proof}
	Each element of the sum on the left is $\left(x^{(a)}\otimes I_{n_\mr x}\otimes x^{(b)}\right)Ax$ with $a=k,\;b=i-k-1$ where  $Ax\in\mathbb{R}^{n_\mr x}$ is a vector.  
Then using the identities in \Cref{app:A1}
and $Ax\otimes 1 = Ax$, we have:
		\begin{align}
		    \left(x^{(a)}\otimes I_{n_x}\otimes x^{(b)}\right)Ax&= \;\left(x^{(a)}\otimes I_{n_x}\otimes x^{(b)}\right)\left(Ax\otimes 1\right)\\
            &\stackrel{\mathmakebox[\widthof{=}]{\eqref{eq:mixed_prod_property}}}{=}\;\left(\left(x^{(a)} \otimes I_{n_\mr x}\right) Ax\right)\otimes x^{(b)}\notag\\
            &\stackrel{\mathmakebox[\widthof{=}]{\eqref{eq:expanded_vector_prod}}}{=}\; x^{(a)}\otimes Ax \otimes x^{(b)} \notag
		\end{align}
This holds for all elements in the sum, thus \Cref{lemma:appendix_lemma_2} holds. 
\end{proof}
\begin{lemma}\label{lemma:appendix_lemma_3}
	Let $x\in\mathbb{R}^{n_\mr x}$ and $A\in\mathbb{R}^{n_\mr x \times n_\mr x}$. It holds that:
   \begin{equation}
        \sum^{i-1}_{k=0} x^{(k)} \otimes Ax \otimes x^{(i-k-1)} = \sum^{i-1}_{k=0}\left(I_{n_\mr x}^{(k)} \otimes A \otimes I_{n_\mr x}^{(i-k-1)}\right) x^{(i)} .
    \end{equation}
\end{lemma}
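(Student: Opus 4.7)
The plan is to establish the identity term-by-term: for each fixed $k\in\{0,\ldots,i-1\}$, I would show that
\begin{equation*}
x^{(k)} \otimes Ax \otimes x^{(i-k-1)} = \bigl(I_{n_\mr x}^{(k)} \otimes A \otimes I_{n_\mr x}^{(i-k-1)}\bigr)\, x^{(i)},
\end{equation*}
after which summing over $k$ from $0$ to $i-1$ yields the claimed equality immediately. So the real work reduces to verifying a single summand identity, and no induction is needed.

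First, I would rewrite each factor on the left-hand side as a trivial matrix-vector product: $x^{(k)} = I_{n_\mr x}^{(k)} x^{(k)}$, $Ax = A\,x$, and $x^{(i-k-1)} = I_{n_\mr x}^{(i-k-1)} x^{(i-k-1)}$. Then I would apply the mixed product property \eqref{eq:mixed_prod_property} twice (once to split the outer Kronecker product, and once more on the inner block) to pull the matrix factors $I_{n_\mr x}^{(k)}$, $A$, $I_{n_\mr x}^{(i-k-1)}$ out to the left of the corresponding vector factors. This gives
\begin{equation*}
x^{(k)} \otimes Ax \otimes x^{(i-k-1)} = \bigl(I_{n_\mr x}^{(k)} \otimes A \otimes I_{n_\mr x}^{(i-k-1)}\bigr)\bigl(x^{(k)} \otimes x \otimes x^{(i-k-1)}\bigr).
\end{equation*}

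Finally, I would invoke associativity of the Kronecker product, which gives $x^{(k)} \otimes x \otimes x^{(i-k-1)} = x^{(k+1+(i-k-1))} = x^{(i)}$, exactly what is needed. Summing the resulting identity over $k$ produces the right-hand side of \Cref{lemma:appendix_lemma_3}.

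The only delicate point is keeping the block dimensions of the Kronecker products aligned when applying the mixed product property two times in a row, so that the dimensions $n_\mr x^{k}$, $n_\mr x$, and $n_\mr x^{i-k-1}$ match on the matrix and vector sides of each application of \eqref{eq:mixed_prod_property}. This is a routine bookkeeping exercise rather than a genuine obstacle, and once done the proof is essentially a one-line application of the properties collected in \Cref{app:A1}.
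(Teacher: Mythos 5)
Your proposal is correct and follows essentially the same route as the paper's proof: both work summand-by-summand, insert identity matrices via $x^{(k)}=I_{n_\mr x}^{(k)}x^{(k)}$, and apply the mixed product property \eqref{eq:mixed_prod_property} (twice) to pull $I_{n_\mr x}^{(k)}\otimes A\otimes I_{n_\mr x}^{(i-k-1)}$ out in front of $x^{(k)}\otimes x\otimes x^{(i-k-1)}=x^{(i)}$. The only cosmetic difference is that the paper interleaves the two applications of \eqref{eq:mixed_prod_property} with uses of \eqref{eq:decoupled_kron_prod}, whereas you factor all three blocks at once; the content is identical.
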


\begin{proof}
	Each element of the sum on the left is $x^{(a)}\otimes Ax \otimes x^{(b)}$ with $a=k,\; b=i-k-1$. Using \eqref{eq:decoupled_kron_prod}, we have that $x^{(a)}=\left(I_{n_\mr x} x\right)^{(a)}=I_{n_\mr x}^{(a)}x^{(a)}$. Then:
	\begin{align}
		x^{(a)}\otimes Ax \otimes x^{(b)} &=\; \left(I_{n_x}^{(a)}x^{(a)}\otimes Ax\right)\otimes x^{(b)}\\
		&\stackrel{\mathmakebox[\widthof{=}]{\eqref{eq:mixed_prod_property}}}{=}\;\left(\left(I_{n_x}^{(a)}\otimes A\right)x^{(a+1)}\right)\otimes x^{(b)}\notag\\
		&\stackrel{\mathmakebox[\widthof{=}]{\eqref{eq:decoupled_kron_prod}}}{=}\;\left(\left(I_{n_x}^{(a)}\otimes A\right)x^{(a+1)}\right)\otimes\left(I^{(b)}_{n_x}x^{(b)}\right)\notag\\
		&\stackrel{\mathmakebox[\widthof{=}]{\eqref{eq:mixed_prod_property}}}{=}\;\;\left(I^{(a)}_{n_x}\otimes A\otimes I^{(b)}_{n_x}\right)x^{(a+b+1)}\notag
	\end{align}
    and $a+b+1 = i$. As this holds for all elements of the sum,  \Cref{lemma:appendix_lemma_3} holds.
\end{proof}

Overall, \Cref{lemma:appendix_lemma_1,lemma:appendix_lemma_2,lemma:appendix_lemma_3} show that, for $x\in\mathbb{R}^{n_\mr x}$ and $A\in\mathbb{R}^{n_\mr x \times n_\mr x}$:
\begin{equation}\label{eq:kron_gradient_Ax}
    \frac{\partial x^{(i)}}{\partial x}Ax = \underbrace{\left(\sum^{i-1}_{k=0}I_{n_\mr x}^{(k)}\otimes A \otimes I^{(i-k-1)}_{n_\mr x}\right)}_{{}^iA}x^{(i)}.
\end{equation}

\par 
Moreover, the multiplication of the gradient of $x^{(i)}$ with $B\in\mathbb{R}^{n_\mr x}$, which can be seen as a column of $\bar{B}\in\mathbb{R}^{n_\mathrm{x}\times n _\mr u}$, is similar to \eqref{eq:kron_gradient_Ax}: 
\begin{lemma}\label{lemma:appendix_lemma_partial_B}
For $x\in\mathbb{R}^{n_\mr x}$ and $B\in\mathbb{R}^{n_\mr x}$, the product between the gradient of $x^{(i)}$ and $B$ can be expressed as:
\begin{equation}\label{eq:kron_gradient_partial_B}
    \frac{\partial x^{(i)}}{\partial x}B =\underbrace{\left(\sum^{i-1}_{k=0}I_{n_{\mr x}}^{(k)}\otimes  B \otimes I_{n_{\mr x}}^{(i-k-1)}\right)}_{{}^i B}x^{(i-1)}. 
\end{equation}
\end{lemma}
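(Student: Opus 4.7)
The plan is to mimic the strategy used in \Cref{lemma:appendix_lemma_2,lemma:appendix_lemma_3}, which together produced the companion identity \eqref{eq:kron_gradient_Ax} for multiplication by $Ax$. The only structural difference here is that $B\in\mathbb{R}^{n_\mr x}$ does not depend on $x$, which is what reduces the trailing Kronecker power from $x^{(i)}$ to $x^{(i-1)}$.

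First, I would apply \Cref{lemma:appendix_lemma_1} to expand
\[
\frac{\partial x^{(i)}}{\partial x}\,B \;=\; \sum_{k=0}^{i-1}\bigl(x^{(k)}\otimes I_{n_\mr x}\otimes x^{(i-k-1)}\bigr)B,
\]
and treat each summand separately with fixed indices $a=k$, $b=i-k-1$. The factor on the left has dimension $n_\mr x^{i}\times n_\mr x$, and $B$ is $n_\mr x\times 1$, so each summand is a vector in $\mathbb{R}^{n_\mr x^{i}}$, which matches the dimension of the right-hand side $(I_{n_\mr x}^{(k)}\otimes B\otimes I_{n_\mr x}^{(i-k-1)})x^{(i-1)}$.

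Next, I would rewrite $B$ as the trivial Kronecker product $B=1\otimes B\otimes 1$ and apply the mixed-product property \eqref{eq:mixed_prod_property} to collapse the identity in the middle, obtaining
\[
\bigl(x^{(a)}\otimes I_{n_\mr x}\otimes x^{(b)}\bigr)B \;=\; \bigl(x^{(a)}\cdot 1\bigr)\otimes\bigl(I_{n_\mr x}B\bigr)\otimes\bigl(x^{(b)}\cdot 1\bigr)\;=\;x^{(a)}\otimes B\otimes x^{(b)}.
\]
This is the analogue of \Cref{lemma:appendix_lemma_2}, but simpler because $B$ is a constant vector rather than $Ax$.

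For the final step I would pull the $x$-dependence out to the right. Using the identity $x^{(k)}=I_{n_\mr x}^{(k)}x^{(k)}$ from \eqref{eq:decoupled_kron_prod}, together with $B=B\cdot 1$, and the mixed-product property,
\[
x^{(a)}\otimes B\otimes x^{(b)} \;=\; \bigl(I_{n_\mr x}^{(a)}\otimes B\otimes I_{n_\mr x}^{(b)}\bigr)\bigl(x^{(a)}\otimes 1\otimes x^{(b)}\bigr) \;=\; \bigl(I_{n_\mr x}^{(a)}\otimes B\otimes I_{n_\mr x}^{(b)}\bigr)x^{(a+b)},
\]
where $a+b=i-1$. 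Summing over $k=0,\dots,i-1$ yields \eqref{eq:kron_gradient_partial_B}. I don't expect any real obstacle here; the only subtle point worth verifying explicitly is the dimension bookkeeping showing that $(x^{(a)}\otimes 1\otimes x^{(b)})=x^{(i-1)}$, so that the trailing Kronecker power is indeed one less than in the $Ax$ case of \eqref{eq:kron_gradient_Ax}. The rest is a direct rerun of the template established in \Cref{lemma:appendix_lemma_2,lemma:appendix_lemma_3}.
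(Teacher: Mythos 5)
Your proposal is correct and follows essentially the same route as the paper's proof: expand the Jacobian via \Cref{lemma:appendix_lemma_1}, reduce each summand to $x^{(a)}\otimes B\otimes x^{(b)}$ (the paper invokes \Cref{lemma:appendix_lemma_2} with $B$ in place of $Ax$, you apply the mixed-product property directly, which amounts to the same computation), and then factor out $x^{(a+b)}=x^{(i-1)}$ using $x^{(k)}=I_{n_\mr x}^{(k)}x^{(k)}$ and \eqref{eq:mixed_prod_property}. The dimension bookkeeping you flag checks out, so no gap remains.
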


\begin{proof}
First, based on \Cref{lemma:appendix_lemma_1}:
\begin{equation}
		\frac{\partial x^{(i)}}{\partial x}B  =\left(\sum^{i-1}_{k=0}x^{(k)}\otimes I_{n_{\mr x}} \otimes  x^{(i-k-1)}\right)B.
\end{equation}
Next, based on \Cref{lemma:appendix_lemma_2}, for each term of the sum, it holds that:

\begin{equation}
		 \left((x^{(a)}\otimes I_{n_{\mr x}} \otimes  x^{(b)}\right)B =\;x^{(a)} \otimes B \otimes x^{(b)}.
\end{equation} 

Next, we factorize the Kronecker powers of $x$ in the term $x^{(a)}\otimes B \otimes x^{(b)}$. Note that we denote here the multiplication with a  scalar by $\cdot$ and we obtain:
	\begin{align}
			x^{(a)}\otimes B \otimes x^{(b)}\; &\stackrel{\mathmakebox[\widthof{=}]{\eqref{eq:decoupled_kron_prod}}}{=}\;\left(I^{(a)}_{n_x}x^{(a)}\otimes B \cdot 1\right)\otimes {x^{(b)}}\\
			&\stackrel{\mathmakebox[\widthof{=}]{\eqref{eq:mixed_prod_property}}}{=}\;\left(I^{(a)}_{n_x}\otimes B\right)\left(x^{(a)}\otimes 1\right)\otimes x^{(b)}\notag\\
			&\stackrel{\mathmakebox[\widthof{=}]{\eqref{eq:decoupled_kron_prod}}}{=}\;\left(I^{(a)}_{n_x} \otimes B\right)x^{(a)} \otimes \left(I^{(b)}_{n_x}x^{(b)}\right)\notag\\
			&\stackrel{\mathmakebox[\widthof{=}]{\eqref{eq:mixed_prod_property}}}{=}\;\left(I^{(a)}_{n_x}\otimes B \otimes I^{(b)}_{n_x}\right)x^{(a+b)}\notag
		\end{align}
	with $a+b=i-1$. This again holds for every term of the sum, concluding the proof of \Cref{lemma:appendix_lemma_partial_B}.
\end{proof}
\subsection{Proof of \Cref{lemma:rule_PITI_SN}}\label{apx:PITI_f_to_PITI}

We start with the decomposition of the function $f_i$ in terms of $f_i(\bar{y}_{i-1}) = W_ig_i(V_i^\top \bar{y}_{i-1})$. We describe $g_{i,e}$ as follows, with $e\in\{1,\dots,r_i\}$, where $r_i$ is the decoupling order of $f_i$ and $p_i$ is the maximum of the monomial orders in the decomposition 
of $f_i$: 
\begin{align}
		g_{i,e}(\sigma_{i,e})&= \gamma_{i,e,0} + \gamma_{i,e,1}\sigma_{i,e} + \dots + \gamma_{i,e,p_i}\sigma_{i,e}^{p_i}\\
        &=\gamma_{i,e,0} + \gamma_{i,e,1}\left(v_{i,e}\bar{y}_{i-1}\right) + \dots+\gamma_{i,e,p_i}\left(v_{i,e}\bar{y}_{i-1}\right)^{p_i}\notag\\
		&= \gamma_{i,e,0} + \gamma_{i,e,1} \left(\tilde{v}_{i,e} z_{i-1} + \hat{v}_{i,e}\bar{u}_{i-1}\right)+ \dots + \gamma_{i,e,p_i} \left(\tilde{v}_{i,e} z_{i-1} + \hat{v}_{i,e}\bar{u}_{i-1}\right)^{p_i}\notag
	\end{align}
with $\sigma_{i,e}=v_{i,e}\bar{y}_{i-1}$, where $v_{i,e}$ is the $e\textsuperscript{th}$ column of $V^\top_i$, such that:
\begin{equation*}
    \tilde{v}_{i,e}=v_{i,e}\Cp _{i-1} \qquad \text{and} \qquad \hat{v}_{i,e}=v_{i,e}\Py _{i-1}(z_{i-1})\Ry _{i-1}(\bar{u}_{i-1}).
\end{equation*}
We continue expanding $g_{i,e}(\sigma_{i,e})$ as follows:
\begin{multline}\label{eq:initial_gie_decomp}
        g_{i,e}(\sigma_{i,e})=\underbrace{\begin{bmatrix}
			\gamma_{i,e,0} \; \gamma_{i,e,1} \tilde{v}_{i,e} \;  \cdots \; \gamma_{i,e,p_i}\tilde{v}^{(p_i)}_{i,e}
		\end{bmatrix}}_{\Gamma_{i,e}} \underbrace{\begin{bmatrix}
1 \\z_{i-1}  \\ \vdots \\ z^{(p_i)}_{i-1} \end{bmatrix}}_{z_i} + \\ 
\left(\sum^{p_i}_{l=1}\gamma_{i,e,l}\sum^l_{k=1}\begin{pmatrix}
	l \\ k\end{pmatrix}\tilde{v}^{(l-k)}_{i,e} z^{(l-k)}_{i-1}\hat{v}^{(k)}_{i,e} \left(I_{n_{\bar{\mathrm{u}},i-1}}\otimes \bar{u}^{(k-1)}_{i-1}\right)
\right)\bar{u}_{i-1}
    \end{multline}
where we used $\bar{u}^{(k)}_{i-1}=\left(I_{n_{\bar{\mr u},i-1}}\otimes \bar{u}_{i-1}^{(k-1)}\right)\bar{u}_{i-1}$ based on \eqref{eq:expanded_vector_prod}.
We can further expand $g_{i,e}(\sigma_{i,e})$ as:
\begin{equation}\label{eq:expanded_g_ie_decomp}
        \Gamma_{i,e} z_i + 
\left(\sum^{p_i}_{l=1}\gamma_{i,e,l}\sum^l_{k=1}\begin{pmatrix}
	l \\ k\end{pmatrix}\tilde{v}^{(l-k)}_{i,e} z^{(l-k)}_{i-1}v^{(k)}_{i,e} \Py _{i-1}^{(k)}(z_{i-1})\Ry _{i-1}^{(k)}(\bar{u}_{i-1}) \left(I_{n_{\bar{\mathrm{u}},i-1}}\otimes \bar{u}^{(k-1)}_{i-1}\right)
\right)\bar{u}_{i-1},
\end{equation}
where, based on \eqref{eq:decoupled_kron_prod}, we used:
\begin{equation*}
    \begin{split}
        \hat{v}_{i,e}^{(k)}&=\left(v_{i,e}\Py _{i-1}(z_{i-1})\Ry _{i-1}(\bar{u}_{i-1})\right)^{(k)}\\
        &=\left(v_{i,e}\Py _{i-1}(z_{i-1})\right)^{(k)}\Ry _{i-1}^{(k)}(\bar{u}_{i-1}) = v_{i,e}^{(k)}\Py _{i-1}^{(k)}(z_{i-1})\Ry _{i-1}^{(k)}(\bar{u}_{i-1}).
    \end{split}
\end{equation*}
\vspace{-.2cm}
Let 
\begin{multline}\label{eq:eq_linear_in_z_output_ie}
        \Py _{i,e}(z_{i-1})=\left[\begin{matrix}
		\gamma_{i,e,1}\tilde{v}^{(0)}_{i,e}z^{(0)}_{i-1}v^{(1)}_{i,e} \Py _{i-1}^{(1)}(z_{i-1}) & 2\gamma_{i,e,2}\tilde{v}^{(1)}_{i,e}z^{(1)}_{i-1}v^{(1)}_{i,e} \Py _{i-1}^{(1)}(z_{i-1})\end{matrix} \right. \\
		  \qquad \qquad \qquad \left. \begin{matrix} \gamma_{i,e,2}\tilde{v}^{(0)}_{i,e}z^{(0)}_{i-1}v^{(2)}_{i,e} \Py _{i-1}^{(2)}(z_{i-1}) & \dots &  \gamma_{i,e,p_i}\tilde{v}^{(0)}_{i,e}z^{(0)}_{i-1}v^{(p_i)}_{i,e} \Py _{i-1}^{(p_i)}(z_{i-1})
	\end{matrix} \right].
    \end{multline}
As the maximal Kronecker product of $z_{i-1}$ in $\Py _{i,e}(z_{i-1})$ is $z^{(p_i-1)}_{i-1}$, we define:
\begin{equation}
    \Py_i (z_i) := W_i \begin{bmatrix}
        \Py _{i,1} (z_{i-1}) \\ \vdots \\ \Py _{i,r_i}(z_{i-1})
    \end{bmatrix} \qquad \text{with} \qquad z_i=\begin{bmatrix}
        1 \\ z_{i-1} \\ \vdots \\ z_{i-1}^{(p_i)}
    \end{bmatrix}
\end{equation}
where the premultiplication with $W_i$ comes from the decomposition of $f_i$ as given by \eqref{eq:piti_f_output_nl} and $\Py _i (z_i)$ is linear in $z_i$. The latter is due to the fact that in \eqref{eq:eq_linear_in_z_output_ie}, $\Py _{i-1}(z_{i-1})$ is linear in $z_{i-1}$ and each of its Kronecker powers $\Py _{i-1}^{(\tau)}(z_{i-1})$ will be linear in $\{1, z_{i-1}, \ldots, z_{i-1}^{(\tau)}\}$. As $p_i$ is the highest Kronecker power that can occur, the resulting \eqref{eq:eq_linear_in_z_output_ie}, for all $e\in\{1,\dots,r_i\}$, will be linear in the elements of $z_i$ composed from $\{1, z_{i-1}, \ldots, z_{i-1}^{(p_i)}\}$.
The remaining terms of \eqref{eq:expanded_g_ie_decomp} are gathered to define:
\begin{equation}
    \Ry _i (\bar{u}_i):= \begin{bmatrix}
        \Ry _{i-1}(\bar{u}_{i-1}) \\
        \Ry _{i-1}(\bar{u}_{i-1}) \\
        \Ry _{i-1}^{(2)}(\bar{u}_{i-1}) \\
        \vdots \\
        \Ry _{i-1}^{(p_i)}(\bar{u}_{i-1})\left(I_{n_{\bar{\mr u},i-1}}\otimes \bar{u}^{(p_i-1)}_{i-1}\right)        
    \end{bmatrix}
\end{equation}
with $\bar{u}_i \equiv \bar{u}_{i-1}$. Finally, let $\Gamma_i = \begin{bmatrix}\Gamma^\top_{i,1} \; \cdots\; \Gamma^\top_{i,r_i}\end{bmatrix}^\top$. Then, the output equation is:
\begin{equation}
    \bar{y}_i = \underbrace{W_i \Gamma_i}_{\Cp _i}z_i + \Py _i (z_i) \Ry _i (\bar{u}_i) \bar{u}_i.
\end{equation}
Next, to derive the sate equation, we take the time derivatives of the elements of $z_i$, i.e., the time derivative of $z^{(j)}_{i-1}$.
For $j=1$, we obtain the time derivative of $z_{i-1}$ as
\begin{equation}
    \dot{z}_{i-1} = \Ap _{i-1} z_{i-1} + \Px _{i-1}(z_{i-1}) \Rx _{i-1} (\bar{u}_{i-1})\bar{u}_{i-1} .
\end{equation}
For $j\in\{2,\dots,p_i\}$, we have:
\begin{align}
        \frac{\dif }{\dif t}z^{(j)}_{i-1} &= \frac{\partial z^{(j)}_{i-1}}{\partial z_{i-1}}\Ap _{i-1} z_{i-1} + \frac{\partial z^{(j)}_{i-1}}{\partial z_{i-1}} \Px _{i-1}(z_{i-1}) \Rx _{i-1} (\bar{u}_{i-1})\bar{u}_{i-1} \\
        &= {}^j \Ap _{i-1} z^{(j)}_{i-1} + \frac{\partial z^{(j)}_{i-1}}{\partial z_{i-1}} \Px _{i-1}(z_{i-1}) \Rx _{i-1} (\bar{u}_{i-1})\bar{u}_{i-1}, \notag
    \end{align}
where ${}^j \Ap _{i-1}$ is expressed as  given by \eqref{eq:kron_gradient_Ax}, i.e., ${}^j \Ap _{i-1}=\sum^{i-1}_{k=0}I_{n_\mr x}^{(k)}\otimes A \otimes I^{(i-k-1)}_{n_\mr x}$. Note that the first element of $z_i$ is 1, so $\dot{1}=0$. To fit this relation into the state equation, we construct the state transition $0=0\cdot 1 + 0_{1\times n_{\bar{\mr u},i-1}} \bar{u}_{i-1}$. Stacking all Kronecker products, we obtain:
\begin{equation}\label{eq:proof_lemma_piti_f_results_x}
        \Ap _i = \begin{bmatrix}
            0 & & & & \\
            & \Ap _{i-1} & & & \\
            & & {}^2 \Ap _{i-1} & & \\
            & & & \ddots & \\
            & & & & {}^{p_i} \Ap _{i-1}
        \end{bmatrix}, \quad \Px _i (z_i) := \underbrace{\begin{bmatrix}
            0_{1 \times n_{\mr z , i-1}} \\
            I_{n_{\mr z, i-1}} \\ 
            \frac{\partial z^{(2)}_{i-1}}{\partial z_{i-1}}\\
            \vdots \\
            \frac{\partial z^{(p_i)}_{i-1}}{\partial z_{i-1}}
        \end{bmatrix}}_{J_{p_i}(z_{i-1})}\Px _{i-1} (z_{i-1}) ,
\end{equation}
and $\Rx _i (\bar{u}_i) := \Rx _{i-1}(\bar{u}_{i-1})$ with $\bar{u}_i \equiv \bar{u}_{i-1}$.
It can be observed that $\Px _i (z_i)$ maintains linearity in $z_i$. Through the partial derivative, the Kronecker products drop in power by one as can be seen in \Cref{lemma:appendix_lemma_1,lemma:appendix_lemma_partial_B}. Then, a multiplication between elements of $z^{(a)}_{i-1}$ with $a \leq p_i -1$ and $z_{i-1}$ will generate elements of at maximum $z^{(p_i)}_{i-1}$ in $z_i$, ensuring linearity of $\Px _i (z_i)$ in the new state $z_i$.

\subsection{Proof of \Cref{col:piti_f_bilinear}}\label{apx:PITI_f_to_BLTI}
First, $\Sigma^{\text{PITI}}_{i-1}$ is bilinear, that means:
\begin{align}
    \Px _{i-1} (z_{i-1}) \Rx _{i-1} (\bar{u}_{i-1})\bar{u}_{i-1} &= \left(\sum^{n_{\mr z , i-1}}_{j=1}\Bz _{i-1,j}z_{i-1,j} + \Bzct _{i-1,j}\right)\bar{u}_{i-1}\\
    &= \sum^{n_{\bar{\mr u},i-1}}_{k=1} \left({}_k\Bz _{i-1}z_{i-1} + {}_k \Bzct _{i-1}\right)\bar{u}_{i-1,k} \notag
    \end{align}
    where $\Px _{i-1} (z_{i-1})$ is linear in $z_{i-1}$ and $\Rx _{i-1} (\bar{u}_{i-1})$ by construction is composed from ones and zeros.
As $\bar{u}_i\equiv \bar{u}_{i-1}$, based on \eqref{eq:proof_lemma_piti_f_results_x}, we have that:
\begin{equation}
    \Px _i (z_i) \Rx _i (\bar{u}_i)\bar{u}_i := \underbrace{J_{p_i}(z_{i-1})\Px _{i-1} (z_{i-1})}_{\Px _i (z_i) } \underbrace{\Rx _{i-1} (\bar{u}_{i-1})}_{\Rx _i (\bar{u}_i)}\bar{u}_{i-1}
\end{equation}
with $J_{p_i}(z_{i-1})$ defined in \eqref{eq:proof_lemma_piti_f_results_x}. In \Cref{apx:PITI_f_to_PITI}, we have already shown that $\Px _i (z_i)$ remains linear in the new state $z_i$, while $\Rx _i (\bar{u}_i)$ will inherit that it is composed from ones and zeros, due to $\Rx _i (\bar{u}_i) \equiv \Rx _{i-1} (\bar{u}_{i-1}) $. This concludes the proof for the state equation, but it is interesting to show that, for $j\in\{2,\dots,p_i\}$, based on \Cref{lemma:appendix_lemma_1,lemma:appendix_lemma_2,lemma:appendix_lemma_3,lemma:appendix_lemma_partial_B}, we get:
\begin{align*}
    \frac{\partial z^{(j)}_{i-1}}{\partial z_{i-1}}\sum^{n_{\bar{\mr u},i-1}}_{k=1} \left({}_k\Bz _{i-1}z_{i-1} + {}_k \Bzct _{i-1}\right)\bar{u}_{i-1,k} &= \sum^{n_{\bar{\mr u},i-1}}_{k=1} \left(\frac{\partial z^{(j)}_{i-1}}{\partial z_{i-1}}{}_k\Bz _{i-1}z_{i-1} +\frac{\partial z^{(j)}_{i-1}}{\partial z_{i-1}} {}_k \Bzct _{i-1}\right)\bar{u}_{i-1,k}\\ &=\sum^{n_{\bar{\mr u},i-1}}_{k=1} \left({}^j_k\Bz _{i-1}z_{i-1}^{(j)} + {}^j_k \Bzct _{i-1}z^{(j-1)}_{i-1}\right)\bar{u}_{i-1,k} 
    \end{align*}
    where ${}^j_k \Bz_{i-1}$ and ${}^j_k \Bzct _{i-1}$ take the form described in \eqref{eq:kron_gradient_Ax} and \eqref{eq:kron_gradient_partial_B}.
Now we can define, 
\begin{equation}\label{eq:B_matrix_blti_nl_proof}
    {}_k \Bz_i = \begin{bmatrix}
        0 & & & & \\
        {}_k \Bzct_{i-1} & {}_k \Bz _{i-1} & & & \\
        & {}^2_k \Bzct_{i-1} & {}^2_k \Bz _{i-1} & & \\
        & & \ddots & \ddots & \\
        & & & {}^{p_i}_k \Bzct_{i-1} & {}{}^{p_i}_k \Bz _{i-1}
    \end{bmatrix}, \quad {}_k\Bzct_i = \begin{bmatrix}
        0 \\ \vdots \\ 0
    \end{bmatrix}\in\mathbb{R}^{n_{\mr z , i}}.
\end{equation}
 As $\bar{u}_i \equiv \bar{u}_{i-1}$ and $\Ap_i$ is the same as in the PITI form, the obtained dynamics are: 
\begin{equation}\label{eq:bilinear_form_in_u_proof}
    \dot{z}_i=\Ap_i z_i + \sum^{n_{\bar{\mr u},i}}_{k=1} \left({}_k\Bz _{i}z_{i} + {}_k \Bzct _{i}\right)\bar{u}_{i,k}
\end{equation}
which is in the form of \eqref{eq:BLTI_representation:state}. 

Regarding the output equation, $\Sigma^{\text{PITI}}_{i-1}$ is bilinear and has no feedtrough by assumption, which means that $\Py _{i-1}(z_{i-1}) \Ry _{i-1} (\bar{u}_{i-1}) =0$. Note that, in  \eqref{eq:expanded_g_ie_decomp}, all input related terms become zero, hence $\Py _{i}(z_{i}) \Ry _{i} (\bar{u}_{i})$ also becomes zero. As such, the output equation of the $\Sigma^{\text{PITI}}_i$ block is described by:
\begin{equation}
    \bar{y}_i = \Cp _i z_i,
\end{equation}
which is linear and has no feedtrough.

\bibliographystyle{siamplain}
\bibliography{siads_block_embedding_references}

\end{document}